\documentclass[prx,amsmath,amssymb, notitlepage, twocolumn,
nofootinbib,
superscriptaddress,
longbibliography
]{revtex4-1}
\usepackage{mathtools}
\usepackage{amsmath}
\usepackage{amsthm}
\usepackage{tabularx,graphicx}
\usepackage{epstopdf}
\usepackage{makecell}
\usepackage{graphicx}
\usepackage{latexsym}
\usepackage{amssymb}
\usepackage{amsmath}
\usepackage{color, colortbl}
\usepackage{psfrag}
\usepackage{bbm}
\usepackage{bm}
\usepackage{titlesec}
\usepackage{dsfont}
\usepackage{feynmp}
\usepackage{slashed}
\usepackage{multirow}
\usepackage{url}
\usepackage{xr}
\usepackage{xcite}

\makeatletter
\newcommand*{\addFileDependency}[1]{
  \typeout{(#1)}
  \@addtofilelist{#1}
  \IfFileExists{#1}{}{\typeout{No file #1.}}
}
\makeatother

\newcommand{\onenorm}[1]{\left\| #1 \right\|_1}
\newcommand{\opnorm}[1]{\left\| #1 \right\|}

\newcommand{\beq}{\begin{eqnarray}}
\newcommand{\eeq}{\end{eqnarray}}

\newcommand{\ra}{\rangle}

\newcommand{\tr}{{\rm tr}}

\newcommand{\bsp}{\begin{aligned}}
\newcommand{\esp}{\end{aligned}}

\newcommand{\hc}{{\rm h.c.}}

\newcommand{\ie}{{i.e., }}

\newcommand{\supp}{\mathrm{supp}}

\usepackage{xcolor}
\definecolor{darkblue}{rgb}{0.,0.,0.4}
\definecolor{darkred}{rgb}{0.5,0.,0.}
\definecolor{BlueViolet}{RGB}{138,43,226}
\definecolor{SkyBlue}{RGB}{30,144,255}
\definecolor{DarkGreen}{RGB}{0,100,0}

\usepackage[colorlinks=true,linkcolor=darkblue,citecolor=blue,urlcolor=darkred]{hyperref}
\usepackage[normalem]{ulem}

\newcommand{\jinmin}[1]{ { \color{red} \footnotesize (\textsf{JY}) \textsf{\textsl{#1}} }}
\newcommand{\chong}[1]{ { \color{cyan} \footnotesize (\textsf{CW}) \textsf{\textsl{#1}} }}

\usepackage{mathrsfs}
\usepackage{comment}

\newcommand{\dist}{\mathrm{dist}}

\newtheorem{corollary}{Corollary}
\newtheorem{theorem}{Theorem}
\newtheorem{lemma}{Lemma}
\newtheorem{definition}{Definition}

\newtheorem{proposition}{Proposition}

\usepackage{nomencl}
\usepackage{pst-all}
\makenomenclature
\usepackage{leftidx}

\usepackage[off]{auto-pst-pdf}
\usepackage{booktabs}
\usepackage{yfonts}
\makeatletter

\usepackage[caption=false]{subfig}

\usepackage{enumitem}

\begin{document}

\title{Order-disorder trade-off in dirty quantum systems}

\author{Jinmin Yi}
\affiliation{Perimeter Institute for Theoretical Physics, Waterloo, Ontario, Canada N2L 2Y5}
\affiliation{Department of Physics and Astronomy, University of Waterloo, Waterloo, Ontario, Canada N2L 3G1}

\author{Chong Wang}
\affiliation{Perimeter Institute for Theoretical Physics, Waterloo, Ontario, Canada N2L 2Y5}

\begin{abstract}
We prove a trade-off theorem for order and disorder parameters in one-dimensional quantum spin systems with quenched disorder. For a disordered ensemble with exact Ising symmetry and average translation symmetry, any gapped ensemble must have one and only one of the following: an $O(1)$ order parameter or an $O(1)$ disorder parameter with even parity, both of the Edwards-Anderson type. The result extends to nearly gapped ensembles that accommodate Griffiths-type rare-region effects. These results offer a powerful and rigorous framework to understand the disorder effects beyond perturbative approaches. As applications, we (1) establish the existence of string order parameters for SPT phases; (2) derive a Lieb-Schultz-Mattis-type constraint for disordered ensembles, which requires a nearly gapped ensemble to spontaneously break the symmetry; and (3) discuss similar trade-off relations for disordered fermion chains, leading to an improved understanding of certain ``intrinsically disordered'' topological phases.
\end{abstract}

\maketitle

\section{Introduction}

Quenched disorder plays a central role in condensed matter physics. In realistic dirty materials, its effects are both unavoidable and profound, ranging from disorder-induced localization to the quantized Hall effect. A key challenge in understanding disorder effects~\cite{Sachdev2011} is that disorder can give rise to intrinsically non-perturbative phenomena like rare region effects that lie beyond the reach of perturbative approaches~\cite{PhysRevLett.23.17, Vojta_2006, PhysRevB.51.6411, IGLOI_2005, PhysRevB.111.165114}. Developing a rigorous framework for phases of dirty quantum systems therefore requires tools that can handle such non-perturbative effects.

In this work, we establish a rigorous framework for understanding the effects of quenched disorder in the context of one-dimensional quantum spin chain. We focus on systems with an exact Ising symmetry and average translation symmetry, and we study the order and disorder parameters for such systems. The order and disorder parameters are powerful tools for understanding both the symmetry breaking phases and the symmetric topological phases in 1D clean systems, which also provides a useful foundation for the recent advance of the topological holography principle and symmetry topological field theory (SymTFT)~\cite{PhysRevResearch.2.033417, PhysRevB.107.155136, PhysRevResearch.2.043086, Huang_2025}.

In a clean chain with Ising symmetry $S=\prod_i S_i$, we say the system has an order parameter if there exist parity-odd local operators $O_i$ such that 
\begin{equation}
    \lim_{|i-j|\to \infty} \langle O_i O_j\rangle \neq 0,
\end{equation}
detecting spontaneous symmetry breaking. Analogously, the system has a disorder parameter if there exist $O_i$, $O_j$ with the same parity under $S$ such that the string correlator
\begin{equation}
    \lim_{|i-j|\to \infty} \langle O_i O_j \prod_{i<k<j} S_k \rangle \neq 0,
\end{equation}
detecting the symmetric phase. Levin~\cite{Levin_2020} proved rigorously that any gapped, translationally invariant Ising chain has either an order parameter or a disorder parameter with even parity, but not both.

In this work, we extend this dichotomy to quenched disordered systems. Since correlators can fluctuate in sign across disorder realizations, the correct diagnostic is the disorder average of the absolute value---the analogue of the Edwards-Anderson order parameter~\cite{SFEdwards_1975}. Thus for a disorder ensemble $\{|\Omega_\omega\rangle\}$ with Ising symmetry $S=\prod_i S_i$ we say it has an order parameter if there exist parity-odd local operators $O_i$ such that
\begin{equation}
    \lim_{|i-j|\to \infty}\mathbb{E}_\omega\!\left(\left|\langle \Omega_\omega|O_iO_j|\Omega_\omega\rangle\right|\right)\neq 0,
\end{equation}
where $\mathbb{E}_\omega$ denotes the average over the ensemble,
and analogously, it has a disorder parameter if there exist local operators $O_i$, $O_j$ such that
\begin{equation}
    \lim_{|i-j|\to \infty}\mathbb{E}_\omega\!\left(\left|\langle \Omega_\omega|O_i O_j \prod_{i<k<j} S_k|\Omega_\omega\rangle\right|\right)\neq 0.
\end{equation}
Our main result is that the clean dichotomy survives for gapped ensembles (all ground states uniformly gapped in its symmetry sector and in the same phase): any such ensemble has either an order parameter or a disorder parameter, but not both, and the endpoint operators of any disorder parameter must be even under $S$. The proof combines two ingredients: the principle that local perturbations perturb locally (LPPL)~\cite{Bachmann_2011,De_Roeck_2015}, which ensures that local order/disorder indicators depend only on nearby disorder variables, and a probabilistic analysis that relates the probabilities of local ordering and disordering events to the existence of global disorder-averaged parameters. The result extends to nearly gapped ensembles that accommodate rare Griffiths regions occurring with exponentially small probability in their size.

Physically, the necessity of taking the absolute value before disorder averaging is straightforward to understand. For the order parameter, a spin-glass phase in which $\langle O_i O_j\rangle_{\Omega_\omega}$ acquires random signs across the ensemble causes the first moment to vanish, while the Edwards–Anderson correlator remains finite. Similarly, for the disorder parameter, an ``Anderson insulator'' phase with localized yet randomly distributed $\mathbb{Z}_2$ charge produces fluctuating signs of the disorder parameter, so that only the absolute moment survives disorder averaging. In our framework, both the random magnet and the Anderson insulator are viewed as simple gapped ensembles.

Disorder also has a rich interplay with topological phases of matter. In this work we provide a rigorous handle on three  examples: 
\begin{enumerate}
    \item For symmetry-protected topological (SPT)~\cite{Chen_2012} phases in $1d$, we show that the Edwards-Anderson type disorder-averaged string order parameters~\cite{PhysRevB.45.304,P_rez_Garc_a_2008,Pollmann_2012,Haegeman_2012} remain well-behaved diagnostics for different phases. 
    \item We derive a Lieb-Schultz-Mattis-type constraint~\cite{Lieb_1961,LSM_oshikawa,LSM_HigherD_Hastings,ChengPRX2016,liu2025entanglement,yi2025lovasz,Gioia2021,ZouLSMreview} for disordered ensembles~\cite{Kimchi_2018,PhysRevX.13.031016,YouOshikawa,Xu_2025,panahi2026quantumcriticalitystrongrandomness,liu2026average}: if a disorder ensemble has an average translation symmetry and an exact $\mathbb{Z}_2\times\mathbb{Z}_2$ symmetry with a projective representation on each site, it cannot be nearly gapped without long-range order---it must develop an order parameter and break some symmetry. 
    \item Using Jordan-Wigner transform, the trade-off result can be translated to a similar result on fermion chains, now constraining two types of disorder operators, dressed by bosonic and fermionic operators, respectively. This result can be used to understand a class of ``intrinsically disordered'' topological phase proposed in Ref.~\cite{Ma_2025}.
\end{enumerate}

While many of the above results have been anticipated in some form in earlier literature based on physical arguments, our work puts them on a solid footing -- this is particularly valuable for strongly disordered and interacting systems, where non-perturbative effects, such as rare regions, often render conventional methods uncontrolled. Indeed, without a proper understanding of rare-region effects, it was difficult even to formulate the correct statements, let alone establish them rigorously. From this perspective, one important implication of our work is that the notion of a ``nearly gapped ensemble'' provides the natural generalization of the familiar concept of a ``gapped state'' in clean systems.

The rest of this paper is organized as follows. In Sec.~\ref{sec: Preliminaries} we introduce the setting, define order and disorder parameters for disordered ensembles, review the  LPPL principle and our model of rare-region effects, and provide a road map for our main results. In Sec.~\ref{sec: tradeoff_gapped} we prove the order-disorder trade-off theorem for gapped ensembles: Sec.~\ref{subsec:gapped-setup} fixes technical notation, Sec.~\ref{subsec: events} introduces the probabilistic events for order and disorder, and Secs.~\ref{subsec:tradeoff_proof}--\ref{subsec:disorderparity} establish the three parts of the main theorem (existence, mutual exclusivity, and parity of the disorder parameter). In Sec.~\ref{sec:Rareregion} we extend these results to nearly gapped ensembles with rare-region effects. In Sec.~\ref{sec: Application} we present the applications to average SPT string order (Sec.~\ref{subsec:aspt}),  the LSM-type constraint (Sec.~\ref{subsec:disorderLSM}) and fermion chains (Sec.~\ref{subsec:JordanWigner}). We conclude with a summary and outlook. Several appendixes contain technical details.

\section{Preliminaries and Summary of results}\label{sec: Preliminaries}

This section sets up the framework for our main results and provides a summary of the key statements.
In Sec.~\ref{subsec: setting} we describe the physical setting and introduce the notion of a gapped ensemble. In Sec.~\ref{subsec: parameters} we define order and disorder parameters for dirty systems. In Sec.~\ref{subsec: LPPL} we review the principle that local perturbations perturb locally (LPPL) and how we model the rare region effects. Sec.~\ref{subsec: results} summarizes the main results.
\subsection{Gapped disordered systems}\label{subsec: setting}
In this work, we focus on spin systems on a 1D lattice with periodic boundary conditions. We consider a Hilbert space with a local tensor product structure $\mathcal{H}=\otimes_i \mathcal{H}_i$, where $i$ labels lattice sites. We study quenched disorder by considering an ensemble of local Hamiltonians ${H_\omega}$ depending on the disorder realization $\omega$. Concretely, the Hamiltonian $H_\omega$ for each realization $\omega$ takes the form
\begin{equation}
H_\omega=H_0+\sum_i (v_i^\omega \mathcal{O}_i+h.c.),
\end{equation}
where $v_i^\omega$ denotes the disorder potential at site $i$ for realization $\omega$, and $\mathcal{O}_i$ is a local operator supported near $\omega$. We assume the disorder is at most short-range correlated, namely that $\overline{v_i^*v_j}$ decays exponentially with $|i-j|$. 

We focus on systems with an exact Ising symmetry
\begin{equation}
S=\prod_i S_i,\qquad S_i^2=1,
\end{equation}
so that $[H_\omega,S]=0$ for every realization $\omega$. We also assume average lattice translation symmetry: the clean Hamiltonian $H_0$ commutes with the translation operator $\hat{T}$, $[H_0,\hat{T}]=0$, and the disorder distribution is invariant under lattice translations. Thus, while a given realization $H_\omega$ need not be translationally invariant, the joint distribution of the random potentials is. A standard example is an independent, identically distributed disorder, with the marginal distribution $\mathbb{P}(v_i)$ the same for every site $i$.

We denote by $\{|\Omega_\omega\rangle\}$ the corresponding ensemble of ground states, which we take to be even under $S$ for every realization $\omega$. We call $\{|\Omega_\omega\rangle\}$ a \emph{gapped ensemble} if, for every realization $\omega$, $H_\omega$ is gapped in the even symmetry sector with a gap lower-bounded by some constant $\epsilon$, and all $|\Omega_\omega\rangle$ are smoothly connected by a quasi-adiabatic evolution, \ie they lie in the same gapped phase. This rules out mixtures of states from different phases (phase separation), which is also a natural assumption when $v_i^\omega$ is a continuous variable. Note that this notion excludes Griffiths-type rare regions that lead to small gaps; in Sec.~\ref{subsec: LPPL} we relax this to a \emph{nearly gapped ensemble}, where such rare regions are allowed but occur with exponentially small probability in their size.

The assumption that every $|\Omega_\omega\rangle$ is even under $S$ is purely technical: it ensures the ground state is uniquely determined by $\omega$ without having to track degeneracy. Even if $H_\omega$ undergoes spontaneous symmetry breaking for some $\omega$, the even- and odd-sector ground states are locally indistinguishable — no local operator can tell them apart. Fixing the even-sector ground state therefore does not affect the physical content of our results.

In our notion of a gapped ensemble, the requirement that $|\Omega_\omega\rangle$ is a gapped ground state in its own symmetry sector is weaker than the SRE-ensemble condition in the literature of average SPT~\cite{Ma_2025}, where each $|\Omega_\omega\rangle$ is required to be short-range entangled. Our setting allows, for example, an Ising ferromagnet with $|\Omega_\omega\rangle$ in a GHZ sector.

\subsection{Order and disorder parameters}\label{subsec: parameters}
We now give precise definitions of order and disorder parameters for a disordered ensemble. Intuitively, this is just the ensemble average of the absolute value of the order and disorder parameters defined for a single state as addressed in the introduction, in the spirit of the Edwards-Anderson order parameter for spin glasses~\cite{SFEdwards_1975}.  We now state the following quantitative definitions for the order parameter. Throughout, $\opnorm{\cdot}$ denotes the operator norm, \ie the largest singular value of an operator. 

\begin{definition}[Order parameter]\label{def: order-dirty}
    A disorder ensemble $\{|\Omega_\omega\ra\}$ is called to have a $(\delta,\ell)$ order parameter if, for any two sites $i$ and $j$ with $\dist(i,j)>2\ell$, there exist operators $O_{i,j}$, such that

\begin{enumerate}
    \item $O_{i,j}$ is odd under $S$.
    \item $O_{i}$ is supported on $R_{i}=[i+1, i+\ell]$ and likewise for $O_{j}$. 
    \item $ \mathbb{E}_\omega(|\langle\Omega_\omega| O_iO_j|\Omega_\omega\rangle|)\geq \delta$, where $\mathbb{E}_\omega$ is the ensemble average over disorder realizations.
    \item $\opnorm{O_{i,j}} \leq 1$.
\end{enumerate}
\end{definition}
Similarly, we can define the disorder parameter:
\begin{definition}[Disorder parameter]\label{def: disorder-dirty}
    A disorder ensemble $\{|\Omega_\omega\ra\}$ is called to have a $(\delta,\ell)$ disorder parameter if, for any two sites $i$ and $j$ with $\dist(i,j)>2\ell$, there exist operators $O_{i,j}$, such that

\begin{enumerate}
    \item $O_i$ and $O_j$ are both even under $S$ or both odd under $S$.
    \item $O_{i}$ is supported on $R_{i}=[i+1, i+\ell]$ and likewise for $O_{j}$. 
    \item $ \mathbb{E}_\omega\left(\left|\langle\Omega_\omega| O_iO_j\prod_{k=i+1}^j S_k|\Omega_\omega\rangle\right|\right)\geq \delta$, where $\mathbb{E}_\omega$ is the ensemble average over the disorder distribution.
    \item $\left\|O_{i,j}\right\| \leq 1$.
\end{enumerate}
\end{definition}
The endpoint operators $O_i$ and $O_j$ are either both even or both odd under $S$, so their parity is well-defined. We say the disorder parameter has \emph{even parity} if $O_i$ is even under $S$, and \emph{odd parity} if $O_i$ is odd under $S$. For brevity, we say a system has an $O(1)$ order (resp.\ disorder) parameter if it has a $(\delta,\ell)$ order (resp.\ disorder) parameter for some $\delta,\ell = O(1)$ independent of system size.

In the clean limit, where there is no disorder potential, these definitions reduce to the order and disorder parameters introduced in Ref.~\cite{Levin_2020}, which characterize the symmetry-broken and symmetric phases, respectively. In particular, Ref.~\cite{Levin_2020} showed that for a translational symmetric clean system that is gapped in each symmetry sector, the symmetric ground state must admit either an order parameter or a disorder parameter, but not both, and any disorder parameter must be parity even.

\subsection{LPPL and rare region effects}\label{subsec: LPPL}
One key assumption of this work is the principle of ``local perturbation perturbs locally" (LPPL)~\cite{Bachmann_2011,De_Roeck_2015}, which we now review. Precisely, suppose the disorder strengths $v_i^\omega$ are continuously tunable and the spectral gap remains uniformly bounded below throughout. If one modifies the disorder strengths only on a region $R$, passing from realization $\omega$ to $\omega'$, then the ground states before and after the change are related by a symmetric quasi-local unitary $U_R$ supported near $R$~\cite{Hastings_2005,Bravyi_2010, Bachmann_2011,Haah_2021,yi2025universaldecayconditionalmutual},
\begin{equation}
    |\Omega_{\omega'}\rangle = U_R\,|\Omega_\omega\rangle,
\end{equation}
with $|\Omega_\omega\rangle$ and $|\Omega_{\omega'}\rangle$ being the parity even ground states of $H_\omega$ and $H_{\omega'}$, respectively. 

Consequently, if one considers any local operator $O_{R}$ supported in the region $R$, its expectation value $\langle\Omega_\omega|O_R|\Omega_\omega\rangle$ should only depend on the value of $v_i^\omega$ for sites $i$ sufficiently close to the region $R$, \ie
\begin{equation}
    i \in R^{+\xi}\coloneqq \{j|\dist(j,R)\leq \xi\}\;,
\end{equation}
An immediate consequence is that the expectation value of any operator $O_R$ supported on $R$ depends only on the disorder strengths within the enlarged region $R^{+\xi} \coloneqq \{j \mid \dist(j,R)\leq \xi\}$, where $\xi$ is the correlation length: modifying $v_i^\omega$ outside $R^{+\xi}$ induces a unitary supported outside $R$, which commutes with $O_R$ and leaves $\langle\Omega_\omega|O_R|\Omega_\omega\rangle$ unchanged. Throughout, we approximate $U_R$ as strictly local on a finite neighborhood of $R$, dropping its exponentially decaying tail. This does not affect our results, and a more precise estimate of the approximation error can be found in Refs.~\cite{Haah_2021, yi2025universaldecayconditionalmutual}.

We now specify the class of rare regions we consider. We define a rare region as a spatial region $R$ on which the system fails to be gapped locally, but where the ground state outside of $R$ still resembles a gapped ground state. For simplicity, we assume that the probability of such a rare region to occur is exponentially small in its size, \ie as $p^{|R|}$ for some small $p<1$. More precisely, if $|\Omega'\rangle$ denotes a state containing a rare region supported on $R$, we model it as 
\begin{equation}
    |\Omega'\rangle=U_R|\Omega\rangle, 
\end{equation}
where $|\Omega\rangle$ is a parity-even gapped ground state and $U_R$ is a symmetric unitary supported on region $R$ and is not necessarily quasi-local within $R$. This decomposition allows rare regions to act as defects on the region $R$ while preserving the gapped structure elsewhere. The physical example we have in mind is the Griffiths phase in Ising model away from critical point, where a large region $R$ may realize a different phase than the rest of the system due to rare disorder configurations. Although the probability of the rare region effect decays with $|R|$, its effect enhances as $R$ becomes larger -- for example, the gap of the system may decay rapidly with $R$, rendering the system effectively gapless. We therefore call such ensembles \textit{nearly gapped}. More precisely,
\begin{definition}
    An ensemble of ground states is nearly gapped if it is a mixture of parity-even gapped states and rare-region states, with the probability of the rare-region states decaying exponentially in the size of the rare region.
\end{definition}

As we shall discuss later in Sec.~\ref{sec:Rareregion}, the results in this work is unaffected by such rare regions and hold generally for \textit{nearly gapped ensembles}.

The exponential suppression of the rare region probability $p^{|R|}$ is closely related to the LPPL principle: if LPPL holds, then the appearance of a large rare region $R$ requires all the random potential inside $R$ to take some rare value, which must come with probability $\sim p^{|R|}$. A class of counter examples is the infinite-randomness fixed points, including the random-singlet state~\cite{Fisher1994} and the critical random transverse-field Ising model~\cite{PhysRevB.51.6411}, in which the probability for establishing a large rare region only decays algebraically, and is therefore not considered as nearly gapped states.

\subsection{Summary of results}\label{subsec: results}
We now summarize the key results in this work. We also point out the sections that discuss these statements in detail. This section can be viewed as a road map of the work.\\

(1) A gapped ensemble of disordered systems must admit either an $O(1)$ order parameter or an $O(1)$ disorder parameter, but not both. Moreover, any disorder parameter must be parity even. 
\begin{theorem}\label{thm:tradeoff}
     A gapped ensemble $\{|\Omega_\omega\rangle\}$ has one and only one of the following: an $(\eta,\ell)$ order parameter or an $(\eta',\ell')$ disorder parameter with even parity, for large enough $O(1)$ constants $\ell$ and $\ell'$, where $\eta$ and $\eta'$ satisfy the following trade-off relation:
     \begin{eqnarray}
         c_\eta\sqrt{\eta}+c_\eta'\eta'^{1/3}>1.
     \end{eqnarray}
     where $c_\eta$ and $c_\eta'$ are $O(1)$ constants independent of the system size.
\end{theorem}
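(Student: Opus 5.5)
We follow the structure of Levin's clean-case argument~\cite{Levin_2020} and replace the translation-invariant correlation functions by probabilities of local events, using LPPL to make those events depend only on nearby disorder variables. The proof has three parts: existence of at least one of the two parameters, mutual exclusivity with the stated trade-off, and evenness of any disorder parameter.

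\textbf{Local events.} Fix a realization $\omega$ and an interval $I$ of length $L\gg\xi$. Let $\rho_I^\omega$ be the reduced state of $|\Omega_\omega\rangle$ on $I$, and $S_I=\prod_{k\in I}S_k$. Exactly as in the clean case, gappedness in the even sector together with LPPL gives a quasi-local symmetric unitary (quasi-adiabatic continuation restricted to $I$) that maps $|\Omega_\omega\rangle$ to a state in which $S_I$ acts, up to exponentially small error, as $W_{\partial I}$. Here $W_{\partial I}$ is a product of operators $A_{a}B_{b}$ supported in $\ell$-neighbourhoods of the two endpoints $a,b$ of $I$.

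Two possibilities are then defined per realization:
\begin{itemize}[leftmargin=*]
\item the \emph{disordered event} $D_I^\omega$: $|\langle S_I A_aB_b\rangle_\omega|\ge 1/2$ for some even, norm-bounded endpoint operators;
\item the \emph{ordered event} $\mathcal O_I^\omega$: the optimal endpoint operators are forced to be odd, so that $\langle S_I\rangle$ is tiny and an odd pair $O_aO_b$ with $|\langle O_aO_b\rangle_\omega|$ bounded below exists.
\end{itemize}
By LPPL both events are measurable with respect to the disorder variables in $I^{+\xi}$.

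\textbf{Dichotomy and the global parameters.} The key lemma is the per-realization dichotomy: for each $\omega$ and $I$, at least one of $D_I^\omega$ or $\mathcal O_I^\omega$ holds. This is Levin's single-state argument. Combining $\langle S_I\rangle$ with its decomposition into even and odd endpoint parts via Cauchy--Schwarz gives
\[
|\langle S_I\rangle|^2+\sum|\langle O_aO_b\rangle|^2\gtrsim 1.
\]
Average translation symmetry makes $p_D=\mathbb P(D_I)$ independent of the location of $I$, and we have $p_D+p_{\mathcal O}\ge 1$.

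\emph{Case $p_D\ge 1/2$.} For distant $i,j$, the event $D_{[i,j]}$ itself supplies even endpoint operators. Their choice depends on $\omega$, whereas the definition requires fixed operators. We handle this by discretizing the unit ball of operators on $R_i$ into a finite net of size $e^{O(\ell)}$ and picking the most probable net element. This gives
\[
\mathbb E|\langle O_iO_jS\rangle|\ge p_D/(2N).
\]
\emph{Case $p_{\mathcal O}\ge 1/2$.} The same net argument yields an order parameter.

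\textbf{Exclusivity and the trade-off.} Suppose both an $(\eta,\ell)$ order parameter and an $(\eta',\ell')$ disorder parameter exist. In a single state, a parity-odd operator $O_i$ anticommutes with the string $\prod S_k$ on its support region. Levin's clean bound then gives, per realization, an inequality of the form
\[
|\langle O_iO_j\rangle|^2\,|\langle \tilde O_i\tilde O_j S\rangle| \lesssim\ \text{small}
\]
for endpoints placed inside each other's intervals. Taking expectations and applying Markov and Hölder turns this into $c\sqrt\eta+c'\eta'^{1/3}>1$ as the condition under which the two cannot coexist. The powers $1/2$ and $1/3$ arise from Cauchy--Schwarz on the order correlator and from a three-factor Hölder on the disorder correlator, respectively; the precise exponents come from tracking these steps.

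\textbf{Evenness.} An odd-parity disorder parameter, combined with the string identity $S_I\approx W_{\partial I}$, would produce an odd local operator with nonzero expectation in the even ground state. Such an expectation vanishes identically, so this is a contradiction.

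The main obstacle I expect is the probabilistic step of making the endpoint operators $\omega$-independent while keeping $O(1)$ constants. The net cardinality enters $\delta$, so it must be controlled by the fixed $\ell$ and not by $L$.
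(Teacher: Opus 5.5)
\paragraph{Existence and the trade-off.} Your existence argument rests on a per-realization dichotomy for an arbitrary interval $I$, in particular the long interval $[i,j]$. You justify it by a string identity $S_I\approx W_{\partial I}$. Neither holds in general.

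\begin{itemize}
\item The identity fails in an ordered realization. For a GHZ-like state, $S_I|\Omega_\omega\rangle$ carries two domain walls and is far from any endpoint-local operator applied to $|\Omega_\omega\rangle$.
\item The dichotomy fails for a gapped state without translation invariance. Take a transverse-field Ising ring that is ferromagnetic on one half and paramagnetic on the other. It is gapped in the even sector. Yet for $a$ deep in the ferromagnetic half and $b$ deep in the paramagnetic half, neither $\langle O_aO_b\rangle$ nor $\langle O_aO_bS_{[a,b]}\rangle$ is $O(1)$ for any endpoint operators near $a,b$.
\item Levin's single-state argument only gives a dichotomy between a short interval and its far complement. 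Going from there to two-point statements uses translation invariance, which a single realization lacks.
\item Events defined on $[i,j]$ depend on the disorder in all of $[i,j]^{+\xi}$, so LPPL gives you no independence for them.
\end{itemize}

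The missing idea is to work with events at scale $\ell$:
\begin{itemize}
\item $A_i$ is the event $I(I_i:I_i')>1/32$.
\item $B_i$ is the event $\max_U|\langle US_{I_i}\rangle|\ge 1/2$, with $U$ supported on $I_i^-\cup I_i^+$.
\end{itemize}
Both are functions of the reduced state on $I_i^-I_iI_i^+$. For these events:
\begin{itemize}
\item $\overline{A_i}\cap\overline{B_i}=\emptyset$ holds in every realization, by Fuchs--van de Graaf, Uhlmann, Lemma~\ref{lem:Levin2} and the mutual-information bound.
\item Distant events are independent by LPPL.
\item Lemma~\ref{lem:Levin3} glues two local ordering events into a two-point order parameter. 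After a parity pigeonhole, it glues two of three local disordering events into a two-point disorder parameter.
\end{itemize}
Markov's inequality and the $\varepsilon$-net union bound then give $\mathbb{P}(A_i)^2=\mathbb{P}(A_i\cap A_j)\lesssim\eta$ and $\mathbb{P}(B_i)^3\lesssim\eta'$. Combined with $\mathbb{P}(A_i)+\mathbb{P}(B_i)\ge 1$, this is exactly $c_\eta\sqrt{\eta}+c'_\eta\eta'^{1/3}>1$. The exponents $1/2$ and $1/3$ count the independent local events needed. The trade-off is the existence statement, not a condition for exclusivity obtained from Cauchy--Schwarz or H\"older.

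\paragraph{Exclusivity.} A per-realization bound like $|\langle O_iO_j\rangle_\omega|^2|\langle\tilde O_i\tilde O_jS\rangle_\omega|\lesssim$ small cannot be averaged into exclusivity by Markov or H\"older. An ensemble of half ordered and half paramagnetic Hamiltonians has both Edwards--Anderson parameters, even though each realization carries only one. You must use the same-phase hypothesis, which your proposal never invokes:
\begin{itemize}
\item From $\mathbb{P}(A),\mathbb{P}(B)\ge\delta/(2-\delta)$, pick one realization with an order parameter at $i_1,i_3$ and one with a disorder parameter at $i_2,i_4$.
\item Transport the order parameter into the second realization with the symmetric quasi-adiabatic circuit.
\item The second realization now carries both, contradicting Lemma~\ref{lem:Levin_thm3}.
\end{itemize}

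\paragraph{Evenness.} Your argument is incorrect as stated. Substituting $S_I\approx A_aB_b$ into an odd-parity disorder correlator gives $\langle (O_aA_a)(O_bB_b)\rangle$. This is a correlator of two odd operators, which is itself even and need not vanish. The real obstruction is that two interleaved odd-parity disorder operators anticommute (Lemma~\ref{lem:Levin_thm4}). To use it for an ensemble, you need a single realization carrying two such parameters on interleaved pairs. The paper obtains this from a second-moment argument over $n>(2-\delta)/\delta$ antipodal pairs $(i_k,i_{k+n})$.
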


The precise requirements on these $O(1)$ constants are given in Proposition~\ref{prop:tradeoff} in Sec.~\ref{subsubsec:tradeoff-gapped}. 

The theorem has three parts: existence (the ensemble has at least one of the two: order parameter or disorder parameter, see Sec.~\ref{subsec:tradeoff_proof}), mutual exclusivity (it cannot have both, see Sec.~\ref{subsec:exclusion}), and the parity constraint (any disorder parameter must be even-parity, see Sec.~\ref{subsec:disorderparity}).

(2) Our trade-off theorem still holds with the presence of rare regions, as long as the probability of rare regions decays sufficiently fast with their size, as shown in Sec.~\ref{sec:Rareregion}. Similar to the gapped case, in Sec.~\ref{subsec: tradeoff-rare} we provide the proof of the existence of at least one of the two parameters, and in Sec.~\ref{subsec: exclusion-rare} we show that they cannot coexist. The parity constraint for disorder parameters also continues to hold in the presence of rare regions, as shown in Sec.~\ref{subsec: disorderparity-rare}.

(3) Our definition of the disorder parameters can be applied to characterize SPT phases. In particular, we demonstrate this for a disordered system in an SPT phase with exact $\mathbb{Z}_2\times \mathbb{Z}_2$ symmetry and average translational symmetry. Precisely, we have the following corollary:
\begin{corollary}\label{cor}
    Let $\{H_\omega\}$ be a disordered average-SPT ensemble with average translational
symmetry and exact $\mathbb{Z}_2\times \mathbb{Z}_2$ symmetry $S^e$ and $S^o$, with
\begin{equation}
    S^e = \prod_{i} Z_{2i}, \quad S^o = \prod_{i} Z_{2i+1}.
\end{equation}
Then, for any sites $i<j$, there exist local operators
$O_i$ and $O_j$, supported near $i$ and $j$ respectively, such that
\begin{equation}
    \mathbb{E}_\omega\left(\left|\bigl\langle O_i O_j \prod_{i<2k<j} Z_{2k} \bigr\rangle_{\Omega}\right|\right) = O(1).
\end{equation}
Similarly, there exist local operators $O_i'$ and $O_j'$, supported near
$i$ and $j$ respectively, such that
\begin{equation}
    \mathbb{E}_\omega\left(\left|\bigl\langle O_i' O_j' \prod_{i<2k+1<j} Z_{2k+1} \bigr\rangle_{\Omega}\right| \right)= O(1).
\end{equation}
Moreover, for any fixed ensemble $\{|\Omega_\omega\rangle\}$, the symmetry parities of these disorder parameters are uniquely determined. More precisely,
the operators $O_i$ and $O_j$ must be even under $S^e$, and the operators
$O_i'$ and $O_j'$ must be even under $S^o$. In addition, the parities of
$O_i, O_j$ under $S^o$ and of $O_i', O_j'$ under $S^e$ are the same, and this parity is uniquely fixed by the ensemble.
\end{corollary}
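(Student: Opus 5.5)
The corollary will follow by applying Theorem~\ref{thm:tradeoff} twice, once for each $\mathbb{Z}_2$ factor, and using the defining property of an average-SPT ensemble to exclude the order-parameter branch each time. Take the subgroup generated by $S^e=\prod_i Z_{2i}$, and regard $\{|\Omega_\omega\rangle\}$ as an Ising-symmetric ensemble with $S=S^e$ and local factors $S_k=Z_k$ for even $k$, $S_k=1$ for odd $k$. To match the single-site translation in the theorem, we block two sites into a unit cell; average translation by two sites then becomes average translation by one unit cell. The ensemble is SPT, so it is short-range entangled and gapped uniformly in $\omega$, and all realizations lie in one phase connected by symmetric quasi-adiabatic evolution. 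In particular it is a gapped ensemble in the sense of Sec.~\ref{subsec: setting} with respect to $S^e$, and likewise with respect to $S^o$.

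Next, exclude the order parameter. For each realization, $|\Omega_\omega\rangle$ is short-range entangled and symmetric under both $S^e$ and $S^o$. Its connected correlations therefore decay exponentially, and for $S^e$-odd $O_i$ we have $\langle O_i\rangle_{\Omega_\omega}=0$. Hence $|\langle O_iO_j\rangle_{\Omega_\omega}|\le Ce^{-|i-j|/\xi}$ uniformly in $\omega$, and the Edwards–Anderson average also vanishes as $|i-j|\to\infty$. This is the step that uses the SPT hypothesis beyond mere gappedness: the disorder-averaged statement follows from the realization-wise bound because the correlation length is uniform across the ensemble. If the ensemble is only nearly gapped, we would instead use the rare-region version of Sec.~\ref{sec:Rareregion}. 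The existence part of Theorem~\ref{thm:tradeoff} then gives an $O(1)$ disorder parameter for $S^e$, and the parity part forces its endpoints $O_i,O_j$ to be even under $S^e$. Note that $\prod_{k=i+1}^{j}S_k=\prod_{i<2k\le j}Z_{2k}$, which agrees with the stated string up to the endpoint factor; that factor can be absorbed into $O_j$ without changing its norm or its parity, since $Z_j$ is even. The same argument with $S^o$ produces $O_i',O_j'$ that are even under $S^o$.

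It remains to show that the $S^o$-parity of $O_i,O_j$ is common to both endpoints and uniquely fixed, and that it matches the $S^e$-parity of $O'$. Since $Z_{2k}$ commutes with $S^o$, we can project each $O_i$ onto $S^o$-parity sectors; some pair of sectors must retain an $O(1)$ fraction of $\delta$. For the string correlator $\langle O_iO_j\prod Z_{2k}\rangle$ not to vanish by $S^o$ symmetry of $\Omega_\omega$, the two endpoints must carry the same $S^o$ parity.

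Uniqueness is where the real work lies. Suppose two $S^e$-disorder parameters had opposite $S^o$-parities, say $A_i,A_j$ with $S^o$-parity $+$ and $B_i,B_j$ with $S^o$-parity $-$. Adapting the exclusivity argument of Sec.~\ref{subsec:exclusion}, we would use LPPL to localize the "string" events: on a typical realization, the truncated string $\prod Z_{2k}$ acts on $\Omega_\omega$, near each endpoint, as a local operator of definite $S^o$-charge. This is the content of the events in Sec.~\ref{subsec: events}, where $\prod_{i<2k<j}Z_{2k}|\Omega_\omega\rangle\approx V_iV_j|\Omega_\omega\rangle$. In the corresponding clean-case argument of \cite{Levin_2020}, the $S^o$-charge of $V_i$ is a phase invariant, and the same should hold here. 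Since all $\Omega_\omega$ are in one phase, this charge is constant on a set of realizations of probability close to one. Both string correlators cannot then carry $O(1)$ weight with opposite endpoint charges, because the event where each is large would force contradictory charges on overlapping realizations.

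Finally, to tie the $S^o$-parity of $O$ to the $S^e$-parity of $O'$, we would consider the product of the two strings. Since $S^eS^o$ is the total $Z$ string, commuting the $S^e$-string endpoint operator $V_i$ past the $S^o$-string gives a sign equal to the mixed projective invariant. Hence the two charges coincide realization by realization on the high-probability set. The main obstacle is this last pair of steps: making the endpoint-charge invariant well defined and constant across the ensemble with only $O(1)$ averaged correlator bounds. I would first try reusing verbatim the probabilistic event machinery from the exclusivity and parity proofs, applied to the $S^o$-graded components.
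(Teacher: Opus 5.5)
Your existence and own-parity steps match the paper: apply Theorem~\ref{thm:tradeoff} to $S^e$ and $S^o$ separately, exclude the order branch by the average-SPT hypothesis, and use Sec.~\ref{subsec:disorderparity} to get $O_i,O_j$ even under $S^e$ and $O_i',O_j'$ even under $S^o$.

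The gap is the cross-parity statement, which is the real new content of the corollary. You leave it as an outline resting on three claims you do not establish:
\begin{enumerate}
\item the $S^o$-charge of a string endpoint $V_i$ is a well-defined phase invariant in the disordered setting;
\item this charge is constant on a set of realizations of probability close to one;
\item hence the realizations on which the two competing correlators are large overlap.
\end{enumerate}
None of this follows from what you have. Markov only gives each event probability $\ge\delta/(2-\delta)$, so the two events can be disjoint. Also, the events of Sec.~\ref{subsec: events} give an overlap $\ge 1/2$, not an approximate identity $\prod Z_{2k}|\Omega_\omega\rangle\approx V_iV_j|\Omega_\omega\rangle$. As you acknowledge, uniqueness and matching are left unproved.

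The missing idea is that no endpoint-charge invariant is needed; a direct anticommutation argument suffices. Let an $S^e$-disorder parameter have endpoints of $S^o$-parity $p$, and an $S^o$-disorder parameter have endpoints of $S^e$-parity $q$. Using average translation, place them at interleaved, well-separated points $i_1<i_2<i_3<i_4$. In a single gapped state, form $U=P_1P_3O_1O_3\prod_{i_1<2k\le i_3}Z_{2k}$ and $V=P_2P_4O_2O_4\prod_{i_2<2k+1\le i_4}Z_{2k+1}$, with the projectors of Lemma~\ref{lem:Levin_prop1}. That lemma makes $U|\Omega\rangle$ and $V|\Omega\rangle$ each close to an $O(1)$ multiple of $|\Omega\rangle$, so $UV|\Omega\rangle$ and $VU|\Omega\rangle$ are both close to the same $O(1)$ multiple of $|\Omega\rangle$.

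Now count signs. The $S^e$ string of $U$ passes over $O_2$, and the $S^o$ string of $V$ passes over $O_3$; the other factors commute. Hence $UV=pq\,VU$, so $pq=-1$ is impossible at large separation, i.e.\ $p=q$. Uniqueness is then automatic: every $S^e$-disorder parameter must share its $S^o$-parity with the $S^e$-parity of any fixed $S^o$-disorder parameter. Your separate ``phase invariant'' step is therefore unnecessary.

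For the ensemble, the paper never needs both correlators to be large on the same realization. It picks one realization for each, each with probability $\ge\delta/(2-\delta)$, after discarding rare regions of size $\gtrsim L/12$ near the four sites at cost $O(p^{L/12})$. It then uses the same-phase assumption together with Appendix~\ref{app:quasi-adiabatic}: a symmetric finite-depth circuit carries a disorder parameter to one whose endpoints have the same parities under both $S^e$ and $S^o$. This places both strings in one gapped state, where the clean argument above applies.
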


This provides a concrete application of our main theorem and suggests that string order parameters remain a powerful diagnostic tool for characterizing SPT phases even in dirty quantum systems. See Sec.~\ref{subsec:aspt} for the proof and further discussion.

(4) Our framework also allows us to derive a Lieb-Schultz-Mattis-type constraint for disordered ensembles. Precisely, we have the following corollary:
\begin{corollary}\label{cor:LSM}
    Let $\{H_\omega\}$ be a nearly gapped disordered ensemble with average translational symmetry and an exact $\mathbb{Z}_2\times \mathbb{Z}_2$ symmetry $S^x$ and $S^z$ on a periodic chain of even length, with
\begin{equation}
    S^x = \prod_{i} X_i, \quad S^z = \prod_{i} Z_i,
\end{equation}

     Then $\{H_\omega\}$ must have an $O(1)$ order parameter for either $S^x$ or $S^z$.
\end{corollary}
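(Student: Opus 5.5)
We argue by contradiction: assume the ensemble has no $O(1)$ order parameter for either $S^x$ or $S^z$, and derive two incompatible constraints on the string operators. Apply the nearly-gapped version of Theorem~\ref{thm:tradeoff} (Sec.~\ref{sec:Rareregion}) separately to each Ising symmetry. Treating $S^z=\prod_i Z_i$ as the Ising symmetry, the absence of an $S^z$ order parameter forces an $O(1)$ disorder parameter with even parity. That is, there are $S^z$-even endpoints $A_i,A_j$ with $\mathbb{E}_\omega|\langle A_iA_j\prod_{i<k\le j}Z_k\rangle|\ge\eta'$. Likewise, for $S^x$ there are $S^x$-even endpoints $B_i,B_j$ with $\mathbb{E}_\omega|\langle B_iB_j\prod_{i<k\le j}X_k\rangle|\ge\eta''$.

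The next step is to pin down the parity of $A$ under the other symmetry $S^x$. Since $\langle\cdot\rangle$ is taken in a state with exact $\mathbb{Z}_2\times\mathbb{Z}_2$ symmetry, I would decompose $A_i$ into $S^x$-even and $S^x$-odd parts. Only combinations with total even $S^x$ charge survive the expectation. I would take care that the string $\prod Z_k$ over a segment anticommutes with $S^x$ exactly when the segment has odd length. By adjusting $j$ by one site, absorbed into $\ell$, I would fix the segment length to be even, so the string commutes with $S^x$.

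By pigeonhole, one of the two $S^x$-parity choices for $(A_i,A_j)$ carries at least $\eta'/2$ of the averaged weight. I would then show this choice is uniform in $i,j$. The argument mirrors the ``fixed by the ensemble'' part of Corollary~\ref{cor}: using LPPL and translation averaging, if both parity sectors had $O(1)$ weight at different separations, one could build an $S^z$ string with $S^x$-odd ends. Because $S^x$ and $S^z$ commute, this string would give an $O(1)$ correlator $\mathbb{E}|\langle A^{\rm odd}_iA^{\rm odd}_j\cdots\rangle|$. The same reasoning determines the $S^z$-parity of $B$.

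The core step, which I expect to be the main obstacle, uses the on-site projective representation. Since $X_kZ_k=-Z_kX_k$, the two strings $\prod_{k\in[i+1,j]}Z_k$ and $\prod_{k\in[i'+1,j']}X_k$ on interleaved intervals $i<i'<j<j'$ anticommute in the bulk overlap only through their ends. I would consider the product of the two string operators and use a Levin-type exclusion argument, as in Sec.~\ref{subsec:exclusion}, adapted to rare regions. The claim is that the $S^x$-parity of the $Z$-string endpoints must be odd, because the $Z$-string end anticommutes with the $X$-string. Symmetrically, the $S^z$-parity of the $X$-string endpoints must be odd.

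More concretely, I would consider a region where the $Z$-string ends, as seen by the $X$-string. A $Z$-disorder string creates a domain wall for $S^z$, and LPPL makes the state look locally like it is acted on by a local operator. The truncated symmetry $\prod_{k\le m}X_k$, restricted to the ground state, should then factor as $B_{\rm left}B_{\rm right}$ with definite charges. Taking the commutator with $\prod_{k\le m}Z_k$ produces the anomaly sign $(-1)^{m}$, where $m$ counts the sites on which both strings act. The obstacle is making this sign consistent with average translation: shifting the endpoint by one site flips $(-1)^m$. This would force the $S^x$-parity of $A$ to alternate with position, contradicting the uniformity established above. I would handle the disorder average by the same probabilistic event analysis as in Sec.~\ref{subsec: events}, working with events on which both strings factorize locally with probability bounded below.

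The contradiction shows that one of $S^x$ or $S^z$ must have an $O(1)$ order parameter. An even chain length is needed so that periodic boundary conditions allow the global strings to close consistently.
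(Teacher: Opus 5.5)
\textbf{Gap 1: the core step (clean case).} Your opening reduction matches the paper. With no order parameter for either symmetry, the nearly-gapped trade-off theorem gives even-parity $O(1)$ disorder parameters for both $S^x$ and $S^z$. Splitting the $Z$-string ends into definite $S^x$-parity on an even-length string is also fine.

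The claim that the $S^x$-parity of the $Z$-string ends ``must be odd'' (and symmetrically for the $X$-string ends) is false. A singlet-dimer state cut between dimers has trivial, even ends for both strings. What Lemma~\ref{lem:Levin_prop1} gives is only a relation. Take $U=P_1P_3O_1O_3\prod_{i_1<k\le i_3}X_k$ and $V=P_2P_4O_2O_4\prod_{i_2<k\le i_4}Z_k$ with $i_1<i_2<i_3<i_4$. Both fix $|\Omega\rangle$ up to $O(1)$ scalars, so they cannot anticommute. Hence $p_x(O_2)\,p_z(O_3)=(-1)^{i_3-i_2}$, where $p_x,p_z$ denote parities under $S^x,S^z$. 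The sign comes from the bulk overlap of the bare strings together with the end of each string that sits inside the other. Your $B_{\mathrm{left}}B_{\mathrm{right}}$ factorization is only a heuristic for this.

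Your Step~2 never says what is contradicted. A $Z$-string with $S^x$-odd ends and $O(1)$ averaged weight is not inconsistent by itself; it is exactly what the paper constructs. Translate the $Z$-disorder parameter by one site and re-dress the ends, $O_2'=\hat T O_2\hat T^\dagger Z_{i_2}$ and $O_4'=\hat T O_4\hat T^\dagger Z_{i_4}$. By average translation invariance this is a disorder parameter on the \emph{same} string, with the same weight but opposite $S^x$-parity ends. So one of $UV=-VU$ or $UV'=-V'U$ must hold. That is the whole clean argument, and it needs no uniformity step.

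\textbf{Gap 2: the disorder average (more serious).} The string events are non-local, so the independence-based analysis of Sec.~\ref{subsec: events} does not apply as you state it. The events that the $X$-string, the $Z$-string and the re-dressed $Z$-string each exceed $\delta/2$ have probability at least $\delta/(2-\delta)$. They need not occur in the same realization.

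Constraints inside single realizations plus translation averaging cannot close the argument. Consider an equal mixture of the two dimerized singlet coverings. It has exact $\mathbb{Z}_2\times\mathbb{Z}_2$ symmetry and average translation, is gapped in every realization, and has no order parameter. The only hypotheses it violates are that all realizations lie in one phase and that the disorder is short-range correlated. In this ensemble the required endpoint parity alternates with position inside each realization, while both parities carry weight $1/2$ at every position. So your ``alternation versus uniformity'' scheme yields no contradiction there.

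Some ingredient must couple different realizations. The paper uses the same-phase hypothesis. After discarding large rare regions as in Sec.~\ref{subsec: exclusion-rare}, it moves the three witnesses into a single gapped state by a symmetric quasi-adiabatic circuit, which preserves endpoint parities (App.~\ref{app:quasi-adiabatic}). A possible alternative would be to make the endpoint parity at a cut a genuinely local, LPPL-controlled event and run an independence (zero--one) argument using short-range correlation of the disorder. Your proposal supplies neither ingredient.
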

This result can be viewed as a disordered analogue of the Lieb-Schultz-Mattis (LSM) theorem, in the spirit that under certain symmetry constraints, the system cannot be gapped and symmetric without long-range order. In this case, the presence of an average translation symmetry and a projective representation of the $\mathbb{Z}_2\times \mathbb{Z}_2$ symmetry on each site forces the system to develop an order parameter, ruling out the possibility of a symmetric, gapped phase without long-range order. See Sec.~\ref{subsec:disorderLSM} for the proof and further discussion.

(5)Through a Jordan-Wigner transform, we can translate our results on Ising spin chains to fermion chains. Specifically, we have
\begin{corollary}
    For a fermion chain in a nearly gapped ground state ensemble, there is always some disorder parameter
    \begin{equation}
        \lim_{|i_1-i_2|\to\infty}\mathbb{E}_{\omega}|\langle O_1 O_2\prod_{i_1\leq k\leq i_2}F_k\rangle_{\Omega_\omega}|= O(1),
    \end{equation}
    where $F_k$ is the local fermion parity on site $k$. For a given nearly gapped ensemble, the dressing local operators $O_{1,2}$ are either always bosonic (a trivial phase) or always fermionic (a Kitaev chain). 
\end{corollary}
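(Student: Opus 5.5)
The plan is to obtain the fermionic corollary from the spin-chain results (Theorem~\ref{thm:tradeoff} and its nearly gapped extension in Sec.~\ref{sec:Rareregion}) through a Jordan--Wigner transformation. First, I fix the dictionary. Write each site's Hilbert space as a $\mathbb{Z}_2$-graded space, with $F_k$ its local fermion parity. The Jordan--Wigner map $\mathcal{J}$ sends the fermion chain to a spin chain with Ising symmetry $S=\prod_k S_k$ and $S_k=F_k$. Every parity-even local fermionic operator maps to a local, $S$-symmetric spin operator. So each fermionic Hamiltonian $H_\omega$, which always conserves total parity, maps to an $S$-symmetric local spin Hamiltonian. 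The disorder distribution and its average translation invariance are left unchanged.

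Boundary conditions need some care. On a ring, $\mathcal{J}$ sends the fermion Hamiltonian in the even-parity sector to a spin Hamiltonian with twisted or periodic boundary conditions, and one boundary term is modified by $S$. Since $|\Omega_\omega\rangle$ is taken parity even, I would restrict to that sector, where the boundary term is an ordinary local symmetric term. The spectrum in the even sector is preserved, so the gap, or the near-gap structure with exponentially rare regions, carries over. LPPL also transfers, because symmetric quasi-local unitaries correspond under $\mathcal{J}$ to parity-even quasi-local fermionic unitaries. The image ensemble is therefore gapped or nearly gapped in the sense of Sec.~\ref{subsec: LPPL}.

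Next I apply the trichotomy on the spin side. The spin ensemble has exactly one of two things: an $O(1)$ Edwards--Anderson order parameter with odd endpoints $O_i O_j$, or an $O(1)$ disorder parameter with even endpoints. I then translate each case back.

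In the disorder case, the even endpoint operators are bosonic local fermionic operators. The string $\prod_{i<k\le j}S_k$ becomes $\prod F_k$, with the endpoint conventions fixed by absorbing a single-site $F$ into $O$. This gives the stated disorder parameter with bosonic dressing, which is the trivial phase.

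In the order case, $O_i$ is odd. Under $\mathcal{J}$, a parity-odd spin operator on $[i+1,i+\ell]$ equals a local fermionic odd operator $\tilde O_i$ times the Jordan--Wigner string $\prod_{k\le i}F_k$. The product $O_iO_j$ therefore becomes $\tilde O_i\tilde O_j\prod_{i<k\le j}F_k$, up to a local reshuffling near the endpoints and a possible factor of $\pm i$ from reordering the odd operators. That factor is harmless inside the absolute value. This gives the disorder parameter with fermionic dressing, which is the Kitaev chain.

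The exclusivity part of Theorem~\ref{thm:tradeoff} shows that the two cases cannot both hold. Because the dictionary is a bijection, a fermionic disorder parameter with fermionic dressing maps back to a spin order parameter, and one with bosonic dressing maps back to an even-parity spin disorder parameter. Exclusivity therefore says that, for a given ensemble, the dressing type is fixed. Odd-parity spin disorder parameters are excluded by the parity constraint, so no other option exists.

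The hardest step is the boundary-condition and sector bookkeeping in the first step. One has to check that the Jordan--Wigner image of the even-sector ground state of a periodic fermion chain is the parity-even ground state of a \emph{local} symmetric spin Hamiltonian. It must also satisfy the gap, LPPL and rare-region hypotheses uniformly in $\omega$. My first approach would be to treat the single twisted bond as one more local symmetric perturbation, which is allowed because $S=+1$ in this sector. I would then invoke LPPL so that it affects only $O(1)$ correlators near the bond. Choosing $i,j$ away from that bond, as average translation invariance permits, removes its influence.
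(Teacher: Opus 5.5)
Your proposal is correct and follows the paper's route: Jordan--Wigner map to the Ising chain, apply Theorem~\ref{thm:tradeoff} (existence, exclusivity, even parity), and translate even-endpoint disorder parameters into bosonic-dressed strings $\tilde O_1\tilde O_2F_{12}$ and odd-endpoint order parameters into fermion-dressed strings $f_1f_2F_{12}$, with exclusivity fixing the dressing type. Your boundary-condition bookkeeping restricts to the $S=+1$ sector, where the twisted bond becomes an ordinary local symmetric term; this is a more explicit version of the paper's remark that exact Ising symmetry makes the $\mathbb{Z}_2$ gauge field non-dynamical, so one can work with a genuine fermion chain.
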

This shows that string operators remain well-behaved diagnostics for $1d$ fermionic topological phases. 

The above result can be used to understand a class of ``intrinsically disordered'' topological phase proposed in Ref.~\cite{Ma_2025}. The simplest example is a Kitaev chain, but with an additional average $U(1)$ symmetry $c_i\to e^{i\alpha}c_i$. This average symmetry would forbid a first-moment fermionic disorder parameter: $\lim_{|i_1-i_2|\to\infty}\mathbb{E}_{\omega}\langle O_1 O_2\prod_{i_1\leq k\leq i_2}F_k\rangle_{\Omega_\omega}=0$, but does not forbid the Edwards-Anderson type of disorder parameter. This is why such phase can only appear in disordered ensembles, where the $U(1)$ symmetry only holds on average and first-moment and Edwards-Anderson disorder parameters can behave very differently.

For more details on fermion chains see Sec.~\ref{subsec:JordanWigner}.

\section{Order-disorder trade-off for gapped ensembles}\label{sec: tradeoff_gapped}
In this section we prove our main theorem constraining the order and disorder parameters for gapped ensemble without rare region effects. The argument proceeds in three steps. First, we show that the system must exhibit either an order parameter or a disorder parameter. Second, we show that these two possibilities are mutually exclusive. Third, we prove that any disorder parameter must be parity even.

The proof is probabilistic in nature. We regard the disorder potentials $\{v_i^\omega\}$ as random variables, and local observables evaluated in a fixed ground state $|\Omega_\omega\rangle$ as events that depend only on the disorder variables near their support.

In Sec.~\ref{subsec:gapped-setup}, we establish the technical setup, introducing notions of strong and weak order and disorder on disjoint intervals and reviewing the foundational lemmas. In Sec.~\ref{subsec: events}, we define two families of probabilistic events $\{A_i\}$ and $\{B_i\}$, corresponding to the ground state $|\Omega_\omega\rangle$ being ordered or disordered near site $i$. In Sec.~\ref{subsec: incompatibility}, we show that the complementary events $\overline{A_i}$ and $\overline{B_i}$ are mutually incompatible, \ie at least one of $A_i$ and $B_i$ must occur for each $i$. In Sec.~\ref{subsec:tradeoff_proof}, we use these events to prove that the system must have either an $(\eta,\ell)$ order parameter or an $(\eta',\ell')$ disorder parameter, with $\eta,\eta',\ell,\ell'$ being $O(1)$ constants independent of system size. In Sec.~\ref{subsec:exclusion}, we show that the system cannot simultaneously possess both an order and a disorder parameter. In Sec.~\ref{subsec:disorderparity}, we show that any disorder parameter must be parity even.

\subsection{Technical setup for the proof}\label{subsec:gapped-setup}
We begin by establishing key definitions and notations required for the proof. Following Ref.~\cite{Levin_2020}, we define notions of strong and weak order and disorder on disjoint intervals for an individual state. These concepts are essential for characterizing the occurrence of order and disorder parameters. We also review several foundational lemmas that underpin our main arguments.

For any subset $X\subset \{1, \ldots, L\}$, define the restricted symmetry operator
\begin{equation}
    S_X \coloneqq \prod_{i\in X} S_i\;.
\end{equation}

We now introduce the notion of strong or weak ordering for an individual state $|\psi\rangle$:
\begin{definition}
A state $|\psi\rangle$ is $\delta$ strongly-ordered on disjoint intervals $I_1, I_2$ if there exist operators $O_1$ and $O_2$ such that:
\begin{enumerate}
    \item  $O_1$ and $O_2$ are supported on $I_1$ and $I_2$.
    \item  $O_1$ and $O_2$ are odd under $S$.
    \item  $\left|\left\langle O_1 O_2\right\rangle_\psi\right| \geq \delta .$
    \item  $\left\|O_1\right\|,\left\|O_2\right\| \leq 1$.
\end{enumerate}
\end{definition}
\begin{definition}
    $A$ state $|\psi\rangle$ is $\delta$ weakly-ordered on disjoint intervals $I_1, I_2$ if there exists an operator A such that:
    \begin{enumerate}
        \item $A$ is supported on $I_1 \cup I_2$.
    \item  $A$ is odd under $S_{I_1}$ and $S_{I_2}$.
    \item  $\left|\langle A\rangle_\psi\right| \geq \delta$.
    \item  $\|A\| \leq 1$.
    \end{enumerate}
\end{definition} 
Note that the only difference between the strong or weak ordering is that for the weak ordering, we do not require a tensor product structure of the operator $A$.

The notion of strong or weak disorder is defined similarly:
\begin{definition}
    A state $|\psi\rangle$ is $\delta$ strongly-disordered on disjoint, non-adjacent intervals $I_1, I_2$ if there exist operators $O_1$ and $O_2$ such that
    \begin{enumerate}
        \item $O_1$ and $O_2$ are supported on $I_1$ and $I_2$.
        \item $O_1$ and $O_2$ are both even or both odd under $S$.
    \item $\left|\left\langle O_1 O_2 S_J\right\rangle_\psi\right| \geq \delta$ where $J$ is the interval between $I_1$ and $I_2$, going clockwise.
    \item $\left\|O_1\right\|,\left\|O_2\right\| \leq 1$.
    \end{enumerate}

\end{definition}

Notice that $O_1, O_2$ are either both even or both odd under $S$: in the former case, we will say that $|\psi\rangle$ is $\delta$ strongly-disordered with even parity, while in the latter case we will say that $|\psi\rangle$ is $\delta$ strongly-disordered with odd parity.

\begin{definition}
    A state $|\psi\rangle$ is $\delta$ weakly-disordered on disjoint, non-adjacent intervals $I_1, I_2$ if there exists an operator $A$ such that
    \begin{enumerate}
        \item $A$ is supported on $I_1 \cup I_2$.
        \item $A$ is even under both $S_{I_1}, S_{I_2}$ or odd under both $S_{I_1}, S_{I_2}$.
    \item $\left|\left\langle A S_J\right\rangle_\psi\right| \geq \delta$ where $J$ is the interval between $I_1$ and $I_2$, going clockwise.
    \item $\|A\| \leq 1$.
    \end{enumerate}
\end{definition} 

Similarly to above, we will say that $|\psi\rangle$ is $\delta$ weakly-disordered with even or odd parity depending on whether $A$ is even or odd under $S_{I_1}, S_{I_2}$.

\begin{figure}
    \centering
    \includegraphics[width=0.6\linewidth]{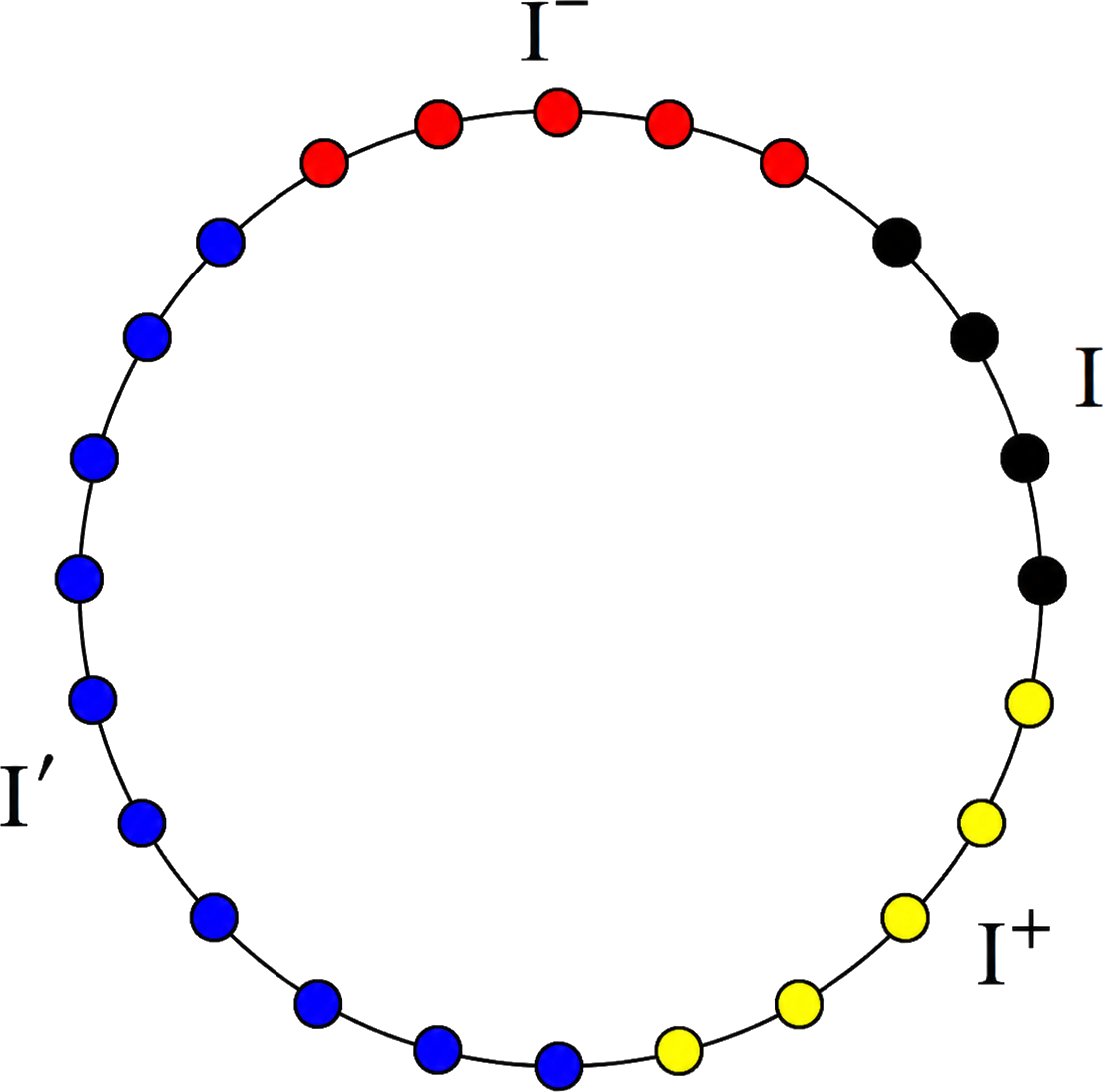}
    \caption{Adapted from Ref.~\cite{Levin_2020}. Example of $I^{ \pm}$ notation: the spins in the interval $I$ are shown in black; the spins in $I^{-}$ and $I^{+}$ are shown in red and yellow; and the spins in $I^{\prime} \equiv\left(I^{-} \cup I \cup I^{+}\right)^c$ are shown in blue. The region $I'$ is buffered from $I$ by $I^{\pm}$.}
    \label{fig:IpmNotation}
\end{figure}

We will also need the following notation for intervals. Given an interval $I \subset\{1, \ldots, L\}$ and an interval $I'$ not adjacent to $I$, let $I^+$ and $I^-$ denote the two intervals between $I$ and $I'$ in the clockwise and counterclockwise directions, respectively (see Fig.~\ref{fig:IpmNotation}). In general, we take $I^+$ and $I^-$ to be of the same size and larger than $I$.

Fix an integer $\ell>0$ and $\ell'=O(\ell)$, both of the same order as the correlation length and to be chosen later, and for each $i \in\{1, \ldots, L\}$, define
\begin{align}
    I_i&\coloneqq[i+1, i+\ell]\\
    I_i^- &\coloneqq [i-\ell'+1, i]\\
    I_i^+ &\coloneqq [i+\ell+1, i+\ell+\ell']\\
    I_i'&\coloneqq (I^-_i\cup I_i\cup I_i^+)^c
\end{align}
where we take the periodic boundary condition. 

It is worth highlighting that for a gapped ground state, the weak (dis)order can be upgraded to a strong (dis)order, as demonstrated by the following lemma:
\begin{lemma}[Lemma 2, Ref.~\cite{Levin_2020}]\label{lem:Levin2}
    For every $\delta>0$, there exists a length $\lambda_\delta=\tilde{O} \left(1/\varepsilon^{2}\right)$ such that for an $S$-symmetric ground state $|\Omega\rangle$ with gap $\epsilon$ in its symmetry sector, if it is $\delta$ weakly-ordered on $I_1,I_2$, then it is
$\delta/2$ strongly-ordered on $I_1,I_2$ for all $I_1,I_2$ that are separated by
a distance of at least $\lambda_\delta$. Likewise, if $|\Omega\rangle$ is $\delta$
weakly-disordered on $I_1,I_2$ with even (odd) parity, then it is $\delta/2$
strongly-disordered on $I_1,I_2$ with even (odd) parity for all $I_1,I_2$
separated by at least $\lambda_\delta$.
\end{lemma}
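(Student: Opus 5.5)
The plan is to follow the strategy of Levin's Lemma 2 and reduce weak-to-strong upgrading to an approximate factorization property of gapped symmetric ground states. I treat ordering first; disorder is handled the same way after conjugating by the string $S_J$. Let $A$ be supported on $I_1\cup I_2$, odd under both $S_{I_1}$ and $S_{I_2}$, with $|\langle A\rangle_\Omega|\ge\delta$. Decompose $A$ in a product operator basis adapted to the parity grading on each interval, $A=\sum_{\alpha} a_\alpha X_\alpha\otimes Y_\alpha$, with $X_\alpha$ odd and supported on $I_1$, and $Y_\alpha$ odd and supported on $I_2$. The goal is to find a single pair $(O_1,O_2)$ with $|\langle O_1O_2\rangle|\ge\delta/2$. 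A naive triangle inequality loses a factor of the basis dimension, which is exponential in $|I_1|+|I_2|$. So the argument must instead show that the bipartite "correlation operator" has a dominant product component.

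The key step is an approximate decoupling. For a ground state with gap $\varepsilon$ in its symmetry sector, I will use the Hastings/Bravyi-type approximate ground-state projector: a polynomial or filter function $K$ of $H$, of degree $\sim\tilde O(1/\varepsilon)$ per unit of decay and with Lieb--Robinson spreading. Applied in the even sector, it gives
\[
\langle\Omega|X Y|\Omega\rangle\approx\langle\Omega|X\,P\,Y|\Omega\rangle,
\]
with error exponentially small in $d\,\varepsilon$ once the separation $d$ exceeds $\tilde O(1/\varepsilon^2)$ times a logarithm. Here $P$ projects onto the ground space in the relevant sector. For odd $X,Y$, $P$ becomes the projector onto the lowest odd-sector state(s). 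Symmetry also gives the structure I need: the bilinear form $(X,Y)\mapsto\langle X Y\rangle$ on odd operators factorizes, up to error, through a low-rank intermediate space. More precisely, $X|\Omega\rangle$ and $Y^\dagger|\Omega\rangle$ both lie in the odd sector, and their overlap is controlled by the odd-sector ground state $|\Omega_-\rangle$ plus an exponentially small remainder, via $\langle X Y\rangle\approx\langle\Omega|X|\Omega_-\rangle\langle\Omega_-|Y|\Omega\rangle$ when the odd sector has a gapped unique low state, with an analogous finite-rank statement otherwise. Then $\langle A\rangle=\sum_\alpha a_\alpha\langle X_\alpha Y_\alpha\rangle\approx \langle\Omega|\tilde X|\Omega_-\rangle\langle\Omega_-|\tilde Y|\Omega\rangle$, a rank-one-type expression. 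I will take $O_1$ and $O_2$ to be the normalized odd operators on $I_1$ and $I_2$ that maximize the respective matrix elements; these can be obtained as partial contractions of $A$ against the relevant states. This gives $|\langle O_1O_2\rangle|\ge\delta-\text{error}\ge\delta/2$ once the separation exceeds $\lambda_\delta$.

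For the disorder case I repeat the argument with $AS_J$, using the fact that $S_J$ acts as $S$ restricted to the middle interval. On the symmetric state, $S_J|\Omega\rangle$ then equals $S_{J^c}|\Omega\rangle$ up to sign. This lets the string be split and the same decoupling be applied across the two gaps between $I_1$ and $I_2$, and the even/odd parity label is preserved by construction.

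The main obstacle is making the low-rank factorization quantitative without a dimension-dependent loss. I expect that a ground space of dimension one in each sector, together with the gap, suffices: the $\varepsilon^{-2}$ in $\lambda_\delta$ comes from the filter-function degree needed to suppress excited-state contributions to accuracy $\delta/4$. I would first try the Bravyi–Hastings polynomial $K(H)$ approach and fall back on directly citing Levin's argument if the rank bookkeeping fails.
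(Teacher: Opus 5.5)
The paper does not prove this statement. It imports it from Levin (Lemma~2 of Ref.~\cite{Levin_2020}), so your fallback of citing Levin is exactly what the paper does. As a self-contained argument, however, your proposal has a genuine gap at its central step, even though its underlying picture (a rank-one odd--odd structure mediated by a cat partner) is the right intuition in the symmetry-broken case.

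\emph{Odd-sector factorization.} Since $Y|\Omega\rangle$ is odd, $\langle XY\rangle=\langle\Omega|X\Pi_-Y|\Omega\rangle$, with $\Pi_-$ the odd-sector projector. Shrinking $\Pi_-$ to a finite-rank low-energy projector with exponentially small error is a Hastings--Koma-type statement. It needs a spectral gap isolating that subspace from the rest of the odd spectrum. The lemma assumes a gap only in the sector of $|\Omega\rangle$. The odd sector need not have an isolated lowest state (in a paramagnet its bottom is a one-magnon band with no gap above it), and nothing in the hypothesis controls where its spectrum sits relative to $E_0$. So ``a gapped unique low state'' and ``an analogous finite-rank statement otherwise'' are extra assumptions, not consequences. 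This is why the gap-based tools quoted in the paper (Lemma~\ref{lem:factorization} and Lemma~\ref{lem:Levin_prop1}) are formulated only for $S$-even objects such as $A$ or $AS_J$, which keep the state inside the gapped sector.

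\emph{Dimension loss.} More fundamentally, even a valid factorization for product observables, with error $\eta\|X\|\|Y\|$, does not yield the lemma. Your step $\langle A\rangle=\sum_\alpha a_\alpha\langle X_\alpha Y_\alpha\rangle\approx\cdots$ applies it term by term, which is exactly the triangle-inequality loss you said must be avoided. The error is $\eta\sum_\alpha|a_\alpha|$, which can be exponentially large in the length of the shorter interval, whereas $\lambda_\delta$ must be uniform in the interval lengths. This is not a technicality here. With such a loss, the order upgrade on $(I_i,I_i')$ would force $\ell'\gtrsim\ell/\epsilon$, while the disorder upgrade on $(I_i^-,I_i^+)$ would force $\ell\gtrsim\ell'/\epsilon$; these are incompatible for a small gap.

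What you actually need is a dimension-free, trace-norm statement. Let $\rho=\Omega_{I_1I_2}$; since $[\rho,S_{I_1}S_{I_2}]=0$, its doubly odd part is $\rho^{oo}=\tfrac12(\rho-S_{I_1}\rho S_{I_1})$. You must show that $\rho^{oo}$ lies within trace distance $\delta/4$ of a product $\Delta_1\otimes\Delta_2$ of odd operators. Given that, $\delta\le|\mathrm{tr}(A\rho^{oo})|$ forces $\|\Delta_1\|_1\|\Delta_2\|_1\ge 3\delta/4$. Taking $O_i$ to be the adjoint of the (necessarily odd) polar isometry of $\Delta_i$ then gives $|\langle O_1O_2\rangle|\ge\delta/2$ at once. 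Operator-norm clustering bounds on product observables cannot by themselves produce such an estimate, and your plan has no substitute.

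Your attribution of the $\epsilon^{-2}$ to a filter degree is a symptom of this. Filter and Lieb--Robinson arguments give lengths linear in $1/\epsilon$. A $\tilde O(\epsilon^{-2})$ scale is what arises when a $1/\epsilon$ clustering length must beat an effective dimension $e^{\tilde O(1/\epsilon)}$ (e.g.\ from a one-dimensional area-law/MPS bound), which points to the missing dimension-reduction step.

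\emph{Disorder half.} This part inherits both problems and adds one more. $S_J$ is not local, so no Lieb--Robinson decoupling applies to $XS_JY$, and the intermediate states are not odd-sector states of $H$. The identity $S_J|\Omega\rangle=\pm S_{J^c}|\Omega\rangle$ does not by itself split the string.
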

We remark that the weak-to-strong upgrade for the order parameters is included here for completeness only, following the framework of Ref.~\cite{Levin_2020}. The rest of this work only relies on the fact that a weak disorder parameter can be upgraded to a strong disorder parameter.

\subsection{Probabilistic events for order and disorder}\label{subsec: events}

Consider the disorder potential $\{v_i^\omega\}$ as a set of random variables which uniquely determines the state $|\Omega_\omega\rangle$ in the disorder ensemble, we define the following two sets of probabilistic events $A_i$ and $B_i$ that are conditions on local observables for the state $|\Omega_\omega\rangle$.

We first define the family of events associated with ordering. Let $A_i$ be the event that, for a disorder realization $\omega$, the corresponding state $|\Omega_\omega\rangle$ has mutual information between  $I_i$ and $I_i'$ bounded from below:
\begin{equation}\label{Eq:EventA_MI_Lowerbound}
\begin{split}
I_\omega(I_i:I_i')
&=S_{\omega}(I_i)+S_{\omega}(I_i^-I_iI_i^+)-S_{\omega}(I_i^-I_i^+)\\
&>\frac{1}{32}\;,
\end{split}
\end{equation}
where $S_\omega(I)$ denotes the von Neumann entropy of the state $|\Omega_\omega\rangle$ restricted to the region $I$. The specific choice of the lower bound $1/32$ is made for convenience in the later analysis.

Next, we define the family of events associated with disordering. Let $B_i$ be the event that
\begin{equation}
    \max_{\supp(U)\subset I_i^-\cup I_i^+}
    \left|\langle \Omega_\omega| U S_{I_i} |\Omega_\omega\rangle\right|
    \ge \frac{1}{2}.
\end{equation}
Again, the choice of threshold $1/2$ is for later convenience.

Both $A_i$ and $B_i$ depend on $|\Omega_\omega\rangle$ only through its reduced density matrix on the region $I_i^- \cup I_i \cup I_i^+$. By the LPPL principle, this reduced density matrix depends only on the disorder variables $v_j^\omega$ with $j$ close to that region, namely within the distance $O(\ell+\ell'+\xi)=O(\ell)$ from the site $i$. Therefore, if $\dist(i,j)\gg \ell$, the events $A_i$ and $A_j$ depend on disjoint sets of disorder variables and are independent; the same holds for $B_i$ and $B_j$.

In the following, we discuss in detail the physical meaning of events $A_i$ and $B_i$, and how they relate to the ordering and disordering of the state $|\Omega_\omega\rangle$ near the site $i$.

\subsubsection{The events for ordering: $\{A_i\}$}\label{subsubsec: event A}
We first summarize the physical content of $A_i$ before giving the rigorous proof. Informally, if $A_i$ occurs, the state has nontrivial long-range correlations between $I_i$ and its complement $I_i'$, signaling local ordering near site $i$. Moreover, if both $A_i$ and $A_j$ occur for two well-separated points $i$ and $j$, the local ordering near $i$ and $j$ induces an order parameter on $I_i$ and $I_j$.

Suppose $A_i$ occurs for a state $|\Omega\rangle$; for simplicity, we suppress the disorder label $\omega$. Then
\begin{equation}
I_\Omega(I_i:I_i')=S(I_i|I_i')_{\Omega_{I_i}\otimes\Omega_{I_i'}}-S(I_i|I'_i)_{\Omega}>\frac{1}{32}
\end{equation}
where the conditional entropy of a bipartite state $\sigma_{AB}$ is defined by
\begin{equation}
    S(A|B)_\sigma \coloneqq S_\sigma(AB)-S_\sigma(B),
\end{equation}
with $S_\sigma(X)$ the von Neumann entropy of the reduced density matrix $\sigma_X$ on subsystem $X$. 

The conditional entropy is continuous in the density matrix~\cite{Alicki_2004, Winter_2016}, as quantified by the following Fannes-type inequality:
\begin{lemma}[Lemma 2, Ref.~\cite{Winter_2016}]\label{lem:Fannes_Cond_entropy}
    For states $\rho$ and $\sigma$ on a Hilbert space $A \otimes B$ if $\|\rho-\sigma\|_1 \leq \epsilon \leq 2$, then
\begin{equation}
    \left|S(A|B)_\rho-S(A |B)_\sigma\right| \leq  \epsilon \log |A|+2 H_2\left(\epsilon/2\right)
\end{equation}
where $H_2(x)=-x\log x-(1-x)\log(1-x)$, and the trace-norm  $\onenorm{A}\coloneqq \tr\sqrt{ A^\dagger A}$.
\end{lemma}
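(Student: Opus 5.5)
The plan is the Alicki--Fannes--Winter mixing argument. Set $t\coloneqq\frac12\onenorm{\rho-\sigma}\le\epsilon/2\le 1$; if $t=0$ there is nothing to prove. The Jordan decomposition of the traceless Hermitian operator $\rho-\sigma$ gives $\rho-\sigma=t(\Delta_+-\Delta_-)$ with $\Delta_\pm\ge 0$ of unit trace, supported on orthogonal subspaces. Define the common mixture
\begin{equation}
\omega\coloneqq\frac{1}{1+t}\left(\rho+t\Delta_-\right)=\frac{1}{1+t}\left(\sigma+t\Delta_+\right),
\end{equation}
which is a density matrix on $A\otimes B$. Write $p=1/(1+t)$, so that $1-p=t/(1+t)$.

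I will use two standard properties of the conditional entropy.

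\emph{Concavity.} For $\omega=p\rho+(1-p)\tau$ one has $S(A|B)_\omega\ge pS(A|B)_\rho+(1-p)S(A|B)_\tau$. This follows from strong subadditivity, for instance by attaching a classical flag register $F$ and using $S(A|BF)\le S(A|B)$.

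\emph{Almost convexity.} For the same mixture, $S(A|B)_\omega\le pS(A|B)_\rho+(1-p)S(A|B)_\tau+H_2(p)$. Here $S(A|BF)$ is again the average. The difference $S(A|B)-S(A|BF)=I(A:F|B)\le S(F)\le H_2(p)$, using that $F$ is classical.

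Apply concavity to the first expression for $\omega$ and almost convexity to the second, then subtract:
\begin{equation}
p\bigl(S(A|B)_\rho-S(A|B)_\sigma\bigr)\le(1-p)\bigl(S(A|B)_{\Delta_+}-S(A|B)_{\Delta_-}\bigr)+H_2(p).
\end{equation}
Any conditional entropy lies in $[-\log|A|,\log|A|]$, so the bracket on the right is at most $2\log|A|$. Dividing by $p$ yields
\begin{equation}
S(A|B)_\rho-S(A|B)_\sigma\le 2t\log|A|+(1+t)H_2\!\left(\tfrac{t}{1+t}\right).
\end{equation}
Exchanging $\rho$ and $\sigma$ gives the same bound on the absolute value.

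It remains to pass to the stated form. Since $2t\le\epsilon$, the first term is at most $\epsilon\log|A|$. For the entropy term I would use the elementary identity
\begin{equation}
(1+t)H_2\!\left(\tfrac{t}{1+t}\right)=(1+t)\log(1+t)-t\log t,
\end{equation}
and compare it with $2H_2(t)$ by checking that the difference vanishes at $t=0$ and has the correct sign of derivative on $[0,\tfrac12]$. For $t\le 1/2$ monotonicity of $H_2$ then gives $2H_2(t)\le 2H_2(\epsilon/2)$.

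The main obstacle is this last calculus step, in particular its behaviour near $t\to 1$, where $H_2(\epsilon/2)$ decreases toward zero. There I would instead fall back on the trivial bound $|S(A|B)_\rho-S(A|B)_\sigma|\le 2\log|A|$. If that does not close the gap, I would keep Winter's sharper form $2t\log|A|+(1+t)H_2(t/(1+t))$, which is all that the subsequent application needs, since there $\epsilon$ is small.
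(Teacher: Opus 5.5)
You have reproduced the cited source's argument correctly. The paper gives no proof of its own, only the pointer to Winter's Lemma~2. Your Jordan decomposition, common mixture $\omega$, flag-register concavity, almost-convexity via $I(A:F|B)\le S(F)$, and division by $p$ all check out. They yield $|S(A|B)_\rho-S(A|B)_\sigma|\le 2t\log|A|+(1+t)H_2\bigl(\tfrac{t}{1+t}\bigr)$ for $t=\tfrac12\onenorm{\rho-\sigma}\le 1$, which is what Winter's Lemma~2 actually states. The gap is the last step, passing to the displayed form $\epsilon\log|A|+2H_2(\epsilon/2)$ on the whole range $\epsilon\le 2$. It cannot be closed the way you propose.

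\textbf{The calculus step.} Put $D(t)\coloneqq 2H_2(t)-(1+t)H_2\bigl(\tfrac{t}{1+t}\bigr)=-t\log t-2(1-t)\log(1-t)-(1+t)\log(1+t)$. Then $D'(t)=\log\frac{(1-t)^2}{t(1+t)}$, which is positive only for $t<1/3$. So your planned check, that $D'$ has one sign on $[0,\tfrac12]$, is false as stated. What does hold is that $D$ rises and then falls, with $D(0)=0$ and $D(\tfrac12)=\tfrac32\log\tfrac43>0$. Hence $D\ge 0$ up to its unique root $t^*\approx 0.73$. Beyond that root $D<0$ (for instance $D(1)=-2\log 2$), so Winter's entropy term is strictly larger than $2H_2(t)$ there.

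\textbf{The fallback.} The trivial bound $2\log|A|$ implies the stated inequality only if $H_2(t)\ge(1-t)\log|A|$. Since $H_2(t)\le(1-t)\log\frac{e}{1-t}$, this forces $1-t\le e/|A|$. Concretely, take $|A|=16$ and $t=3/4$. The stated right-hand side is about $5.28$ nats, which lies below both Winter's bound (about $5.35$) and $2\log|A|$ (about $5.55$). So for $\epsilon$ between roughly $2t^*\approx 1.45$ and $2-2e/|A|$, neither your argument nor the cited lemma gives the displayed inequality. In the paper $|A|=2^\ell$ is large, so this window is not empty.

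\textbf{Replacing $t$ by $\epsilon/2$.} For $\epsilon>1$, your step $H_2(t)\le H_2(\epsilon/2)$ also fails, because $H_2$ decreases past $\tfrac12$. There you should instead use that $x\mapsto 2x\log|A|+2H_2(x)$ increases on $[0,\tfrac{|A|}{|A|+1}]$ and stays at least $2\log|A|$ beyond that point.

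The right repair is the one you mention at the end. Either state the lemma in Winter's form, or restrict the displayed form to $\tfrac12\onenorm{\rho-\sigma}\le t^*$, for instance $\epsilon\le 1$. With the unimodality argument above, your proof is then complete. This costs the paper nothing: the lemma is only used to conclude $I_\Omega(I_i:I_i')\le 1/32$ whenever $\varepsilon\le\varepsilon_\ell$, which is a small-$\varepsilon$ regime.
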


Since mutual information is nonnegative, Lemma.~\ref{lem:Fannes_Cond_entropy} implies:
\begin{equation}\label{Eq:EventA_MI_UpperfromCorr}
    I_\Omega(I_i:I_i')\leq \varepsilon \ell+2H_2(\varepsilon/2)
\end{equation}
where $\varepsilon=\onenorm{\Omega_{I_i}\otimes\Omega_{I_i'}-\Omega_{I_iI_i'}}$. Combining (\ref{Eq:EventA_MI_Lowerbound}) and (\ref{Eq:EventA_MI_UpperfromCorr}), we have
\begin{equation}
    \varepsilon \ell+2H_2(\varepsilon/2)>\frac{1}{32}\;,
\end{equation}
where
\begin{equation}
    \varepsilon \coloneqq
    \left\|\Omega_{I_i}\otimes \Omega_{I_i'} - \Omega_{I_i I_i'}\right\|_1,
\end{equation}
and we denote the solution to this inequality as $\varepsilon>\varepsilon_\ell$.

The trace norm $\varepsilon$ can in turn be bounded by connected correlation functions.

\begin{lemma}[Lemma 20 of Ref.~\cite{Brand_o_2014}]\label{lem:brandao14}
For every bounded operator $L \in \mathbb{B}(\mathbb{C}^d\otimes\mathbb{C}^D)$ with $d\le D$,
\begin{equation}
    \|L\|_1 \le d^2 \max_{\|X\|\le 1,\|Y\|\le 1}
    \left|\tr\bigl((X\otimes Y)L\bigr)\right|.
\end{equation}
\end{lemma}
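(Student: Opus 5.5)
The plan is to expand $L$ in a unitary operator basis on the smaller factor $\mathbb{C}^d$, apply the triangle inequality to the trace norm, and use trace-norm duality on the $\mathbb{C}^D$ factor. Each of the $d^2$ resulting terms will be bounded by the maximum on the right-hand side, which gives the factor $d^2$.

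\emph{Step 1 (unitary basis).} Take the Weyl--Heisenberg (generalized Pauli) operators $\{U_a\}_{a=1}^{d^2}$ on $\mathbb{C}^d$. They are unitary, so $\|U_a\|=1$ and $\|U_a\|_1=d$. They are orthogonal in the Hilbert--Schmidt sense, $\tr(U_a^\dagger U_b)=d\,\delta_{ab}$, and they span the algebra of $d\times d$ matrices. Hence every bounded $L$ on $\mathbb{C}^d\otimes\mathbb{C}^D$ decomposes uniquely as
\begin{equation}
L=\frac{1}{d}\sum_{a=1}^{d^2} U_a\otimes L_a,\qquad L_a\coloneqq \tr_{\mathbb{C}^d}\bigl((U_a^\dagger\otimes 1)L\bigr).
\end{equation}
This can be checked on product operators $M\otimes N$ and extended by linearity.

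\emph{Step 2 (triangle inequality).} The trace norm is multiplicative on tensor products, $\|U_a\otimes L_a\|_1=\|U_a\|_1\|L_a\|_1=d\,\|L_a\|_1$. Therefore
\begin{equation}
\|L\|_1\le \frac{1}{d}\sum_a d\,\|L_a\|_1=\sum_{a=1}^{d^2}\|L_a\|_1 .
\end{equation}

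\emph{Step 3 (duality).} By the duality between the trace norm and the operator norm, for each $a$ there is $Y_a$ on $\mathbb{C}^D$ with $\|Y_a\|\le 1$ such that
\begin{equation}
\|L_a\|_1=|\tr(Y_aL_a)|=\bigl|\tr\bigl((U_a^\dagger\otimes Y_a)L\bigr)\bigr| .
\end{equation}
Since $\|U_a^\dagger\|\le 1$, this is at most $\max_{\|X\|,\|Y\|\le1}|\tr((X\otimes Y)L)|$. Summing over the $d^2$ values of $a$ yields the claim.

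The only step needing care is Step 1: the existence of a unitary orthogonal basis and the correct normalization $1/d$, which the Weyl--Heisenberg construction provides for every $d$. The argument never uses $d\le D$. That hypothesis only ensures we expand on the smaller factor, so that the prefactor is the better of $d^2$ and $D^2$.
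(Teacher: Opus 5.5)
Your proof is correct: the expansion $L=\frac{1}{d}\sum_a U_a\otimes L_a$, the bound $\|L\|_1\le\sum_a\|L_a\|_1$ from multiplicativity of the trace norm, and the duality step $\|L_a\|_1=\bigl|\tr\bigl((U_a^\dagger\otimes Y_a)L\bigr)\bigr|$ together give exactly the factor $d^2$. The paper gives no proof of its own and only cites Lemma 20 of Brand\~ao--Horodecki, and your basis-expansion, triangle-inequality and trace-norm-duality argument is the standard proof of that lemma; matrix units $|i\rangle\langle j|$ would work just as well in place of the Weyl--Heisenberg unitaries.
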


Applying this with
\begin{equation}
    L = \Omega_{I_i}\otimes \Omega_{I_i'} - \Omega_{I_i I_i'},
\end{equation}
we find that there exist unit-norm operators $X$ and $Y$ supported on $I_i$ and $I_i'$, respectively, such that
\begin{equation}
\begin{split}
     &|\langle\Omega|XY|\Omega\rangle-\langle\Omega|X|\Omega\rangle\langle\Omega|Y|\Omega\rangle|\\
     \geq &2^{-2\ell} \onenorm{\Omega_{I_i}\otimes\Omega_{I_i'}-\Omega_{I_iI_i'}}> 2^{-2\ell}\varepsilon_\ell
\end{split}
\end{equation}
We now decompose $X$ and $Y$ into components with definite parity under the Ising symmetry:
\begin{align}
    X_\pm &\coloneqq \frac{1}{2}(X \pm SXS),\\
    Y_\pm &\coloneqq \frac{1}{2}(Y \pm SYS).
\end{align}
Since $|\Omega\rangle$ has a definte parity, the expectation value of any odd operator vanishes. Therefore,
\begin{equation}\label{eq:even_odd_decomp}
\begin{split}
    &\left|
        \langle \Omega|XY|\Omega\rangle
        - \langle \Omega|X|\Omega\rangle \langle \Omega|Y|\Omega\rangle
    \right| \\
    &= \Bigl|
        \langle \Omega|X_-Y_-|\Omega\rangle
        + \langle \Omega|X_+Y_+|\Omega\rangle
        - \langle \Omega|X_+|\Omega\rangle \langle \Omega|Y_+|\Omega\rangle
    \Bigr| \\
    &\ge
    |\langle \Omega|X_-Y_-|\Omega\rangle|
    -
    \left|
        \langle \Omega|X_+Y_+|\Omega\rangle
        - \langle \Omega|X_+|\Omega\rangle \langle \Omega|Y_+|\Omega\rangle
    \right|.
\end{split}
\end{equation}
The first term is of the form expected for an order parameter, while the second is a connected correlator of parity-even operators and should decay exponentially with separation. 

To bound the second term, we use the cluster decomposition theorem for even operators.

\begin{lemma}[Proposition 2 of Ref.~\cite{Levin_2020}]\label{lem:factorization}
Let $O_1$ and $O_2$ be operators that are even under $S$, have norm at most $1$, and are supported on intervals $I_1$ and $I_2$ separated by distance at least $\ell'$. Then
\begin{equation}
    \left|
        \langle O_1O_2\rangle_\Omega
        - \langle O_1\rangle_\Omega\langle O_2\rangle_\Omega
    \right|
    \le
    (6+\sqrt{2})\sqrt{g(\ell')} + g(\ell'),
\end{equation}
where
\begin{equation}
    g(\ell')=\mathrm{poly}(\ell',\epsilon^{-1})e^{-c_1\epsilon \ell'}
\end{equation}
for some numerical constant $c_1>0$, and $\epsilon$ is the energy gap within the even subspace.
\end{lemma}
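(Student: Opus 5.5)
We plan to reduce the statement to the standard exponential-clustering theorem for gapped ground states (Hastings--Koma / Nachtergaele--Sims), exploiting the fact that even operators never leave the even sector. Write $P_+=(1+S)/2$ for the projector onto the even sector and $P_\Omega=|\Omega\rangle\langle\Omega|$. Since $O_1,O_2$ are even and $|\Omega\rangle$ is even, both $O_1|\Omega\rangle$ and $O_2|\Omega\rangle$ lie in the range of $P_+$. Hence
\begin{equation}
\langle O_1O_2\rangle_\Omega-\langle O_1\rangle_\Omega\langle O_2\rangle_\Omega=\langle\Omega|O_1(P_+-P_\Omega)O_2|\Omega\rangle .
\end{equation}
On the range of $P_+-P_\Omega$ the Hamiltonian has spectrum at least $E_0+\epsilon$. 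The key point is that possible degeneracy or small gaps in the odd sector, as in a GHZ-type ground state, never enter, because odd states are never reached.

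The first step is to express this restricted connected correlator through real-time dynamics. Following Hastings--Koma, we will introduce a Gaussian filter $\tilde f(t)\propto e^{-\epsilon^2t^2/(2q)}/(it)$ with parameter $q$ and consider $\int dt\,\tilde f(t)\langle\Omega|[O_1(t),O_2]|\Omega\rangle$. Inserting spectral resolutions, the positive-frequency part of this integral reproduces $\langle\Omega|O_1(P_+-P_\Omega)O_2|\Omega\rangle$ up to an error of order $e^{-q/2}$. This uses only that every intermediate state in the even sector has energy at least $E_0+\epsilon$ above the ground state, which is exactly the even-sector gap.

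The second step bounds the commutator via the Lieb--Robinson bound of the full local Hamiltonian $H$, which is a genuine local Hamiltonian on the tensor product, so restricting to the sector costs nothing. This gives $\|[O_1(t),O_2]\|\le C|I_1|\,e^{-\mu(\ell'-v|t|)}$. Splitting the $t$-integral at $|t|\sim\ell'/(2v)$, the inner region contributes $\mathrm{poly}(\ell')e^{-\mu\ell'/2}$ and the outer region contributes a Gaussian tail $e^{-\epsilon^2\ell'^2/(8v^2q)}$. Optimizing $q\sim\epsilon\ell'/v$ then gives a total error $g(\ell')=\mathrm{poly}(\ell',\epsilon^{-1})e^{-c_1\epsilon\ell'}$.

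The main obstacle is the clean reduction to a single-sided, gap-controlled quantity. The Hastings--Koma trick naturally bounds the symmetrized combination $\langle O_1(P_+-P_\Omega)O_2\rangle+\langle O_2(P_+-P_\Omega)O_1\rangle$, and one then needs to separate the two orderings. I would handle this as in Levin's argument. First bound the one-sided quantities $\|(P_+-P_\Omega)O_2|\Omega\rangle\|$-type overlaps. Then pass from those bounds to the connected correlator by Cauchy--Schwarz, which produces the square root: the result is a bound of the form $a\sqrt{g}+g$. Tracking constants from the two time orderings, the Cauchy--Schwarz step, and the norm bounds $\|O_{1,2}\|\le1$ should yield the stated prefactor $(6+\sqrt2)\sqrt{g(\ell')}+g(\ell')$. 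I have not verified that the constants land exactly on $6+\sqrt2$, and I expect this bookkeeping to be the part of the argument that needs the most care.
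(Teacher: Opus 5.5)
Your main line is sound, and it takes a different route from the one the paper relies on. The paper gives no argument for this lemma; it imports the bound from Levin. The shape of that bound ($\sqrt g$ plus a $g$ cross term, with the same $g$ that controls the projectors $P_{\tilde I}(\ell)$ in the Proposition~1 quoted later) indicates a derivation through approximate local ground-state projectors. You instead run the Hastings--Koma (or Nachtergaele--Sims) clustering argument directly on $\langle\Omega|O_1(P_+-P_\Omega)O_2|\Omega\rangle$. Because $O_2|\Omega\rangle$ and $O_1^\dagger|\Omega\rangle$ are even, only even-sector energies $\ge E_0+\epsilon$ appear, while the Lieb--Robinson bound is the one for the full $H$. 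That reduction is correct.

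Your ``main obstacle'' paragraph, however, is wrong on both counts.

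\emph{There is nothing to separate.} $I_1$ and $I_2$ are disjoint, so $[O_1,O_2]=0$. Hence $\langle O_1O_2\rangle_\Omega=\langle O_2O_1\rangle_\Omega$, and $\langle\Omega|O_1(P_+-P_\Omega)O_2|\Omega\rangle=\langle\Omega|O_2(P_+-P_\Omega)O_1|\Omega\rangle$. The symmetrized combination the filter produces is therefore already twice the one-sided connected correlator. This is the standard way the Hastings--Koma argument closes.

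\emph{The fix you propose would fail.} The quantity you plan to bound is $\|(P_+-P_\Omega)O_2|\Omega\rangle\|=(\langle O_2^\dagger O_2\rangle_\Omega-|\langle O_2\rangle_\Omega|^2)^{1/2}$, the ground-state standard deviation of $O_2$. It is $O(1)$ for a generic even operator (e.g.\ $Z_kZ_{k+1}$ in a paramagnet with $S=\prod_i X_i$) and does not decay in $\ell'$. Cauchy--Schwarz on it gives only a trivial bound. Delete that step; your route does not need to reproduce the square root.

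With that step deleted, you obtain $|\langle O_1O_2\rangle_\Omega-\langle O_1\rangle_\Omega\langle O_2\rangle_\Omega|\le g(\ell')$ for a $g$ of the required form. This implies the statement, since $g\le(6+\sqrt2)\sqrt g+g$ and $g$ is only specified up to a polynomial prefactor and the constant $c_1$. No bookkeeping to land on $6+\sqrt2$ is needed.

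Two technical repairs are still required:
\begin{itemize}
\item Use the Lieb--Robinson bound in the form $C(e^{v|t|}-1)e^{-\mu\ell'}$, so that the $1/t$ singularity of the filter is integrable at $t=0$.
\item Replace your prefactor $C|I_1|$ by a size-independent constant. In 1D, $\sum_{x\in I_1,\,y\in I_2}e^{-\mu\,\mathrm{dist}(x,y)}=O(e^{-\mu\ell'})$ uniformly in the interval lengths. This matters because $g$ may depend only on $\ell'$ and $\epsilon$, and in the paper's application one operator lives on $I_i'$, whose size is comparable to $L$.
\end{itemize}

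As for what each route buys: yours is more elementary and sharper (linear rather than square-root in $g$) for this lemma alone. The projector route costs a square root, but it supplies the machinery that also controls string correlators $\langle AS_J\rangle_\Omega$. That is exactly the Proposition~1 the paper imports alongside this lemma.
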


Since $\supp(X_+)\subset I_i$ and $\supp(Y_+)\subset I_i'$, the separation between their supports is at least $\ell'$. Because $g(\ell')$ decays exponentially, for any fixed $\ell$ one can choose $\ell'=O(\ell)$ such that
\begin{equation}\label{eq:Event_A_cond}
    (6+\sqrt{2})\sqrt{g(\ell')} + g(\ell')
    \le 2^{-2\ell-1}\varepsilon_\ell.
\end{equation}
With this choice of $\ell'$, Eq.~\eqref{eq:even_odd_decomp} implies that there exist odd operators $X_-$ and $Y_-$ such that
\begin{equation}
\begin{split}
    |\langle \Omega|X_-Y_-|\Omega\rangle|
    &\ge
    2^{-2\ell}\varepsilon_\ell
    - \left((6+\sqrt{2})\sqrt{g(\ell')} + g(\ell')\right) \\
    &\ge 2^{-2\ell-1}\varepsilon_\ell.
\end{split}
\end{equation}
Thus the event $A_i$ implies a nontrivial odd correlator between $I_i$ and $I_i'$. Since both $\ell$ and $\varepsilon_\ell$ are $O(1)$, this is an $O(1)$ lower bound. However, $I_i'$ has support comparable to the system size, so this correlator is not yet an order parameter in the sense of Definition.~\ref{def: order-dirty}.

To convert it into a genuine order parameter for the state $|\Omega\ra$, we use the following lemma.
\begin{lemma}[Extension of Lemma 3, Ref.~\cite{Levin_2020}]\label{lem:Levin3}
Let $O_1,O_1',O_2,O_2'$ be operators supported on intervals $I_1,I_1',I_2,I_2'$ respectively, all either odd or even under $S$, with norm at most $1$. Define
\begin{equation}
    \ell_{\min}=\min\bigl(\dist(I_1,I_1'),\dist(I_1',I_2'),\dist(I_1,I_2)\bigr).
\end{equation}
Then
\begin{equation}
    \left|\langle O_1 O_2\rangle_\Omega\right| \ge \left|\langle O_1 O_1'\rangle_\Omega\right| \left|\langle O_2 O_2'\rangle_\Omega\right| - f(\ell_{\min}),
\end{equation}
and
\begin{equation}
    \begin{split}
    &\left|\left\langle O_1 O_2 S_{J\cup I_2}\right\rangle_\Omega\right|\\
     \ge &\left|\left\langle O_1 O_1' S_{I_1^+}\right\rangle_\Omega\right| \left|\left\langle O_2 O_2' S_{I_2^+}\right\rangle_\Omega\right| - f(\ell_{\min}),
    \end{split}
\end{equation}
where $f(\ell)=\mathrm{poly}(\ell,\epsilon^{-1})e^{-c\epsilon \ell}$, $J$ is the interval between $I_1$ and $I_2$ in the clockwise direction, and $I_{1(2)}^+$ is the interval between $I_{1(2)}$ and $I_{1(2)}'$ in the clockwise direction. 
\end{lemma}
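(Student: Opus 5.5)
The plan is to follow Levin's proof of Lemma 3 in Ref.~\cite{Levin_2020}, extended to cover both odd and even operators and the string-dressed version. The key idea is to express $O_1O_2$ (or $O_1O_2S_{J\cup I_2}$) through the intermediate operators $O_1'$ and $O_2'$, using the fact that a gapped symmetric state approximately factorizes across large distances once parity is accounted for.

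\textbf{Step 1: the order-type inequality.} Consider the product $\langle O_1O_1'\rangle_\Omega\langle O_2O_2'\rangle_\Omega$. The operators $O_1O_1'$ and $O_2O_2'$ are both even under $S$, since each is a product of two operators of the same parity. Their supports, $I_1\cup I_1'$ and $I_2\cup I_2'$, are separated by a distance controlled by $\ell_{\min}$ in the relevant geometry. A two-interval version of Lemma~\ref{lem:factorization} (cluster decomposition for even operators) then gives
\begin{equation}
\langle O_1O_1'O_2O_2'\rangle_\Omega\approx\langle O_1O_1'\rangle_\Omega\langle O_2O_2'\rangle_\Omega
\end{equation}
up to an error $f$.

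Next I regroup the four-point function as $\langle (O_1O_2)(O_1'O_2')\rangle_\Omega$, commuting operators or picking up only signs, since the supports are disjoint and the parities match. I then bound it by Cauchy--Schwarz. The natural route is to insert the projector onto $|\Omega\rangle$ between $O_1O_2$ and $O_1'O_2'$: the ground-state contribution is $\langle O_1O_2\rangle_\Omega\langle O_1'O_2'\rangle_\Omega$, whose modulus is at most $|\langle O_1O_2\rangle_\Omega|$. The remaining excited-state contribution must be shown to be small.

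\textbf{Step 2: controlling the excited-state part.} This is the main obstacle, and I would handle it as Levin does. I would use quasi-adiabatic filtering to write $O_1'O_2'|\Omega\rangle\approx$ (a product of locally filtered operators)$|\Omega\rangle$, and then apply the exponential clustering theorem in the even sector, with its $\mathrm{poly}(\ell,\epsilon^{-1})e^{-c\epsilon\ell}$ form. The goal is to show that the overlap between $(O_1O_2)^\dagger|\Omega\rangle$ projected off the ground state and $O_1'O_2'|\Omega\rangle$ is bounded by $f(\ell_{\min})$.

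An alternative route, which I would try first, is to write $|\langle O_1O_2\rangle|\ge|\langle O_1O_1'O_2O_2'\rangle|-\text{err}$ directly. This would use a factorization across the cut separating $I_1\cup I_2$ from $I_1'\cup I_2'$. In a ring geometry, however, that region is not a single interval, which is why all three distances enter $\ell_{\min}$.

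\textbf{Step 3: the disorder-type inequality.} For the string version, I would reduce to the same argument by conjugation. The operators $O_1O_1'S_{I_1^+}$ and $O_2O_2'S_{I_2^+}$ play the roles of the even composites. Their product equals $O_1O_2S_{J\cup I_2}$ times $O_1'O_2'$ times a symmetry string supported between $I_1'$ and $I_2'$, up to signs. I then use the global symmetry $S|\Omega\rangle=\pm|\Omega\rangle$ to replace that string by its complement whenever convenient. With this identity in hand, Steps 1–2 apply verbatim, with the even-operator clustering applied to string-dressed operators. This is legitimate because $S_X$ restricted to an interval acts like a locally supported operator up to multiplication by the global $S$. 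The error remains $f(\ell_{\min})$, with $\mathrm{poly}$ prefactors absorbed into $f$.
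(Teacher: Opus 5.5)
Your string-free argument is essentially sound, provided all four intervals are pairwise far apart. Your ``alternative route'' already closes it with no filtering: $|\langle O_1O_1'\rangle\langle O_2O_2'\rangle|\le|\langle O_1O_1'O_2O_2'\rangle|+\mathrm{err}\le|\langle O_1O_2\rangle|+2\,\mathrm{err}$. The first step is Lemma~\ref{lem:factorization} applied to the disjoint intervals $[I_1,I_1']$ and $[I_2,I_2']$. The second needs clustering of even operators with interleaved two-interval supports; this is not literally Lemma~\ref{lem:factorization}, but it follows from the even-sector Hastings--Koma argument.

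\textbf{The gap is in Step 3.} Your regrouping is correct. With $C_k=O_kO_k'S_{I_k^+}$ one has $C_1C_2=AB$, where $A=O_1O_2S_{J\cup I_2}$, $B=(S_{I_1'}O_1')\,S_{J'}\,O_2'$, and $J'$ is the interval between $I_1'$ and $I_2'$. The second step would then need $|\langle AB\rangle_\Omega|\le|\langle A\rangle_\Omega|+\text{small}$. In the lemma's geometry (clockwise $I_1,I_1^+,I_1',\dots,I_2,I_2^+,I_2',\dots$) the two strings are interleaved: the string of $A$ covers $I_1'$ and the string of $B$ covers $I_2$. Multiplying by the global $S$ only trades a string for its complement, which then covers $I_2'$ or $I_1$. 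No choice separates the supports, so no clustering theorem bounds the excited-state term $\langle A(1-|\Omega\rangle\langle\Omega|)B\rangle_\Omega$, and Cauchy--Schwarz gives nothing useful. Your justification, that a restricted $S_X$ is ``locally supported up to the global $S$'', is not a kinematic fact. That $S_X|\Omega\rangle$ is close to a state dressed only near $\partial X$ is a property of the state, essentially what a disorder parameter asserts, and it fails, e.g., in an ordered state.

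\textbf{The fix} is Lemma~\ref{lem:Levin_prop1}, used at the level of state vectors rather than expectation values; this is the mechanism behind the Levin argument the paper defers to. Its projectors $P_k=P_{\tilde I_k}(\ell)$ are even and depend only on the regions. Set $X_k=P_kP_{k'}C_k$, $Z=P_1P_2A$ and $Z'=P_{1'}P_{2'}B$. Then, up to $O(\sqrt{g})$, $X_k|\Omega\rangle\approx\langle C_k\rangle|\Omega\rangle$, $Z|\Omega\rangle\approx\langle A\rangle|\Omega\rangle$ and $Z'|\Omega\rangle\approx\langle B\rangle|\Omega\rangle$. Because the $P$'s are even and local, they commute with operators on disjoint supports and with any restricted string covering their support (e.g.\ $P_{1'}$ with $S_{J\cup I_2}$). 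Hence $X_2X_1=ZZ'$ exactly, and so $\langle C_1\rangle\langle C_2\rangle|\Omega\rangle\approx X_2X_1|\Omega\rangle=ZZ'|\Omega\rangle\approx\langle A\rangle\langle B\rangle|\Omega\rangle$. This gives $|\langle A\rangle|\ge|\langle C_1\rangle\langle C_2\rangle|-O(\sqrt{g})$. At no point do $A$ and $B$ need separated supports, which is how the crossing strings are handled. The same argument covers the order inequality. The entries of $\ell_{\min}$, including $\dist(I_1',I_2')$, are the separations of the pairs fed into Lemma~\ref{lem:Levin_prop1}; they do not come from the cut being disconnected, as you suggest.
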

The proof of this extension is essentially identical to that of Lemma 3 in Ref.~\cite{Levin_2020}, with the generalization that we do not impose constraints on the separation between regions $I_i$ and $I_i'$. We therefore omit the details here.

Now suppose two events $A_i$ and $A_j$ occur with $\dist(i,j)\ge \ell'$. We choose $O_i,O_i'$ to be the odd operators $X_-,Y_-$ obtained above for site $i$, and similarly define $O_j,O_j'$ for site $j$. If $\ell'$ is chosen large enough that
\begin{equation}
    f(\ell') \le \frac{(2^{-2\ell-1}\varepsilon_\ell)^2}{2},
\end{equation}
which is possible if one takes $\ell'/\ell=O(1)$ sufficiently large since $f(\ell')$ decays exponentially, then Lemma~\ref{lem:Levin3} yields odd operators $O_i$ and $O_j$ such that
\begin{equation}
    \left|\langle O_i O_j\rangle_\Omega\right|
    \ge 2^{-4\ell-3}\varepsilon_\ell^2.
\end{equation}
Therefore, whenever two well-separated events $A_i$ and $A_j$ occur, the state exhibits an order parameter between the corresponding regions.

\subsubsection{The events for disordering: $\{B_i\}$}

We first summarize the physical content of $B_i$ before giving the rigorous argument. Informally, if $B_i$ occurs, the state has a nontrivial disorder parameter on $I_i^-$ and $I_i^+$, signaling that the system is disordered near site $i$. Moreover, if $B_{i,j,k}$ all occur for three well-separated sites $i$, $j$, $k$, the local disordering near those sites induces a disorder parameter on at least two of the three regions $I_{i,j,k}^-$.

To see that the system is disordered near site $i$ when $B_i$ occurs, recall that if $B_i$ occurs, then
\begin{equation}
    \max_{\supp(U)\subset I_i^-\cup I_i^+} |\langle \Omega_\omega|US_{I_i}|\Omega_\omega\rangle|\geq \frac{1}{2}\;.
\end{equation}
To single out the disorder parameters with definite parity, we can decompose $U$ into components with even and odd parity on $I_i^\pm$, \ie
\begin{equation}
    U_{\pm} \equiv \frac{1}{4}\left(U_*+S U_* S\right) \pm \frac{1}{4} S_{I_i^+}\left(U_*+S U_* S\right) S_{I_i^+}\;,
\end{equation}
where $U_*$ is the maximal choice of $U$. By assumption,
$\left|\left\langle\left(U_{+}+U_{-}\right) S_{I_i}\right\rangle_{\Omega_\omega}\right| \geq 1/2$. Hence either
$\left|\left\langle U_{+} S_{I_i}\right\rangle_{\Omega_\omega}\right| \geq 1/4$ or
$\left|\left\langle U_{-} S_{I_i}\right\rangle_{\Omega_\omega}\right| \geq 1/4$.
Let $U_a$ denote whichever of $U_+$ or $U_-$ gives the larger expectation value.
It is straightforward to check that $U_a$ is supported on $I_i^-\cup I_i^+$,
has definite even/odd parity under $S_{I_i^-}$ and $S_{I_i^+}$, and satisfies
$\|U_a\|\leq 1$. Thus $|\Omega_\omega\rangle$ is 1/4 weakly-disordered on $I_i^-, I_i^+$. By Lemma~\ref{lem:Levin2}, if $\ell$ is chosen large enough that $\ell\geq \lambda_{1/4}$, then $|\Omega_\omega\rangle$ is $1/8$ strongly-disordered on $I_i^-, I_i^+$ with the same parity as $U_a$.

Similarly to the case for $\{A_i\}$, we can upgrade this local disordering to a non-local disordering when multiple events occur at the same time. Suppose three events $B_{i,j,k}$ occur with the minimum distance between any two events larger than $\ell$, by the pigeonhole principle, there must be two events, say $B_i$ and $B_j$, such that $O_i^\pm$ and $O_j^\pm$ share the same parity. We can then apply Lemma.~\ref{lem:Levin3} to obtain a disorder parameter between $I_i^-$ and $I_j^-$. Concretely, if $\ell$ is chosen large enough that
\begin{equation}
    f(\ell) \le \frac{1}{128},
\end{equation}
then Lemma~\ref{lem:Levin3} implies for operators $O_i^-$ and $O_j^-$,
\begin{equation}
\begin{split}
    &\left|\langle O_i^- O_j^- S_{J\cup I_j^-}\rangle_\Omega\right| \\
    \ge&
    \left|\langle O_i^- O_i^+ S_{I_i}\rangle_\Omega\right|
    \left|\langle O_j^- O_j^+ S_{I_j}\rangle_\Omega\right|
    - f(\ell)\ge\frac{1}{128}.
\end{split}
\end{equation}
Thus the occurrence of three well-separated events $B_i$, $B_j$ and $B_k$ implies the existence of a disorder parameter between at least two of the three regions $I_i^-, I_j^-, I_k^-$.

\subsection{Incompatibility of the events $\overline{A_i}$ and $\overline{B_i}$}\label{subsec: incompatibility}
We now show that the events $\overline{A_i}$ and $\overline{B_i}$ are not compatible with each other, \ie at least one of $A_i$ and $B_i$ must happen. The proof is identical to that in Ref.~\cite{Levin_2020}, and we include it here for completeness.

We prove by contradiction. Assume that both $\overline{A_i}$ and $\overline{B_i}$ occur for a specific state $|\Omega\rangle$. Consider another state $|\Omega'\rangle=S_{I_i}|\Omega\rangle$ and consider their reduced density matrix on $X=I_i\cup I_i'$. By the Fuchs-van de Graaf inequality~\cite{fuchs1998cryptographicdistinguishabilitymeasuresquantum}
\begin{equation}\label{eq:FuchsGraaf}
    \frac{1}{2}\onenorm{\Omega_X-\Omega_X'}+F(\Omega_X,\Omega_X')\geq 1\;,
\end{equation}
where the fidelity $F(\rho,\sigma)\coloneqq\left(\tr\sqrt{\sqrt{\rho}\sigma\sqrt{\rho}}\right)^2$.

For the first term, note that
\begin{equation}
\begin{split}
    \onenorm{\Omega_X-\Omega_X'}&=\max_{\supp(A)\subset X} |\langle\Omega|A|\Omega\rangle-\langle\Omega'|A|\Omega'\rangle|\\
    &=\max_{\supp(A)\subset X} |\langle\Omega|A-S_{I_i}AS_{I_i}|\Omega\rangle|
\end{split}
\end{equation}
and the second term can be related to event $B_i$ by considering the Uhlmann's theorem~\cite{UHLMANN1976273}, \ie
\begin{equation}
\begin{split}
    F(\Omega_X,\Omega_X')&=\max_{\supp(U)\subset I_i^-\cup I_i^+}|\langle\Omega|U|\Omega'\rangle\\
    &=\max_{\supp(U)\subset I_i^-\cup I_i^+} |\langle \Omega|US_{I_i}|\Omega\rangle|\;.
\end{split}
\end{equation}

Suppose that event $B_i$ does not occur, \ie
\begin{equation}
    \max_{\supp(U)\subset I_i^-\cup I_i^+} |\langle \Omega|US_{I_i}|\Omega\rangle|<\frac{1}{2}\;
\end{equation}
by \eqref{eq:FuchsGraaf}, we have~\cite{Levin_2020}:
\begin{equation}
    \max _{\operatorname{supp}(A) \subset I_i \cup I_i'}\left|\left\langle\frac{1}{2}\left(A-S_{I_i} A S_{I_i}\right)\right\rangle_\Omega\right|>\frac{1}{2}
\end{equation}
namely, $|\Omega\rangle$ is $1/2$ weakly-ordered on $I_i$, $I_i'$. As demonstrated in Lemma.~\ref{lem:Levin2}, if we take $\ell'\geq \lambda_{1/4}$, then $|\Omega\rangle$ is $1/4$ strongly-ordered on $I_i$, $I_i'$, \ie there exist odd operators $O_i$ (resp. $O_i'$), supported on $I_i$ (resp. $I_i'$) such that 
\begin{equation}
    \left|\left\langle O_i O_i'\right\rangle_\Omega\right| >\frac{1}{4} .
\end{equation}
Note that the mutual information is lower bounded by the correlation function~\cite{2008PhRvL.100g0502W}, thus
\begin{equation}
    I_\Omega(I_i:I_i')\geq\frac{1}{2}\left|\left\langle O_i O_i'\right\rangle_\Omega\right|^2>\frac{1}{32}
\end{equation}
However, if $A_i$ does not occur, then $I_\Omega(I_i:I_i')\leq\frac{1}{32}$, a contradiction. Therefore, $\overline{A_i}$ and $\overline{B_i}$ cannot occur simultaneously, which completes the proof.

\subsection{Trade-off between order and disorder}\label{subsec:tradeoff_proof}
We are now ready to show that for any gapped ensemble $\{|\Omega_\omega\rangle\}$, there must be either an $O(1)$ order parameter or a disorder parameter. Concretely, we will assume that the system has neither a $(\eta,\ell)$ order parameter nor a $(\eta',\ell')$ disorder parameter, and obtain a trade-off between $\eta$ and $\eta'$. By choosing $\ell$ and $\ell'$ appropriately, we can show that at least one of $\eta$ and $\eta'$ must be $O(1)$.

The key observation is that if both $\eta$ and $\eta'$ are small, then the system is neither ordered nor disordered, and hence the probability of the events $A_i$ and $B_i$ must be small. By the union bound, the probability that least one event $A_i$ or $B_i$ occurs is also small. However, we have shown above that at least one of $A_i$ and $B_i$ must occur for each $i$, so
\begin{equation}
    \mathbb{P}(A_i)+\mathbb{P}(B_i)\ge 1\;,
\end{equation}
leading to a contradiction.

In the following, we estimate $\mathbb{P}(A_i)$ and $\mathbb{P}(B_i)$ in terms of $\eta$ and $\eta'$. The estimation is based on the LPPL principle, which implies that different events $A_i$ (or $B_i$) are independent if they are sufficiently separated. We also make use of the translational invariance of the disorder ensemble, so $\mathbb{P}(A_i)$ and $\mathbb{P}(B_i)$ are independent of $i$. 

\subsubsection{Estimation of $\mathbb{P}(A_i)$}

For the estimation of $\mathbb{P}(A_i)$, consider two events $A_i$ and $A_j$ with $\dist(i,j)\gg \ell'$ such that they are independent. Suppose both $A_i$ and $A_j$ occur for a random disorder realization $\omega$. Recall that, as shown in Sec.~\ref{subsubsec: event A}, this implies that the state $|\Omega_\omega\rangle$ has an order parameter between $I_i$ and $I_j$. More explicitly, there exist odd operators $O_i[\omega]$ and $O_j[\omega]$ supported on $I_i$ and $I_j$ respectively, that depend on the disorder realization $\omega$, such that
\begin{equation}
    \left|\langle O_i[\omega] O_j[\omega]\rangle_{\Omega_\omega}\right| \ge 2^{-4\ell-3}\varepsilon_\ell^2.
\end{equation}
For small $\eta$, the probability that such an order parameter exists is correspondingly small. More concretely, by the definition of order parameter, since the ensemble does not have an $(\eta,\ell)$ order parameter, for any fixed choice of $O_i$ and $O_j$, we have
\begin{equation}
    \mathbb{E}_\omega \left[\left|\langle O_i O_j\rangle_{\Omega_\omega}\right|\right] < \eta.
\end{equation}
Recall Markov's inequality: if $X$ is a nonnegative random variable and $a>0$, then 
\begin{equation}
\mathbb{P}(X\ge a)\le \mathbb{E}[X]/a.
\end{equation} 
This can be verified by noting that 
\begin{equation}
    \begin{split}
\mathbb{E}[X]
= &\mathbb{E} \big[X\mathbf{1}_{\{X<a\}}\big] + \mathbb{E} \big[X \mathbf{1}_{\{X\ge a\}}\big]
\ge a \mathbb{E} \big[\mathbf{1}_{\{X\ge a\}}\big]\\
=& a \mathbb{P}(X\ge a).
    \end{split}
\end{equation}
For any operators $O_i$ and $O_j$, applying Markov's inequality to the random variable $X=|\langle O_i O_j\rangle_{\Omega_\omega}|$ with threshold 
\begin{equation}
a=\varepsilon_0\coloneqq2^{-4\ell-3}\varepsilon_\ell^2/3, 
\end{equation}
we have
\begin{equation}
\mathbb{P}\left(\left|\langle O_i O_j\rangle_{\Omega_\omega}\right| \ge \varepsilon_0\right) < \frac{\eta}{\varepsilon_0}=\frac{2^{4\ell+3}\cdot 3\eta}{\varepsilon_\ell^2}.
\end{equation}

Thus, for any fixed odd operators $O_i$ and $O_j$, the probability that they form an order parameter is upper bounded. On the other hand, if $A_i$ and $A_j$ occur, there must exist odd operators $O_i[\omega]$ and $O_j[\omega]$ that form an order parameter, though these operators may depend on the disorder realization $\omega$. Therefore, to estimate $\mathbb{P}(A_i)$, we need to take a union bound over all possible choices of $O_i$ and $O_j$. 

To this end, let $\mathcal{N}_i$ be an $\varepsilon_0$-net~\cite{papaspiliopoulos2020high} for local operators on $I_i$: a finite subset of the operator unit ball such that every local operator $O$ on $I_i$ with $\opnorm{O}\leq 1$ lies within operator-norm distance $\varepsilon_0$ of some $O'\in\mathcal{N}_i$ (see Appendix~\ref{app:epsilon_net}). The key observation is that whenever $A_i\cap A_j$ occurs, the $\omega$-dependent witnesses $O_i[\omega]$, $O_j[\omega]$ with $|\langle O_i[\omega] O_j[\omega]\rangle_{\Omega_\omega}|\ge 3\varepsilon_0$ are each approximated by some \emph{fixed} net element $O_i'\in\mathcal{N}_i$, $O_j'\in\mathcal{N}_j$ satisfying $\opnorm{O_i'-O_i[\omega]},\opnorm{O_j'-O_j[\omega]}\le\varepsilon_0$. By the triangle inequality, this fixed pair still witnesses the order parameter:
\begin{equation}
    |\langle O_i' O_j'\rangle_{\Omega_\omega}| \ge |\langle O_i[\omega] O_j[\omega]\rangle_{\Omega_\omega}| - 2\varepsilon_0 \ge \varepsilon_0.
\end{equation}
Thus $A_i\cap A_j$ implies that some pair in $\mathcal{N}_i\times\mathcal{N}_j$ achieves $|\langle O_i' O_j'\rangle_{\Omega_\omega}|\ge\varepsilon_0$. Since $|I_i|=\ell$, as shown in  Appendix~\ref{app:epsilon_net}, each net has size 
\begin{equation}
N_\ell(\varepsilon_0)\leq\bigl(1+\frac{2^{\ell/2+1}}{\varepsilon_0}\bigr)^{2^{2\ell+1}}
\end{equation}
independently of the site index, a union bound over all $N_\ell(\varepsilon_0)^2$ pairs gives
\begin{equation}
    \begin{split}
        \mathbb{P}(A_i\cap A_j) &\le \mathbb{P}\!\left(\exists\, O_i'\in\mathcal{N}_i,\, O_j'\in\mathcal{N}_j:\, \bigl|\langle O_i' O_j'\rangle_{\Omega_\omega}\bigr|\ge\varepsilon_0\right) \\
        &\le \frac{2^{4\ell+3}\cdot 3\eta}{\varepsilon_\ell^2}\,N_\ell(\varepsilon_0)^2.
    \end{split}
\end{equation}

Since $A_i$ and $A_j$ are independent, we have $\mathbb{P}(A_i\cap A_j)=\mathbb{P}(A_i)^2$, so
\begin{equation}
    \mathbb{P}(A_i)\;<\;2^{2\ell+1} \sqrt{6\eta}\;\frac{N_\ell\left(2^{-4\ell-3}\varepsilon_\ell^{ 2}/3\right)}{\varepsilon_\ell} .
\end{equation}

\subsubsection{Estimation of $\mathbb{P}(B_i)$}
We now estimate $\mathbb{P}(B_i)$. As in the $A_i$ case, we consider three independent events $B_i$, $B_j$, $B_k$ with $\dist(i,j),\dist(j,k),\dist(i,k)\gg\ell$. If all three occur, there exist $\omega$-dependent disorder-parameter witnesses in at least two of the regions $I_i^-$, $I_j^-$, $I_k^-$; taking these to be $I_i^-$ and $I_j^-$ without loss of generality, there exist operators $O_i^-[\omega]$ and $O_j^-[\omega]$ supported on $I_i^-$ and $I_j^-$ respectively such that
\begin{equation}
    \left|\langle O_i^-[\omega] O_j^-[\omega] S_{J\cup I_j^-}\rangle_{\Omega_\omega}\right| \ge \frac{1}{128},
\end{equation}
where $J$ is the interval between $I_i^-$ and $I_j^-$ in the clockwise direction. Since the ensemble has no $(\eta',\ell')$ disorder parameter, for any fixed $O_i^-$, $O_j^-$,
\begin{equation}
    \mathbb{E}_\omega\!\left[\left|\langle O_i^- O_j^- S_{J\cup I_j^-}\rangle_{\Omega_\omega}\right|\right] < \eta',
\end{equation}
and Markov's inequality with threshold $\varepsilon'_0\coloneqq 1/384$ gives
\begin{equation}
    \mathbb{P}\!\left(\left|\langle O_i^- O_j^- S_{J\cup I_j^-}\rangle_{\Omega_\omega}\right| \ge \varepsilon'_0\right) < 384\eta'.
\end{equation}
As in the $A_i$ case, the witnesses $O_i^-[\omega]$, $O_j^-[\omega]$ depend on $\omega$, so we introduce $\varepsilon'_0$-nets $\mathcal{N}_i^-$, $\mathcal{N}_j^-$ for local operators on $I_i^-$, $I_j^-$, each of size 
\begin{equation}
N_{\ell'}(\varepsilon'_0)\leq\bigl(1+\frac{2^{\ell'/2+1}}{\varepsilon'_0}\bigr)^{2^{2\ell'+1}}.
\end{equation}
Whenever $B_i\cap B_j\cap B_k$ occurs, the witnesses are approximated by fixed net elements $O_i^{-\prime}\in\mathcal{N}_i^-$, $O_j^{-\prime}\in\mathcal{N}_j^-$ with $\opnorm{O_i^{-\prime}-O_i^-[\omega]},\opnorm{O_j^{-\prime}-O_j^-[\omega]}\le\varepsilon'_0$, and the triangle inequality gives
\begin{equation}
    \left|\langle O_i^{-\prime} O_j^{-\prime} S_{J\cup I_j^-}\rangle_{\Omega_\omega}\right| \ge \frac{1}{128} - 2\varepsilon'_0 = \varepsilon'_0.
\end{equation}
Thus $B_i\cap B_j\cap B_k$ implies that some pair in $\mathcal{N}_i^-\times\mathcal{N}_j^-$ achieves threshold $\varepsilon'_0$. A union bound over all $N_{\ell'}(\varepsilon'_0)^2$ pairs, combined with independence $\mathbb{P}(B_i\cap B_j\cap B_k)=\mathbb{P}(B_i)^3$, gives
\begin{equation}
    \mathbb{P}(B_i) < \left(384\eta'\, N_{\ell'}\!\left(\tfrac{1}{384}\right)^2\right)^{1/3}.
\end{equation}

\subsubsection{Main trade-off result}\label{subsubsec:tradeoff-gapped}
We are now ready to prove our main trade-off result.
\begin{proposition}\label{prop:tradeoff}
    Let $f$, $g$, $\varepsilon_\ell$, $N_\ell$, and $\lambda_{1/4}$ be as defined below. A gapped ensemble $\{|\Omega_\omega\rangle\}$ has either an $(\eta,\ell)$ order parameter or an $(\eta',\ell')$ disorder parameter, provided $\ell$ and $\ell'$ satisfy
    \begin{enumerate}
        \item $f(\ell) \le \frac{1}{128}$,
        \item $\ell'\geq \lambda_{1/4}$,
        \item $f(\ell') \le \frac{(2^{-2\ell-1}\varepsilon_\ell)^2}{2}$,
        \item $(6+\sqrt{2})\sqrt{g(\ell')} + g(\ell') \le 2^{-2\ell-1}\varepsilon_\ell$.
    \end{enumerate}
    Moreover, $\eta$ and $\eta'$ satisfy the trade-off relation
    \begin{equation}
    2^{2\ell+1} \sqrt{6\eta}\;\frac{N_\ell\!\left(2^{-4\ell-3}\varepsilon_\ell^{2}/3\right)}{\varepsilon_\ell} + \left(384\eta' N_{\ell'}\!\left(\tfrac{1}{384}\right)^{\!2}\right)^{1/3} \geq 1.
    \end{equation}
    Here the auxiliary quantities are defined as follows. The functions $f$ and $g$ decay exponentially,
    \begin{align}
        f(\ell) &= \mathrm{poly}(\ell,\epsilon^{-1})\,e^{-c\epsilon\ell}, \\
        g(\ell) &= \mathrm{poly}(\ell,\epsilon^{-1})\,e^{-c_1\epsilon\ell},
    \end{align}
    where $c, c_1>0$ are numerical constants and $\epsilon$ is the uniform gap lower bound for the ensemble. The quantity $\varepsilon_\ell\in(0,1)$ solves
    \begin{equation}
        \ell+2H_2(\varepsilon_\ell/2)=\frac{1}{32},
    \end{equation}
    $N_\ell(\varepsilon)$ satisfies
    \begin{equation}
        N_\ell(\varepsilon)\leq \left(1+\frac{2^{\ell/2+1}}{\varepsilon}\right)^{2^{2\ell+1}},
    \end{equation}
    and $\lambda_{1/4}=\tilde{O}(1/\varepsilon^2)$ depends only on the gap.
\end{proposition}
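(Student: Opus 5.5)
The plan is to argue by contraposition. Assume the ensemble has neither an $(\eta,\ell)$ order parameter nor an $(\eta',\ell')$ disorder parameter, and derive an upper bound on $\mathbb{P}(A_i)+\mathbb{P}(B_i)$ that must nonetheless be at least $1$. First I fix $\ell$ large enough that condition 1 holds, so that three co-occurring $B$ events yield a $1/128$ disorder witness through Lemma~\ref{lem:Levin3}. Then I choose $\ell'=O(\ell)$ large enough that conditions 2--4 hold. Condition 2 makes Lemma~\ref{lem:Levin2} applicable both to the weak-order conclusion from the Fuchs--van de Graaf argument and to the weak-disorder conclusion from $B_i$. Condition 4 lets the even connected correlator in Eq.~\eqref{eq:even_odd_decomp} be absorbed via Lemma~\ref{lem:factorization}. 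Condition 3 makes the Lemma~\ref{lem:Levin3} error negligible when two $A$ events are combined. Since $f$ and $g$ decay exponentially while $\varepsilon_\ell$ and $2^{-2\ell}$ are fixed once $\ell$ is fixed, such an $\ell'$ exists and is independent of system size.

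Next comes the pointwise incompatibility of Sec.~\ref{subsec: incompatibility}: for every realization $\omega$ and every $i$, at least one of $A_i$ and $B_i$ occurs, so $\mathbb{P}(A_i)+\mathbb{P}(B_i)\ge \mathbb{P}(A_i\cup B_i)=1$. By average translation symmetry together with LPPL, each of these probabilities is independent of $i$. The events also depend only on the disorder variables in an $O(\ell)$ neighbourhood of $i$. Hence, for sites $i,j,k$ separated by much more than $\ell+\ell'+\xi$ (possible once $L$ is large), the events are independent, and we get $\mathbb{P}(A_i\cap A_j)=\mathbb{P}(A_i)^2$ and $\mathbb{P}(B_i\cap B_j\cap B_k)=\mathbb{P}(B_i)^3$.

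I then import the two estimates just derived. For the $A$ events, the $\varepsilon_0$-net union bound combined with Markov's inequality gives $\mathbb{P}(A_i)< 2^{2\ell+1}\sqrt{6\eta}\,N_\ell(\varepsilon_0)/\varepsilon_\ell$. For the $B$ events, the pigeonhole argument over three events, the $1/384$-net, and Markov's inequality give $\mathbb{P}(B_i)<(384\eta' N_{\ell'}(1/384)^2)^{1/3}$. Substituting both into $\mathbb{P}(A_i)+\mathbb{P}(B_i)\ge1$ yields the stated trade-off relation. Consequently, if $\eta$ and $\eta'$ were both below the thresholds set by these $O(1)$ constants, we would reach a contradiction. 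So at least one parameter exceeds an $O(1)$ value, which is the existence claim in Theorem~\ref{thm:tradeoff}.

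The step I expect to be the main obstacle is the bookkeeping of supports and lengths. The disorder witness from the $B$ events lives on $I_i^-$ with length $\ell'$ rather than $\ell$, so the disorder parameter produced has length $\ell'$; this is why the net size $N_{\ell'}$ appears. The separations used in Lemma~\ref{lem:Levin3} must also genuinely exceed $\ell$ and $\ell'$ respectively. A second point to check is that the union over the pigeonhole choice of which pair among $(i,j,k)$ carries matching parities is harmless. Each of the three pairs is covered by the same Markov/net bound, so at worst this contributes a constant factor of $3$, which can be absorbed into the $O(1)$ constants $c_\eta, c_\eta'$; the stated inequality treats it as already absorbed.
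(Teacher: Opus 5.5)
Your route is the paper's own: argue by contradiction, use the pointwise incompatibility to get $\mathbb{P}(A_i)+\mathbb{P}(B_i)\ge 1$, and invoke independence of well-separated events via LPPL and average translation symmetry. You then bound $\mathbb{P}(A_i)^2$ and $\mathbb{P}(B_i)^3$ by Markov's inequality and the $\varepsilon$-net union bound. Your roles for conditions 1, 3 and 4 match the paper.

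One length-bookkeeping claim is wrong. You say condition 2 ($\ell'\ge\lambda_{1/4}$) makes Lemma~\ref{lem:Levin2} applicable to the weak-disorder conclusion from $B_i$. That weak disorder lives on $I_i^-$ and $I_i^+$, which are separated only by $I_i$, so their distance is $\ell$, not $\ell'$. The upgrade to $1/8$ strong disorder therefore needs $\ell\ge\lambda_{1/4}$, and none of conditions 1--4 supplies this. Condition 1 only forces $\ell\gtrsim\epsilon^{-1}\log(\cdots)$, while $\lambda_{1/4}$ scales as $\tilde{O}(\epsilon^{-2})$ in the gap.

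The paper's derivation of the $B_i$ estimate does explicitly assume $\ell\ge\lambda_{1/4}$. The proposition's list records only $\ell'\ge\lambda_{1/4}$, which is what the incompatibility step needs, since $I_i$ and $I_i'$ are $\ell'$ apart. The fix is to add $\ell\ge\lambda_{1/4}$, which implies condition 2 because $\ell'\ge\ell$.

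Your remark about the pigeonhole choice among the three pairs is correct; the paper skips it with ``without loss of generality.'' Strictly, it replaces $384$ by $3\cdot 384$ inside the cube root. So it is absorbed into $c_\eta'$ of Theorem~\ref{thm:tradeoff}, but not into the explicit inequality of the proposition as displayed.
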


\begin{proof}
    We prove by contradiction. Assume that the system has neither an $(\eta,\ell)$ order parameter nor an $(\eta',\ell')$ disorder parameter, then the probability of the events $A_i$ and $B_i$ are upper bounded:
    \begin{align}
        \mathbb{P}(A_i) &< 2^{2\ell+1} \sqrt{6\eta}\;\frac{N_\ell\left(2^{-4\ell-3}\varepsilon_\ell^{ 2}/3\right)}{\varepsilon_\ell}, \\
        \mathbb{P}(B_i) &< \left(384\eta' N_{\ell'}(\frac{1}{384})^2\right)^{1/3}.
    \end{align}

    However, since for any $i$ at least one of $A_i$ and $B_i$ must occur, we have $\mathbb{P}\left(A_i\right)+\mathbb{P}\left(B_i\right) \ge 1$, thus

    \begin{equation}
    2^{2\ell+1} \sqrt{6\eta}\;\frac{N_\ell\left(2^{-4\ell-3}\varepsilon_\ell^{ 2}/3\right)}{\varepsilon_\ell} + \left(384\eta' N_{\ell'}(\frac{1}{384})^2\right)^{1/3}>1,
    \end{equation}
    a contradiction. Therefore, the system must have either an $(\eta,\ell)$ order parameter or an $(\eta',\ell')$ disorder parameter, which completes the proof.
\end{proof}
It is worth noting that the conditions on $\ell$ and $\ell'$ can be satisfied simultaneously when $\ell$ and $\ell'$ are sufficiently large $O(1)$ constants, since $g(\ell)$ and $f(\ell)$ decay exponentially. In this case, all factors in the trade-off relation are $O(1)$ constants, so at least one of $\eta$ and $\eta'$ must be $O(1)$, which means that the system must have either an $O(1)$ order parameter or an $O(1)$ disorder parameter.
\subsection{Mutually exclusivity of order and disorder}\label{subsec:exclusion}

We have already shown that a gapped ensemble must have either an order parameter or a disorder parameter. We now show that these two possibilities are mutually exclusive: the existence of an order parameter rules out the existence of a disorder parameter. Therefore, a gapped ensemble has exactly one of the two.

To this end, we recall that such constraints have been shown for the clean case without disorder average in Ref.~\cite{Levin_2020}, as demonstrated by the following Lemma:
\begin{lemma}[Theorem 3, Ref.~\cite{Levin_2020}]\label{lem:Levin_thm3}
    If an Ising-symmetric ground state $|\Omega\rangle$ with gap $\epsilon$ in its symmetry sector has a $(\delta, \ell)$ order parameter defined at two points $i_1, i_3$ and a $(\delta, \ell)$ disorder parameter defined at two points $i_2, i_4$ with $i_1<i_2<i_3<i_4$, then
    \begin{equation}
        \min _{k, l}\left[\operatorname{dist}\left(i_k, i_l\right)\right] \leq 2 \ell+\tilde{\mathcal{O}}\left(\frac{\log \delta^{-1}}{\epsilon}\right)
    \end{equation}
\end{lemma}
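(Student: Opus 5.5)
The plan is to use the anticommutation between the order-parameter correlator and the disorder string, and to play it against a clustering bound. Fix the witnesses: odd operators $O_1,O_3$ supported near $i_1,i_3$ with $|\langle O_1O_3\rangle_\Omega|\ge\delta$, and operators $O_2,O_4$ near $i_2,i_4$ of equal parity with $|\langle O_2O_4S_J\rangle_\Omega|\ge\delta$, where $J$ is the interval between them. Set
\begin{equation}
Y\coloneqq O_1O_3,\qquad X\coloneqq O_2O_4S_J .
\end{equation}
Because $i_1<i_2<i_3<i_4$, the support of $O_3$ lies inside $J$ and the support of $O_1$ lies outside it. Suppose the four supports are pairwise disjoint, which holds once $\min_{k,l}\dist(i_k,i_l)>2\ell$. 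Then $O_3$ is odd and sits under the string, while $O_1$ commutes with $S_J$ and with $O_2,O_4$. Hence $XY=-YX$ exactly, and so
\begin{equation}
\langle XY\rangle_\Omega+\langle YX\rangle_\Omega=0 .
\end{equation}

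The second step is a factorization estimate for these mixed string correlators. Let $d$ be the minimal separation of the supports. I would show
\begin{equation}
\bigl|\langle XY\rangle_\Omega-\langle X\rangle_\Omega\langle Y\rangle_\Omega\bigr|\le f(d),
\end{equation}
with $f(d)=\mathrm{poly}(d,\epsilon^{-1})e^{-c\epsilon d}$, and the same bound for $YX$. Combining this with the anticommutation identity gives $2|\langle X\rangle||\langle Y\rangle|\le 2f(d)$, hence $\delta^2\le f(d)$. Inverting this yields $d\le\tilde{O}(\log\delta^{-1}/\epsilon)$. Adding the $2\ell$ needed for disjointness of supports gives the claimed bound on $\min_{k,l}\dist(i_k,i_l)$.

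The main obstacle is this factorization, since $X$ contains the nonlocal string $S_J$, which overlaps the support of $O_3$. I would first try to reduce it to Lemma~\ref{lem:Levin3} and Lemma~\ref{lem:factorization}, following Levin's strategy for Lemma~\ref{lem:Levin3}.
\begin{enumerate}
\item Split the string at a point between $i_2$ and $i_3$ and at a point between $i_3$ and $i_4$, writing $S_J=S_{J_1}S_{J_2}S_{J_3}$ with $J_2$ a buffer region around $O_3$.
\item Absorb $S_{J_2}$ into $O_3$. The operator $O_3S_{J_2}$ is again odd and local on a slightly enlarged interval.
\item This leaves a product of string segments whose endpoints all lie at distance $\gtrsim d$ from one another.
\item Use the gap in the even sector to treat each string segment $S_{J_a}$ acting on $|\Omega\rangle$ as approximately generated by operators localized near its endpoints. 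This is the same quasi-adiabatic/Uhlmann input that underlies Lemma~\ref{lem:Levin2}.
\item The correlator then becomes a product of well-separated local blocks. Lemma~\ref{lem:factorization}, applied after pairing the odd blocks into even combinations, factorizes it up to errors $\sqrt{g(d)}$, which are absorbed into $f$.
\end{enumerate}
If this reduction becomes unwieldy, the fallback is a more direct state-distinguishability argument. Strong disorder of size $\delta$ forces $S_J|\Omega\rangle$ to have fidelity $\gtrsim\delta$ with a state obtained from $|\Omega\rangle$ by operators supported near $i_2,i_4$, and such operators commute with $Y$. On the other hand, $S_JYS_J=-Y$. Together these give, up to clustering errors, $\langle Y\rangle\approx-\langle Y\rangle$ on that overlap, and the conclusion $\delta^2\lesssim f(d)$ follows as before.
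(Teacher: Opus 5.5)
\textbf{Gap: the factorization step.} Your reduction is the right one. Playing $XY=-YX$ against an approximate factorization is exactly the mechanism in Levin's proof. The paper only cites this lemma, but it reproduces Levin's argument for its SPT and LSM corollaries in Sec.~\ref{subsec:aspt}. The trouble is that the factorization carries the whole content of the theorem, and the route you sketch to it does not work.

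Step 4 is false as a general statement. Take the even-sector ground state of the Ising ferromagnet (a GHZ state, gapped in its sector). There $S_J|\Omega\rangle$ is exactly orthogonal to $W|\Omega\rangle$ for every $W$ supported near the ends of a long $J$, because the bulk of $J$ is flipped relative to the spins far outside. In your setting the split segments $S_{J_1}$ and $S_{J_3}$ end near $i_3$, where the hypotheses give order, not disorder, so nothing licenses replacing them by endpoint operators. Even for the unsplit string, the hypothesis gives only an overlap $\delta$. Uhlmann's theorem turns that into a fidelity of order $\delta$ between reduced states, not approximate equality of vectors.

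Step 5 misapplies Lemma~\ref{lem:factorization}, which separates two even operators each supported on a single interval. Pairing the odd blocks $O_1$ and $O_3S_{J_2}$ gives an even operator on two distant intervals; its expectation value is the long-range order itself, which does not cluster. The fallback has the same hole: controlling the component of $S_J|\Omega\rangle$ orthogonal to the endpoint-generated vector is again the unproved clustering.

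\textbf{Missing ingredient.} What you need is Levin's Proposition 1 (Lemma~\ref{lem:Levin_prop1}). Because $S_J$ commutes with every symmetric Hamiltonian term inside $J$, the vector $O_2O_4S_J|\Omega\rangle$ is excited only near $i_2,i_4$. There are $S$-even projectors $P_k\coloneqq P_{\tilde I_k}(\ell_{\min})$ near the endpoints that return it to the even-sector ground space, which is one-dimensional:
$\opnorm{P_2P_4O_2O_4S_J|\Omega\rangle-b|\Omega\rangle}\le O(\sqrt{g(\ell_{\min})})$ with $b=\langle O_2O_4S_J\rangle_\Omega$. Likewise, by the string-free version with the same proof, $\opnorm{P_1P_3O_1O_3|\Omega\rangle-a|\Omega\rangle}\le O(\sqrt{g(\ell_{\min})})$ with $a=\langle O_1O_3\rangle_\Omega$.

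With this you never factorize expectation values; you put the projectors into the operators, $U=P_1P_3O_1O_3$ and $V=P_2P_4O_2O_4S_J$. The projectors are even and supported away from the other pair's operators, and $\tilde I_3\subset J$, so $UV=-VU$ survives. Since $\opnorm{U},\opnorm{V}\le 1$, both $UV|\Omega\rangle$ and $VU|\Omega\rangle$ lie within $O(\sqrt{g})$ of $ab|\Omega\rangle$. Adding the two errors gives $2\delta^2\le 2|ab|\le O(\sqrt{g(\ell_{\min})})$, which inverts to the stated bound.

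Your factorization estimate is in fact true, but only as a corollary of this proposition. Taking $A$ to be the identity shows $P_2P_4|\Omega\rangle\approx|\Omega\rangle$, which gives it; it does not follow from the segment-splitting route.
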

For a translationally invariant state $|\Omega\rangle$, this suffices to show that if $|\Omega\rangle$ has an $O(1)$ order parameter, then it cannot have an $O(1)$ disorder parameter, and vice versa. Assume the contrary, we can take $i_{1,2,3,4}$ to be evenly spaced in the ring geometry, such that
\begin{equation}
    \min _{k, l}\left[\operatorname{dist}\left(i_k, i_l\right)\right] = \frac{L}{4},
\end{equation}
where $L$ is the system size. For sufficiently large $L$, this contradicts the bound in Lemma~\ref{lem:Levin_thm3}, thus at least one of the order parameter and disorder parameter must be absent.

Note that without additional structure for the Hamiltonians $\{H_\omega\}$ in the disorder ensemble, this mutual exclusivity cannot be generalized to the case with disorder average. To see this, consider an ensemble $\{H_1, H_2\}$ where $H_1$ is in the Ising ordered phase and $H_2$ is in the disordered phase, then the ensemble has both an order parameter and a disorder parameter, even though each individual Hamiltonian has only one of the two. 

However, we now argue that this does not happen in our setting. Recall that we assumed $H_\omega$ to take the form
\begin{equation}
    H_\omega=H_0+\sum_i (v_i^\omega \mathcal{O}_i+\hc),
\end{equation}
and that $H_\omega$ remains gapped in every symmetry sector with a uniform gap $\epsilon$ for all $\omega$. We further assume that $|\Omega_\omega\rangle$ lies in the same phase for every $\omega$. This is automatically guaranteed when $v_i^\omega$ varies continuously, since one can then interpolate between any two ground states via a quasi-adiabatic evolution. This same-phase condition rules out the counterexample above.

We now prove the mutual exclusivity for a gapped ensemble. Assume that the ensemble has a $(\delta,\ell)$ order parameter and an $(\delta,\ell)$ disorder parameter, with $\delta$ and $\ell$ being $O(1)$ constants. By definition, there exists odd operators $O_{i_1}$ and $O_{i_3}$ supported near sites $i_1$ and $i_3$ respectively, such that
\begin{equation}
    \mathbb{E}_\omega\left(\left|\langle O_{i_1} O_{i_3}\rangle_{\Omega_\omega}\right|\right)\ge \delta.
\end{equation}
Similarly, there exists an operator $O_{i_2}$ and $O_{i_4}^-$ supported near sites $i_2$ and $i_4$ respectively, such that
\begin{equation}
    \mathbb{E}_\omega\left(\left|\langle O_{i_2} O_{i_4} \prod_{k=i_2+1}^{i_4} S_k \rangle_{\Omega_\omega}\right|\right) \ge \delta.
\end{equation}

 Consider the event $A$ that 
 \begin{equation}
    \left|\langle O_{i_1} O_{i_3}\rangle_{\Omega_\omega}\right|\ge \frac{\delta}{2},
 \end{equation}
 \ie the state $|\Omega_\omega\rangle$ has a $(\delta/2,\ell)$ order parameter defines on two points $i_1$ and $i_3$, and the event $B$ that
 \begin{equation}
    \left|\langle O_{i_2} O_{i_4} \prod_{k=i_2+1}^{i_4} S_k \rangle_{\Omega_\omega}\right| \ge \frac{\delta}{2},
 \end{equation}
 \ie the state $|\Omega_\omega\rangle$ has a $(\delta/2,\ell)$ disorder parameter defines on two points $i_2$ and $i_4$. 
 
 Note that while $A$ only depends on the reduced density matrix near site $i_1$ and $i_3$, the event $B$ involves a non-local operator supported on the region between $i_2$ and $i_4$. Therefore, unlike in previous sections, the events $A$ and $B$ are not necessarily independent, but we can still show that both events have an $O(1)$ probability to occur.

 Concretely, since the expectation value of the order parameter is at most 1, we have
 \begin{equation}
    \delta \le \mathbb{E}_\omega\left(\left|\langle O_{i_1} O_{i_3}\rangle_{\Omega_\omega}\right|\right) \le \mathbb{P}(A)\cdot 1+ \mathbb{P}(\overline{A})\cdot \frac{\delta}{2},
 \end{equation}
thus $\mathbb{P}(A)\ge \delta/(2-\delta)$. Similarly, $\mathbb{P}(B)\ge \delta/(2-\delta)$. Therefore, there exists two states $|\Omega_1\rangle$ and $|\Omega_2\rangle$ such that $|\Omega_1\rangle$ has a $(\delta/2,\ell)$ order parameter defined on two points $i_1$ and $i_3$, and $|\Omega_2\rangle$ has a $(\delta/2,\ell)$ disorder parameter defined at two points $i_2$ and $i_4$. Since they are in the same phase, there exists a quasi-adiabatic evolution $U$ that connects $|\Omega_1\rangle$ and $|\Omega_2\rangle$. Since $U$ is symmetric and can be approximated by a finite-depth circuit~\cite{Hastings_2005,Bravyi_2010,Bachmann_2011,Haah_2021,yi2025universaldecayconditionalmutual}, it will only dress the order parameter $O_{i_1}$ and $O_{i_3}$ quasi-locally, and $|\Omega_2\rangle$ will also have a $(O(\delta),O(\ell+\xi))$ order parameter defined on two points $i_1$ and $i_3$, where $\xi$ is the correlation length, in addition to the $(\delta/2,\ell)$ disorder parameter defined at two points $i_2$ and $i_4$.

This contradicts Lemma~\ref{lem:Levin_thm3}, thus at least one of the order parameter and disorder parameter must be absent, which completes the proof.
\subsection{Parity of disorder parameter}\label{subsec:disorderparity}
We now finish with the final ingredient for the proof of the main theorem, which is that if the gapped ensemble does have a disorder parameter, then it must have even parity. This can be shown by a similar argument as in the proof of the mutual exclusivity. 

To this end, we recall that such constraints have been shown for the clean case without disorder average in Ref.~\cite{Levin_2020}, as demonstrated by the following Lemma:
\begin{lemma}[Theorem 4, Ref.~\cite{Levin_2020}]\label{lem:Levin_thm4}
    If an Ising-symmetric ground state $|\Omega\rangle$ with gap $\epsilon$ in its symmetry sector has a $(\delta, \ell)$  disorder parameter with odd parity under $S$, defined at $i_1, i_2, i_3, i_4$, then
    \begin{equation}
        \min _{k, l}\left[\operatorname{dist}\left(i_k, i_l\right)\right] \leq 2 \ell+\tilde{\mathcal{O}}\left(\frac{\log \delta^{-1}}{\epsilon}\right)
    \end{equation}
\end{lemma}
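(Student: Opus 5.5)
The plan is to reproduce Levin's argument in the clean setting. We work with a single gapped, $S$-symmetric ground state $|\Omega\rangle$ and its odd-parity disorder parameters. The idea is to form string operators on \emph{interlocking} pairs of points and compare two ways of clustering their product. The odd endpoints produce a relative sign that is incompatible with both clusterings holding once all points are far apart.

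\textbf{Setup.} For a pair $(i_a,i_b)$ carrying the $(\delta,\ell)$ disorder parameter, write
\begin{equation}
W_{ab}=O_{i_a}O_{i_b}\,S_{[i_a+1,i_b]} .
\end{equation}
We use the pairs $(i_1,i_3)$ and $(i_2,i_4)$, which interlock, and also $(i_1,i_2)$ and $(i_3,i_4)$. Lemma~\ref{lem:Levin3}, or its proof with the strings allowed to overlap, is the tool that factorizes expectations of products of such well-separated string operators, with error $f(\ell_{\min})$.

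\textbf{Step 1 (sign from odd endpoints).} Compute the algebra of $W_{13}$ and $W_{24}$. The endpoint $O_{i_2}$ lies inside the string of $W_{13}$, and exactly one endpoint of $W_{13}$ lies inside the string of $W_{24}$. With the support conventions of Definition~\ref{def: disorder-dirty}, I expect the crossings to give
\begin{equation}
W_{13}W_{24}=-W_{24}W_{13}.
\end{equation}
Odd endpoints contribute a factor $-1$ per crossing, while even endpoints would give $+1$; this is exactly where the parity hypothesis enters. The step needing care is the bookkeeping of whether each endpoint sits inside or outside each string. If the naive choice gives commuting operators, I will shift one endpoint operator to the other side of its string boundary. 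This amounts to multiplying by a local $S_{I}$ on an endpoint region and keeps a valid $(\delta,\ell)$ witness up to relabeling, so that exactly one crossing is anticommuting.

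\textbf{Step 2 (clustering both orderings).} Next I establish
\begin{equation}
\langle W_{13}W_{24}\rangle_\Omega\approx\langle W_{13}\rangle_\Omega\langle W_{24}\rangle_\Omega,\qquad
\langle W_{24}W_{13}\rangle_\Omega\approx\langle W_{24}\rangle_\Omega\langle W_{13}\rangle_\Omega ,
\end{equation}
each with error $\tilde{O}(f(\ell_{\min}))$ where $\ell_{\min}=\min_{k,l}\dist(i_k,i_l)-2\ell$. The route is to use the product identity $S_{[i_1+1,i_3]}S_{[i_2+1,i_4]}=S_{[i_1+1,i_2]}S_{[i_3+1,i_4]}$. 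This rewrites $W_{13}W_{24}$, up to sign, as a product of two non-overlapping string operators localized near $\{i_1,i_2\}$ and near $\{i_3,i_4\}$. The even-operator cluster decomposition (Lemma~\ref{lem:factorization}, applied to the gapped state after conjugating by the quasi-adiabatic "string" unitaries as in Levin) then factorizes it. The same rewriting applied to the reversed product yields the other ordering.

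\textbf{Step 3 (contradiction and main obstacle).} Combining Step 1 with Step 2 gives $\langle W_{13}\rangle\langle W_{24}\rangle\approx-\langle W_{13}\rangle\langle W_{24}\rangle$, so
\begin{equation}
\delta^2\le 2|\langle W_{13}\rangle\langle W_{24}\rangle|\lesssim f(\ell_{\min}) .
\end{equation}
Since $f(\ell)=\mathrm{poly}(\ell,\epsilon^{-1})e^{-c\epsilon\ell}$, solving $f(\ell_{\min})\gtrsim\delta^2$ gives
\begin{equation}
\ell_{\min}\le\tilde{O}\!\left(\frac{\log\delta^{-1}}{\epsilon}\right),
\end{equation}
which is the claimed bound $\min_{k,l}\dist(i_k,i_l)\le 2\ell+\tilde{O}(\log\delta^{-1}/\epsilon)$.

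I expect Step 2 to be the main obstacle. The strings overlap, so Lemma~\ref{lem:Levin3} does not apply verbatim. One must verify that the clustering error for products of interlocked strings is controlled by the endpoint separations alone. I would first try to reduce this to the non-overlapping case via the product identity above. If the sign bookkeeping resists, the fallback is Levin's fidelity/Uhlmann route from Sec.~\ref{subsec: incompatibility}: use the odd disorder parameter to show that $S_{[i_1+1,i_3]}|\Omega\rangle$ is locally close to an odd-charge excitation, and derive the contradiction from gap-based locality of odd charges.
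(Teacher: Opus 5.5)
\textbf{There is a genuine gap, in Step 1, and Step 3 rests entirely on it.} For well-separated interlocking points, the two string operators \emph{commute}, whatever the endpoint parity. Write the second witness as $Q_{i_2},Q_{i_4}$, so that $W_{13}=O_{i_1}O_{i_3}S_{[i_1+1,i_3]}$ and $W_{24}=Q_{i_2}Q_{i_4}S_{[i_2+1,i_4]}$. The only nontrivial relations are $S_{[i_1+1,i_3]}Q_{i_2}=-Q_{i_2}S_{[i_1+1,i_3]}$ and $S_{[i_2+1,i_4]}O_{i_3}=-O_{i_3}S_{[i_2+1,i_4]}$. Hence
\begin{equation}
W_{13}W_{24}=-O_{i_1}O_{i_3}Q_{i_2}Q_{i_4}\,S_{[i_1+1,i_3]}S_{[i_2+1,i_4]}=W_{24}W_{13}.
\end{equation}
Each string contains exactly one odd endpoint of the other operator, so the signs give $(-1)^2$; even endpoints give $(+1)^2$.

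Your repair cannot change this. The commutator is a property of the operators, not of how they are written. Moving an endpoint across its own string boundary via $O\to OS_I$ either leaves $W_{13}$ unchanged or produces an operator that still has an odd endpoint deep inside $[i_2+1,i_4]$.

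Once $W_{13}$ and $W_{24}$ commute, their being simultaneous approximate eigen-operators of $|\Omega\rangle$ is perfectly consistent, so Step 3 yields no contradiction. A concrete example is the $S$-even Jordan--Wigner image of a fermionic state in which the Majoranas near $i_1,i_3$ and near $i_2,i_4$ are paired non-locally. It is an exact joint eigenstate with $|\langle W_{13}\rangle|=|\langle W_{24}\rangle|=1$, and it satisfies both of your two-ordering clustering statements exactly.

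\textbf{Step 2 also does not deliver what you claim.} The product identity gives $W_{13}W_{24}=-W_{12}W_{34}$, with $W_{12}=O_{i_1}Q_{i_2}S_{[i_1+1,i_2]}$ and $W_{34}=O_{i_3}Q_{i_4}S_{[i_3+1,i_4]}$; for even endpoints the sign would be $+$. Factorizing therefore yields $\langle W_{13}\rangle\langle W_{24}\rangle\approx-\langle W_{12}\rangle\langle W_{34}\rangle$, not $\langle W_{13}W_{24}\rangle\approx\langle W_{13}\rangle\langle W_{24}\rangle$. The latter follows directly from Lemma~\ref{lem:Levin_prop1}, as in Sec.~\ref{subsec:aspt}.

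This relation is where the odd parity and the gap genuinely enter; the non-local pairing state above violates it. But it is a relation among four expectation values, not a contradiction, and nothing in your proposal turns it into one.

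The paper does not reprove this lemma; it imports Levin's Theorem~4. The anticommutation trick the paper does use in Secs.~\ref{subsec:aspt}--\ref{subsec:disorderLSM} works only because two different $\mathbb{Z}_2$'s are involved, so the two crossings carry different parities and exactly one sign survives. With a single $\mathbb{Z}_2$, that one-sign mechanism is available against an odd-endpoint string only when the other operator is an order operator or an even-endpoint string. That is the mechanism behind Lemma~\ref{lem:Levin_thm3}, and it is not your situation.

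Your Uhlmann fallback does not close the gap either. Odd charges at both ends of a string are compatible with global charge conservation, so ruling them out is precisely what has to be proved.
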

Similar to the previous section, for a translational invariant state $|\Omega\rangle$, this suffices to show that if $|\Omega\rangle$ has an $O(1)$ disorder parameter, it cannot be odd under $S$, since otherwise we can take $i_{1,2,3,4}$ to be evenly spaced in the ring geometry, and their minimum distance should not have an $O(1)$ upper bound.

We now generalize this result to the case with disorder average. Assume that the ensemble has a $(\delta, \ell)$ disorder parameter. Choose $2n$ points $i_1, i_2, \cdots, i_{2n}$ evenly spaced in the ring geometry, with $n$ being an $O(1)$ constant to be chosen later, such that $\dist(i_k, i_l)\gg \ell$ for all $k\neq l$. By definition, there exist operators $O_{i_k}$ supported near site $i_k$, such that
\begin{equation}
    \mathbb{E}_\omega\left(\left|\langle O_{i_k} O_{i_k+n} \prod_{m=i_k+1}^{i_{k+n}} S_m \rangle_{\Omega_\omega}\right|\right)\ge \delta.
\end{equation}
We now define a series of events $B_k'$ for $k=1,2,\cdots,n$ such that
\begin{equation}
    \left|\langle O_{i_k} O_{i_k+n} \prod_{m=i_k+1}^{i_{k+n}} S_m \rangle_{\Omega_\omega}\right|\ge \frac{\delta}{2}.
\end{equation}
\ie the state $|\Omega_\omega\rangle$ has a $(\delta/2,\ell)$ disorder parameter defined on two points $i_k$ and $i_{k+n}$. Note that these events depend non-locally on $|\Omega_\omega\rangle$, so they are not necessarily independent, but similar to the previous section, we can still show that their probability is an $O(1)$ constant. Concretely, since the expectation value of the disorder parameter is at most 1, we have
\begin{equation}
    \begin{split}
        \delta &\le \mathbb{E}_\omega\left(\left|\langle O_{i_k} O_{i_k+n} \prod_{m=i_k+1}^{i_{k+n}} S_m \rangle_{\Omega_\omega}\right|\right)\\
         \le &\mathbb{P}(B_k')\cdot 1+ \mathbb{P}(\overline{B_k'})\cdot \frac{\delta}{2},   
    \end{split}
\end{equation}
thus $\mathbb{P}(B_k')\ge \delta/(2-\delta)$. 

Now we proceed to show that there exists two events $B_k'$ and $B_l'$ such that their intersection has positive probability. We can consider the indicator variables $X_k=\mathbf{1}_{B_k'}$ for $k=1,2,\cdots,n$. Since $\mathbb{P}(B_k')\ge \delta/(2-\delta)$, we have $\mathbb{E}_\omega(X_k)=\mathbb{P}(B_k')\ge \delta/(2-\delta)$. Defining $S=\sum_{k=1}^n X_k$, we have
\begin{equation}
    \begin{split}
        \mathbb{E}_\omega(S^2)&=\sum_i \mathbb{E}_\omega(X_i)+\sum_{i\neq j} \mathbb{E}_\omega(X_iX_j)\\
        &\leq \mathbb{E}_\omega(S)+ \frac{n(n-1)}{2}\max_{kl}\mathbb{P}(B_k'\cap B_l')\;.
    \end{split}
\end{equation}
On the other hand, we have
\begin{equation}
    \mathbb{E}_\omega(S^2)=\mathbb{E}_\omega(S)^2+\mathrm{Var}(S)\geq \mathbb{E}_\omega(S)^2,
\end{equation}
where $\mathrm{Var}(S)$ is the variance of $S$. Thus we have
\begin{equation}
    \max_{k,l}\mathbb{P}(B_k'\cap B_l')\geq \frac{2}{n(n-1)}\left( \mathbb{E}_\omega(S)^2-\mathbb{E}_\omega(S)\right).
\end{equation}
Since 
\begin{equation}
\mathbb{E}_\omega(S)=\sum_k \mathbb{E}_\omega(X_k)\ge n\delta/(2-\delta),
\end{equation}
if we take $n>\frac{2-\delta}{\delta}$, then $\mathbb{P}(B_k'\cap B_l')>0$ is an $O(1)$ constant for some $k\neq l$. Therefore, there exists a state $|\Omega_\omega\rangle$ that has a $(\delta/2,\ell)$ disorder parameter with odd parity defined at two pairs of points $i_k$, $i_{k+n}$, and $i_l$, $i_{l+n}$ . By Lemma~\ref{lem:Levin_thm4}, this is not possible for sufficiently large $L$, a contradiction. Therefore, the disorder parameter must be even under $S$, which completes the proof.

\section{Rare region effects}\label{sec:Rareregion}
In this section, we discuss the rare region effect and demonstrate that it does not invalidate our main theorem. Specifically, we show that even for a nearly gapped ensemble with the presence of rare regions, the system still has one and only one of the two: an $O(1)$ order parameter or an $O(1)$ disorder parameter with even parity.

The key idea of the proof is to also denote the absence of rare region effect as an event and control the probability correspondingly. More concretely, we can introduce a new series of events $C_i$ such that $C_i$ occurs if there is no rare region effect in the region near site $i$. Concretely, $C_i$ occurs if for the state $|\Omega_\omega\rangle$, the rare region effect can be represented by a unitary supported outside of the region $I_i^-I_iI_i^+$, \ie
\begin{equation}
    |\Omega_\omega\rangle = U_{I_i'} |\Omega\rangle,
\end{equation}
where $|\Omega\rangle$ is a state with a uniform gap $\epsilon$ in the symmetry sector, and $U_{I_i'}$ is a symmetric unitary operator supported in the region $I_i'$, which is the complement of $I_i^-I_iI_i^+$. By the LPPL principle, the event $C_i$ should only depend on the disorder realization near the site $i$, so it is independent of the events $A_j$, $B_j$ and $C_j$ for $j$ with $\dist(i,j)\gg \ell'$. Moreover, we expect {for all $O(1)$ constants $\ell$ and $\ell'$,
\begin{equation}
    \mathbb{P}(C_i)>p_0,
\end{equation}
for some $O(1)$ constant $p_0$, depending on $\ell, \ell'$ that is very close to 1.
}

{The fact that $\mathbb{P}(C_i)$ is an $O(1)$ constant, independent of the system size $L$, is an important feature of nearly gapped ensemble, with Griffiths-like rare regions that have exponentially small probability. In contrast, for the infinite-randomness fixed points\cite{Fisher1994,PhysRevB.51.6411}, $\mathbb{P}(C_i)$ vanishes as $L\to\infty$, and our results below will not apply.}

Since the proof closely follows the gapped case without rare regions, we will only sketch it, focusing on verifying that the key lemmas remain valid even in the presence of rare regions.

\subsection{Trade-off for nearly gapped ensembles}\label{subsec: tradeoff-rare}
We start by showing that for any nearly gapped ensemble $\{|\Omega_\omega\rangle\}$, it must have either an $O(1)$ order parameter or an $O(1)$ disorder parameter. We define the events $A_i$ and $B_i$ as in Sec.~\ref{subsec: events}. The argument follows the same Markov's inequality and $\varepsilon$-net strategy as in Sec.~\ref{subsec:tradeoff_proof}, bounding $\mathbb{P}(A_i\cap C_i)$ and $\mathbb{P}(B_i\cap C_i)$ separately. The main work is to verify that Lemma~\ref{lem:factorization}, Lemma~\ref{lem:Levin3}, and Lemma~\ref{lem:Levin2} each generalize to the rare-region setting; once this is done, the rest of the proof is unchanged.

\subsubsection{Estimation of $\mathbb{P}(A_i\cap C_i)$}
We now estimate $\mathbb{P}(A_i\cap C_i)$. Suppose both $A_i$ and $C_i$ occur for some disorder realization $\omega$, we first show that $|\Omega_\omega\rangle$ is strongly-ordered on region $I_i$ and $I_i'$. The proof follows the same chain as in Sec.~\ref{subsec: events}: Lemma~\ref{lem:Fannes_Cond_entropy} converts the mutual information lower bound $I_\Omega(I_i:I_i')>1/32$ into a trace-norm lower bound $\|\Omega_{I_i}\otimes\Omega_{I_i'}-\Omega_{I_iI_i'}\|_1>\varepsilon_\ell$, and Lemma~\ref{lem:brandao14} converts this into the existence of operators on $I_i$ and $I_i'$ with a large connected correlator. Neither step uses the gap, so both are unaffected by rare regions. It therefore suffices to generalize Lemma~\ref{lem:factorization}, which bounds the connected correlator of even operators and relies on the spectral gap:
{
\renewcommand{\thelemma}{\getrefnumber{lem:factorization}$'$}
\renewcommand{\theHlemma}{lem.factorization.prime}
\addtocounter{lemma}{-1}
\begin{lemma}\label{lem:factorization-prime}
Let $O_1$ and $O_2$ be operators that are even under $S$, have norm at most $1$, and are supported on intervals $I_1$ and $I_2$ separated by distance at least $\ell'$. Then for any state $|\Omega'\rangle=U_{I_2}|\Omega\rangle$ such that $U_{I_2}$ is a (not necessarily local) unitary supported on the region $I_2$, and $|\Omega\rangle$ has a uniform gap $\epsilon$ in the symmetry sector, we have
\begin{equation}
    \left|
        \langle O_1O_2\rangle_{\Omega'}
        - \langle O_1\rangle_{\Omega'}\langle O_2\rangle_{\Omega'}
    \right|
    \le
    (6+\sqrt{2})\sqrt{g(\ell')} + g(\ell'),
\end{equation}
where
\begin{equation}
    g(\ell')=\mathrm{poly}(\ell',\epsilon^{-1})e^{-c_1\epsilon \ell'}
\end{equation}
for some numerical constant $c_1>0$, and $\epsilon$ is the energy gap within the even subspace.
\end{lemma}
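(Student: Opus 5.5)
The plan is to reduce Lemma~\ref{lem:factorization-prime} directly to the gapped statement, Lemma~\ref{lem:factorization}, applied to $|\Omega\rangle$. The only input needed is that $U_{I_2}$ is supported on $I_2$, the same interval that supports $O_2$, and is disjoint from $I_1$. Therefore $U_{I_2}$ commutes with $O_1$, and conjugating by it maps the algebra on $I_2$ into itself.

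First, I will rewrite every expectation value in $|\Omega'\rangle$ as one in $|\Omega\rangle$. Define $\tilde O_2\coloneqq U_{I_2}^\dagger O_2 U_{I_2}$, which is supported on $I_2$ and satisfies $\opnorm{\tilde O_2}\le 1$. Because $[U_{I_2},O_1]=0$, we get
\begin{equation}
\langle O_1O_2\rangle_{\Omega'}=\langle O_1\tilde O_2\rangle_{\Omega},\qquad \langle O_1\rangle_{\Omega'}=\langle O_1\rangle_{\Omega},\qquad \langle O_2\rangle_{\Omega'}=\langle \tilde O_2\rangle_{\Omega}.
\end{equation}
The connected correlator in $|\Omega'\rangle$ is thus exactly the connected correlator of $O_1$ and $\tilde O_2$ in the gapped state $|\Omega\rangle$. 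The supports are unchanged, so they remain separated by at least $\ell'$.

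Second, I will check the parity hypothesis of Lemma~\ref{lem:factorization}. The rare-region model takes $U_{I_2}$ to be symmetric, i.e. $[U_{I_2},S]=0$. Then $S\tilde O_2 S=U_{I_2}^\dagger (SO_2S) U_{I_2}=\tilde O_2$, so $\tilde O_2$ is even. For a general $U_{I_2}$ I would instead split $\tilde O_2=\tilde O_{2,+}+\tilde O_{2,-}$ into even and odd parts. The odd part drops out of both $\langle O_1\tilde O_{2,-}\rangle_\Omega$ and $\langle\tilde O_{2,-}\rangle_\Omega$: $O_1$ is even and $|\Omega\rangle$ has definite parity, so $O_1\tilde O_{2,-}$ and $\tilde O_{2,-}$ are both odd and have zero expectation. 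Only $\tilde O_{2,+}$ remains, and $\opnorm{\tilde O_{2,+}}\le 1$ because it is an average of two unitary conjugates of $\tilde O_2$. Either way, Lemma~\ref{lem:factorization} applies to the pair $(O_1,\tilde O_{2,+})$ in $|\Omega\rangle$, whose gap is $\epsilon$, and gives the claimed bound $(6+\sqrt{2})\sqrt{g(\ell')}+g(\ell')$ with the same function $g$.

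I do not expect a real obstacle here. The one point needing care is that $U_{I_2}$ must be supported inside the same interval as $O_2$. If the rare-region unitary extends beyond $I_2$ (while still avoiding $I_1$), then $\tilde O_2$ lives on the enlarged region. In that case the separation entering $g$ becomes the distance from $I_1$ to that enlarged region, and the lemma should be applied with this effective $\ell'$. In the application to the event $C_i$ this is harmless: there $U$ is supported in $I_i'$ and $O_2$ is likewise in $I_i'$, while $O_1$ is in $I_i$, so the separation is still at least $\ell'$.
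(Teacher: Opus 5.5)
Your proposal is correct and follows the paper's proof: rewrite the connected correlator in $|\Omega'\rangle$ as that of $O_1$ and $U_{I_2}^\dagger O_2 U_{I_2}$ in $|\Omega\rangle$, using $[U_{I_2},O_1]=0$, and then apply Lemma~\ref{lem:factorization} to the gapped state $|\Omega\rangle$. Your explicit parity check fills in a step the paper leaves implicit: evenness of $U_{I_2}^\dagger O_2 U_{I_2}$ is automatic for symmetric $U_{I_2}$, and otherwise you discard the odd part, which has vanishing expectation in the even state $|\Omega\rangle$.
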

}
\begin{proof}
Note that
\begin{equation}
    \begin{split}
            &\left|
        \langle O_1O_2\rangle_{\Omega'}
        - \langle O_1\rangle_{\Omega'}\langle O_2\rangle_{\Omega'}
    \right|\\
    =&\left|
        \langle O_1 U_{I_2}^\dagger O_2 U_{I_2} \rangle_{\Omega}
        - \langle O_1\rangle_{\Omega}\langle U_{I_2}^\dagger O_2 U_{I_2}\rangle_{\Omega}
    \right|
    \end{split}
\end{equation}
We can apply Lemma.~\ref{lem:factorization} to the state $|\Omega\rangle$ and the operators $O_1$ and $U_{I_2}^\dagger O_2 U_{I_2}$. This gives the desired bound, which completes the proof of Lemma.~\ref{lem:factorization-prime}.
\end{proof}

Therefore, similar to the proof in Sec.~\ref{subsec: events}, by Lemma.~\ref{lem:Fannes_Cond_entropy},~\ref{lem:brandao14}, and~\ref{lem:factorization-prime}, we have if the events $A_i$ and $C_i$ occur, then there exist an $O(1)$ order parameter on $I_i$ and $I_i'$. 

We now proceed to promote this non-local order parameter to a local order parameter defined on two points. Suppose that four events $A_i$, $A_j$, $C_i$ and $C_j$ occur, with $\dist(i,j)\gg \ell'$. To convert the order parameters on $I_i$ and $I_j$ to an order parameter for the sites $i$ and $j$, we need to check Lemma.~\ref{lem:Levin3}, which can be generalized as follows with the presence of rare region effect:
{
\renewcommand{\thelemma}{\getrefnumber{lem:Levin3}$'$}
\renewcommand{\theHlemma}{lem.Levin3.prime}
\addtocounter{lemma}{-1}
\begin{lemma}\label{lem:Levin3-prime}
Let $O_1,O_1',O_2,O_2'$ be operators supported on intervals $I_1,I_1',I_2,I_2'$ respectively, all either odd or even under $S$, with norm at most $1$. Denote $I_1^-$ and $I_1^+$ as the intervals between $I_1$ and $I_1'$ in the counter-clockwise and clockwise direction respectively, and define $I_2^-$ and $I_2^+$ similarly.
Define
\begin{equation}
    \ell_{\min}=\min\bigl(\dist(I_1,I_1'),\dist(I_1',I_2'),\dist(I_1,I_2)\bigr).
\end{equation}
Then for any state $|\Omega'\rangle=U|\Omega\rangle$ such that $U$ is a (not necessarily local) symmetric unitary operator supported outside of $I_1^-I_1I_1^+$ and $I_2^-I_2I_2^+$, and $|\Omega\rangle$ has a uniform gap $\epsilon$ in the symmetry sector, we have
\begin{equation}
    \left|\langle O_1 O_2\rangle_{\Omega'}\right| \ge \left|\langle O_1 O_1'\rangle_{\Omega'}\right| \left|\langle O_2 O_2'\rangle_{\Omega'}\right| - f(\ell_{\min}),
\end{equation}
and
\begin{equation}
    \begin{split}
    &\left|\left\langle O_1 O_2 S_{J\cup I_2}\right\rangle_{\Omega'}\right|\\
     \ge &\left|\left\langle O_1 O_1' S_{I_1^+}\right\rangle_{\Omega'}\right| \left|\left\langle O_2 O_2' S_{I_2^+}\right\rangle_{\Omega'}\right| - f(\ell_{\min}),
    \end{split}
\end{equation}
where $f(\ell)=\mathrm{poly}(\ell,\epsilon^{-1})e^{-c\epsilon \ell}$, $J$ is the interval between $I_1$ and $I_2$ in the clockwise direction.
\end{lemma}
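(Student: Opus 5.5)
The plan is to mirror the proof of Lemma~\ref{lem:factorization-prime}: conjugate the rare-region unitary $U$ away, so that the statement reduces to Lemma~\ref{lem:Levin3} applied to the gapped state $|\Omega\rangle$. The one wrinkle is that $U$ does not commute with the primed operators $O_1',O_2'$, because $U$ is supported in the far region, which contains $I_1'$ and $I_2'$.

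\textbf{Step 1: transfer everything to $|\Omega\rangle$.} Write
\[
\langle X\rangle_{\Omega'}=\langle U^\dagger X U\rangle_\Omega .
\]
Since $U$ is supported outside $I_1^-I_1I_1^+\cup I_2^-I_2I_2^+$, it commutes with $O_1$, $O_2$, $S_{I_1^+}$ and $S_{I_2^+}$. Since $U$ is symmetric, it commutes with $S$. The remaining ingredient is the string $S_{J\cup I_2}$. Inside the buffered regions it commutes with $U$ trivially. Outside them, use $S$-symmetry of $U$: $U$ commutes with $S$, and $S$ factorizes as a product over sites. Choosing the endpoints appropriately (or using $S_{J\cup I_2}=S\,S_{(J\cup I_2)^c}$), the string restricted to the complement of the buffered regions commutes with $U$ up to the product structure. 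This must be checked carefully.

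Define the dressed operators
\[
\tilde O_1' = U^\dagger O_1' U,\qquad \tilde O_2' = U^\dagger O_2' U .
\]
They have norm at most 1 and the same $S$-parity as before. Their supports may spread throughout the complement of the buffered regions.

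\textbf{Step 2: apply Lemma~\ref{lem:Levin3} to $|\Omega\rangle$.} Apply the extended Lemma~\ref{lem:Levin3}, which imposes no constraint on the separation between $I$ and $I'$, using the original $O_1,O_2$ and the dressed $\tilde O_1',\tilde O_2'$. The new supports $\tilde I_1'$ and $\tilde I_2'$ are no longer intervals separated from each other. However, inspecting Levin's argument shows that the quantity that matters is $\dist(I_1,I_1')$, the buffer width. I would therefore redefine $\tilde I_k'$ as the full complement of $I_k^-I_kI_k^+$. The relevant distances are then at least $|I_k^\pm|$ and $\dist(I_1,I_2)$, which are controlled by $\ell_{\min}$ under the convention $|I^\pm|\ge \ell_{\min}$.

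\textbf{Main obstacle.} The hard step is justifying that the proof of Lemma~3 in Ref.~\cite{Levin_2020} only uses the supports of $O_1'$ and $O_2'$ through the buffers $I_k^\pm$. That proof builds, from gap-based approximate ground-state projectors, a local operator near $I_1\cup I_1^\pm$ replacing $O_1'$. The term $\dist(I_1',I_2')$ enters only to keep the two replacements disjoint. The two dressed operators now overlap, so I would first try to order the argument so that the replacement is done for one pair at a time. Replace $\tilde O_1'$ by an operator localized in $I_1^-I_1I_1^+$, with error $f(\ell_{\min})$, via the cluster/string factorization of Lemma~\ref{lem:factorization}. Then do the same for $\tilde O_2'$. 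Concretely, I would bound
\[
|\langle O_1O_2\rangle|\;\ge\;|\langle O_1 \tilde O_1'\rangle|\,|\langle O_2\tilde O_2'\rangle|-f,
\]
after which undoing the conjugation by $U$ gives the stated inequalities with $\Omega'$. The string version is handled identically, with $S_{I_k^+}$ inserted.
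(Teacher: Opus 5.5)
Your overall route is the paper's: write $\langle X\rangle_{\Omega'}=\langle U^\dagger XU\rangle_\Omega$, use that $U$ commutes with $O_1$, $O_2$, $S_{I_1^+}$ and $S_{I_2^+}$, and apply the extended Lemma~\ref{lem:Levin3} to $|\Omega\rangle$ with $O_1$, $U^\dagger O_1'U$, $O_2$, $U^\dagger O_2'U$. The step you defer is where the argument breaks: $[U,S_{J\cup I_2}]=0$ does not follow from $[U,S]=0$.

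Let $M$ be the far arc between $I_1^+$ and $I_2^-$, and $M'$ the far arc between $I_2^+$ and $I_1^-$, so $\supp(U)\subset M\cup M'$. The string $S_{J\cup I_2}$ contains all of $M$ and none of $M'$, so on $\supp(U)$ it acts as $S_M$. Global symmetry only gives $[U,S_MS_{M'}]=0$. Writing $S_{J\cup I_2}=S\,S_{(J\cup I_2)^c}$ merely trades $S_M$ for $S_{M'}$. The string endpoints already sit in the buffers, so choosing them differently does not help either.

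The second inequality in fact fails in the regime where the lemma is used. Take $S=\prod_iX_i$, $H=-\sum_iX_i$, $|\Omega\rangle=|+\rangle^{\otimes L}$, all four endpoint operators equal to the identity, and $U=e^{i\pi Z_mZ_{m'}/4}$ with $m\in M$, $m'\in M'$. Then $[U,S]=0$, and both factors on the right-hand side equal $1$. However, $U^\dagger S_{J\cup I_2}U=-iZ_mZ_{m'}S_{J\cup I_2}$, whose expectation in $|+\rangle^{\otimes L}$ vanishes. So no bound with a small error term can hold.

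The missing input is that $U$ be separately symmetric on the two components of the far region. For example, $U=U_MU_{M'}$ with $[U_M,S]=[U_{M'},S]=0$ implies $[U_M,S_M]=0$, and hence $[U,S_{J\cup I_2}]=0$. With this, the string inequality follows exactly as the first. This is the LPPL factorization the paper invokes explicitly in the proof of Lemma~\ref{lem:Levin2-prime}. The paper's one-line assertion in this lemma, that $U$ commutes with $S_{J\cup I_2}$, is valid only under it. So you need to add this hypothesis, or derive it from the rare-region model, rather than appeal to global symmetry ``up to the product structure.''

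As for your ``main obstacle'': the paper does not re-derive Levin's argument; it simply feeds the dressed operators into the extended Lemma~\ref{lem:Levin3}, whose proof it omits. Your worry is legitimate. Since $I_k^-I_kI_k^+\cup I_k'$ is the whole ring, $\supp(U)\subset I_1'\cap I_2'$. The dressed operators therefore stay inside $I_k'$, but any nontrivial $U$ forces $I_1'$ and $I_2'$ to overlap, so the literal $\dist(I_1',I_2')$ term in $\ell_{\min}$ vanishes. Your one-pair-at-a-time replacement is only a sketch, however. On the first inequality, your proposal rests on the same unproved extension as the paper's proof, no better and no worse.
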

}
\begin{proof}
Note that
\begin{equation}
    \begin{split}
    &\left|\langle O_1 O_2\rangle_{\Omega'}\right|
    =\left|\langle O_1 O_2 \rangle_{\Omega}\right|\\
    \ge& \left|\langle O_1 U^\dagger O_1' U\rangle_{\Omega}\right| \left|\langle O_2 U^\dagger O_2' U\rangle_{\Omega}\right| - f(\ell_{\min})\\
    = &\left|\langle O_1 O_1'\rangle_{\Omega'}\right| \left|\langle O_2 O_2'\rangle_{\Omega'}\right| - f(\ell_{\min}),  
    \end{split}
\end{equation}
where we have applied Lemma.~\ref{lem:Levin3} to the state $|\Omega\rangle$ and the operators $O_1$, $U^\dagger O_1' U$, $O_2$ and $U^\dagger O_2' U$. We can similarly show the second inequality by noticing that $U$ commutes with all three symmetry restriction operators $S_{I_1^+}$, $S_{I_2^+}$ and $S_{J\cup I_2}$. We thus complete the proof of Lemma.~\ref{lem:Levin3-prime}.
\end{proof}

Following the same estimation as in Sec.~\ref{subsec:tradeoff_proof}, we have if the events $A_i$, $A_j$, $C_i$ and $C_j$ occur, then there exist an $O(1)$ order parameter defined on two points $i$ and $j$. Therefore, if we assume that the ensemble does not have an $(\eta,\ell)$ order parameter, by the Markov's inequality and the $\epsilon$-net argument, we have 
\begin{equation}
    \mathbb{P}(A_i\cap C_i\cap A_j\cap C_j)<\frac{2^{4\ell+3}\cdot 3 \eta}{\varepsilon_\ell^{ 2}} N_\ell(2^{-4\ell-3}\varepsilon_\ell^{ 2}/3)^2.
\end{equation}
Since $A_i\cap C_i$ and $A_j\cap C_j$ are independent, we have 
\begin{equation}
    \mathbb{P}(A_i\cap C_i)\;<\;2^{2\ell+1} \sqrt{6\eta}\;\frac{N_\ell\left(2^{-4\ell-3}\varepsilon_\ell^{ 2}/3\right)}{\varepsilon_\ell} .
\end{equation}

\subsubsection{Estimation of $\mathbb{P}(B_i\cap C_i)$}
We now move on to estimate $\mathbb{P}(B_i\cap C_i)$. Recall that if $B_i$ occurs, the state $|\Omega_\omega\rangle$ will be weakly-disordered on the region $I_i^-$ and $I_i^+$. Furthermore, for the case without rare region effect, we can leverage Lemma.~\ref{lem:Levin2} to show that $|\Omega_\omega\rangle$ is also strongly-disordered on the region $I_i^-$ and $I_i^+$. We can then use the Markov's inequality and the $\epsilon$-net argument to show that the probability of $B_i$ is small. It then suffices to generalize Lemma.~\ref{lem:Levin2}, which shows that weak disorder parameters can be upgraded to strong disorder parameters. For completeness, we also include the analogous statement for order parameters, although it is not used for our purpose. 
{\renewcommand{\thelemma}{\getrefnumber{lem:Levin2}$'$}
\renewcommand{\theHlemma}{lem.Levin2.prime}
\addtocounter{lemma}{-1}
\begin{lemma}\label{lem:Levin2-prime}
Consider a state $|\Omega'\rangle=U|\Omega\rangle$ such that $U$ is a (not necessarily local) unitary supported inside $I_1$ or $I_2$ or $(I_1\cup I_2)^c$, and $|\Omega\rangle$ has a uniform gap $\epsilon$ in the symmetry sector. Then for every $\delta>0$, there exists a length $\lambda_\delta=\tilde{O} \left(1/\varepsilon^{2}\right)$ such that if $|\Omega'\rangle$ is $\delta$ weakly-ordered on $I_1,I_2$, then it is
$\delta/2$ strongly-ordered on $I_1,I_2$ for all $I_1,I_2$ that are separated by
a distance of at least $\lambda_\delta$. Likewise, if $|\Omega\rangle$ is $\delta$
weakly-disordered on $I_1,I_2$ with even (odd) parity, then it is $\delta/2$
strongly-disordered on $I_1,I_2$ with even (odd) parity for all $I_1,I_2$
separated by at least $\lambda_\delta$.
\end{lemma}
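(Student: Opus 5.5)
The plan is to reduce Lemma~\ref{lem:Levin2-prime} to Lemma~\ref{lem:Levin2} applied to the gapped state $|\Omega\rangle$, in the same way Lemmas~\ref{lem:factorization-prime} and~\ref{lem:Levin3-prime} were reduced to their unprimed versions. The idea is to conjugate the relevant witness operator by $U$ and check that every defining property of weak or strong (dis)order is preserved, when it is needed. We use that $U$ is symmetric and supported in exactly one of the three regions $I_1$, $I_2$, or $(I_1\cup I_2)^c$. We treat these three cases separately.

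\textbf{Case 1: $\supp(U)\subset(I_1\cup I_2)^c$.} Suppose $|\Omega'\rangle$ is $\delta$ weakly-disordered, with witness $A$ supported on $I_1\cup I_2$ and $|\langle AS_J\rangle_{\Omega'}|\ge\delta$. Write
\begin{equation}
\langle AS_J\rangle_{\Omega'}=\langle A\,U^\dagger S_J U\rangle_\Omega .
\end{equation}
Since $U$ is symmetric and supported in $(I_1\cup I_2)^c=J\cup J^c$, we can split $U$ into its parts on $J$ and on $J^c$. We would like to commute $U$ through $A$ and through $S_J$ to obtain $\langle AS_J\rangle_\Omega$. However, $U$ need not commute with $S_J$ itself; it only commutes with $S=S_JS_{J^c}S_{I_1}S_{I_2}$. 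To handle this, we use the structure of the rare-region model: $U$ is a product of symmetric unitaries on connected rare regions, so each factor lies inside $J$ or inside $J^c$ and therefore commutes with $S_J$. We would state this as the natural reading of ``symmetric unitary supported in one interval.'' With this in hand, $\langle AS_J\rangle_{\Omega'}=\langle AS_J\rangle_\Omega$, so $|\Omega\rangle$ is $\delta$ weakly-disordered. Lemma~\ref{lem:Levin2} then gives $O_1,O_2$ with $|\langle O_1O_2S_J\rangle_\Omega|\ge\delta/2$, and reversing the same identity transfers this bound to $|\Omega'\rangle$ with the same operators. The ordered case is easier still: $A$ and $O_1O_2$ are supported on $I_1\cup I_2$, so they commute with $U$, and the expectation values are literally equal.

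\textbf{Case 2: $\supp(U)\subset I_1$ (the case $I_2$ is symmetric).} Set $\tilde A=U^\dagger AU$. It is still supported on $I_1\cup I_2$ and has norm at most $1$. Because $U$ is symmetric and lives on $I_1$, it commutes with $S_{I_1}$ and with $S_{I_2}$, so $\tilde A$ keeps the parities of $A$. It also commutes with $S_J$ since $J$ is disjoint from $I_1$. Hence $\langle AS_J\rangle_{\Omega'}=\langle\tilde AS_J\rangle_\Omega$, so $|\Omega\rangle$ is $\delta$ weakly-disordered with the same parity. Lemma~\ref{lem:Levin2} produces $O_1,O_2$ for $|\Omega\rangle$. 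We then take $O_1'=UO_1U^\dagger$ and $O_2'=O_2$. These are supported on $I_1$ and $I_2$, have the same parity and norm bounds, and satisfy $\langle O_1'O_2S_J\rangle_{\Omega'}=\langle O_1O_2S_J\rangle_\Omega$, which gives the desired $\delta/2$ bound. The ordered case is identical, with $S_J$ removed.

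\textbf{Main obstacle.} The only delicate point is the commutation of $U$ with the partial string $S_J$ in Case 1. Symmetry of $U$ under the global $S$ alone does not imply it. For example, a symmetric unitary straddling both $J$ and $J^c$ could carry odd charges on each side, and conjugation would then flip the sign of, or destroy, the string correlator. I would resolve this by requiring $U$ to be symmetric on each connected component of its support, which is what the rare-region model intends. Failing that, I would restrict to the case where $(I_1\cup I_2)^c\cap\supp(U)$ lies entirely in $J$ or entirely in $J^c$, which suffices for the application to $C_i$. The remaining steps are direct bookkeeping of supports and parities.
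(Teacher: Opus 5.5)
Your proposal is correct and follows essentially the same route as the paper. Both use the same case split on $\supp(U)$: conjugation of the witnesses by $U$ when $U\subset I_1$ (the paper writes out only the ordered case and calls the disordered one similar), and direct commutation when $U\subset(I_1\cup I_2)^c$. In that last, disordered subcase the paper makes exactly the extra assumption you flag, asserting ``by the LPPL principle'' that $U=U_JU_{J'}$ with each factor symmetric. Note that your $J^c$ should read $J'=(I_1\cup I_2\cup J)^c$.
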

}
\begin{proof}
Note that if $|\Omega'\rangle$ is $\delta$ weakly-ordered on $I_1,I_2$, then there exist an operator $A$ supported on $I_1\cup I_2$, such that
\begin{equation}
    \left|\langle A\rangle_{\Omega'}\right|\ge \delta.
\end{equation}
Suppose that $U$ is supported in $(I_1\cup I_2)^c$, and thus commutes with $A$, we have
\begin{equation}
    \left|\langle A\rangle_{\Omega'}\right|=\left|\langle A\rangle_{\Omega}\right|\ge \delta,
\end{equation}
so $|\Omega\rangle$ is also $\delta$ weakly-ordered on $I_1,I_2$. By Lemma.~\ref{lem:Levin2}, $|\Omega\rangle$ is $\delta/2$ strongly-ordered on $I_1,I_2$ for all $I_1,I_2$ that are separated by a distance of at least $\lambda_\delta$. Since $U$ commutes with $O_1$ and $O_2$, $|\Omega'\rangle$ is also $\delta/2$ strongly-ordered on $I_1,I_2$.

Similarly, if $|\Omega'\rangle$ is $\delta$ weakly-disordered on $I_1,I_2$ with even (odd) parity, then there exist an operator $A$ supported on $I_1\cup I_2$, such that
\begin{equation}
    \left|\langle A S_{J}\rangle_{\Omega'}\right|\ge \delta,
\end{equation}
where $J$ is the interval between $I_1$ and $I_2$. Note that by the LPPL principle, $U$ should factorize into unitaries on $J$ and $J'$ respectively, where $J'=(I_1\cup I_2\cup J)^c$, \ie
\begin{equation}
    U=U_J U_{J'}.
\end{equation}
Since $U_{J}$ and $U_{J'}$ should be symmetric, they should commute with $S_J$, so we have
\begin{equation}
    \left|\langle A S_{J}\rangle_{\Omega'}\right|=\left|\langle A S_{J}\rangle_{\Omega}\right|\ge \delta,
\end{equation}
so $|\Omega\rangle$ is also $\delta$ weakly-disordered on $I_1,I_2$ with even (odd) parity. By Lemma.~\ref{lem:Levin2}, $|\Omega\rangle$ is $\delta/2$ strongly-disordered on $I_1,I_2$ with even (odd) parity for all $I_1,I_2$ separated by at least $\lambda_\delta$. Since $U$ commutes with $A$ and $S_J$, $|\Omega'\rangle$ is also $\delta/2$ strongly-disordered on $I_1,I_2$ with even (odd) parity.

On the other hand, suppose that $U$ is supported in $I_1$. If $|\Omega'\rangle$ is $\delta$ weakly-ordered on $I_1,I_2$, then there exist an operator $A$ supported on $I_1\cup I_2$, such that
\begin{equation}
    \left|\langle A\rangle_{\Omega'}\right|\ge \delta.
\end{equation}
Since $U$ is supported on $I_1$, for the state $|\Omega\rangle$
\begin{equation}
    \left|\langle U^\dagger AU\rangle_{\Omega}\right|\ge \delta,
\end{equation} 
rendering $|\Omega\rangle$ $\delta$ weakly-ordered on $I_1,I_2$. By Lemma.~\ref{lem:Levin2}, 
$|\Omega\rangle$ is $\delta/2$ strongly-ordered on $I_1,I_2$, conjugating by $U$ on the order parameters, we have $|\Omega'\rangle$ is also $\delta/2$ strongly-ordered on $I_1,I_2$. For the case with disorder parameter, the proof is also similar. Therefore, we complete the proof of Lemma.~\ref{lem:Levin2-prime}.
\end{proof}

We are now ready for the analysis of the events $B_i$. We have already shown by Lemma.~\ref{lem:Levin2-prime} that if the events $B_i$ occur, then the state $|\Omega_\omega\rangle$ is weakly-disordered on the region $I_i^-$ and $I_i^+$. By Lemma.~\ref{lem:Levin3-prime}, similar to the estimation as in Sec.~\ref{subsec:tradeoff_proof}, we have that suppose there are six events $B_{i,j,k}$ and $C_{i,j,k}$ occur, with $\dist(i,j),\dist(i,k),\dist(j,k)\gg \ell'$, then there exist an $O(1)$ disorder parameter defined on two of the three sites $i,j,k$. Following the same estimation as in Sec.~\ref{subsec:tradeoff_proof}, if the ensemble does not have an $(\eta',\ell')$ disorder parameter, by the Markov's inequality and the $\epsilon$-net argument, we have
\begin{equation}
    \mathbb{P}(B_i\cap C_i\cap B_j\cap C_j\cap B_k\cap C_k)<384\eta' N_{\ell'}(\frac{1}{384})^2.
\end{equation}
Since $B_i\cap C_i$, $B_j\cap C_j$ and $B_k\cap C_k$ are independent, we have
\begin{equation}
    \mathbb{P}(B_i\cap C_i)< \left(384\eta' N_{\ell'}(\frac{1}{384})^2\right)^{1/3}.
\end{equation}

\subsubsection{Trade-off relation}
We begin by generalizing the incompatibility condition for the gapped case without the rare region effect. Recall that in that case, the incompatibility between the events $\overline{A_i}$ and $\overline{B_i}$ is shown by Lemma.~\ref{lem:Levin2}. For the case with rare region effect, we can similarly show that if the event $\overline{B_i}\cap C_i$ occurs, then
\begin{equation}
     I_\Omega(I_i:I_i')>\frac{1}{32}
\end{equation}
which implies that the event $\overline{A_i}$ cannot occur. Therefore, we have $C_i\subset A_i\cup B_i$, and thus
\begin{equation}
    \mathbb{P}(C_i)\le \mathbb{P}(A_i\cap C_i)+\mathbb{P}(B_i\cap C_i).
\end{equation}

We are now ready to prove the trade-off relation for nearly gapped ensembles. Assume that the ensemble does not have an $(\eta,\ell)$ order parameter and an $(\eta',\ell')$ disorder parameter, we have
\begin{align}
        \mathbb{P}(A_i\cap C_i) &< 2^{2\ell+1} \sqrt{6\eta}\;\frac{N_\ell\left(2^{-4\ell-3}\varepsilon_\ell^{ 2}/3\right)}{\varepsilon_\ell}, \\
        \mathbb{P}(B_i\cap C_i) &< \left(384\eta' N_{\ell'}(\frac{1}{384})^2\right)^{1/3}.
\end{align}
On the other hand, we have $\mathbb{P}(C_i)>p_0$ is some $O(1)$ constant. Therefore, if we take
\begin{equation}
    \begin{split}
        &2^{2\ell+1} \sqrt{6\eta}\;\frac{N_\ell\!\left(2^{-4\ell-3}\varepsilon_\ell^{2}/3\right)}{\varepsilon_\ell} \\
        +\,&\left(384\eta'\, N_{\ell'}\!\left(\tfrac{1}{384}\right)^2\right)^{1/3} = p_0,
    \end{split}
\end{equation}
then we have $\mathbb{P}(C_i)> \mathbb{P}(A_i\cap C_i)+\mathbb{P}(B_i\cap C_i)$, a contradiction. Therefore, a nearly gapped ensemble must have either an $(\eta,\ell)$ order parameter or an $(\eta',\ell')$ disorder parameter, which completes the proof of the trade-off relation for nearly gapped ensembles.

\subsection{Mutual exclusivity for nearly gapped ensembles}\label{subsec: exclusion-rare}
{We now show that a nearly gapped ensemble cannot have both an $O(1)$ order parameter and an $O(1)$ disorder parameter.

Suppose for contradiction that both a $(\delta,\ell)$ order parameter and a $(\delta,\ell)$ disorder parameter exist. Define events $A$ and $B$ as in Sec.~\ref{subsec:exclusion}: $A$ is the event that a disorder realization $\omega$ yields a state $|\Omega_\omega\rangle$ satisfying
\begin{equation}
    \left|\langle O_{i_1} O_{i_3}\rangle_{\Omega_\omega}\right|\ge \frac{\delta}{2},
\end{equation}
and $B$ is the event that
\begin{equation}
    \left|\langle O_{i_2} O_{i_4} \prod_{k=i_2+1}^{i_4} S_k \rangle_{\Omega_\omega}\right| \ge \frac{\delta}{2}.
\end{equation}
Since the probabilistic argument of Sec.~\ref{subsec:exclusion} does not depend on whether rare regions are present, it applies unchanged and gives $\mathbb{P}(A)\ge \delta/(2-\delta)$ and $\mathbb{P}(B)\ge \delta/(2-\delta)$. 

We can now pick $i_{1,2,3,4}$ to be evenly spaced in the whole geometry, and let $R_{k}$ denote the event that there exists a rare region covering the site $i_k$ of size at least $L/12$. By the nearly gapped assumption,
\begin{equation}
    \mathbb{P}(R_{k})\sim p^{L/12}\;.
\end{equation}
Consider the event $R$ that there is a rare region that spans one of the four sites, \ie $R\coloneq \bigcup_{k} R_{k}$. By the union bound,
\begin{equation}
    \mathbb{P}(R)\leq \sum_{k}\mathbb{P}(R_{k})\sim O(p^{L/12})\;.
\end{equation}
We now show that events $A$ and $B$ are not merely artifacts of large rare regions smearing out local operators near $i_{1,2,3,4}$. By the union bound,
\begin{equation}
    \begin{split}
    \mathbb{P}(A\cap\overline{R})
    &\ge \mathbb{P}(A)-\mathbb{P}(R)\ge \frac{\delta}{2-\delta}-O(p^{L/12}),\\
    \mathbb{P}(B\cap\overline{R})
    &\ge \mathbb{P}(B)-\mathbb{P}(R)\ge \frac{\delta}{2-\delta}-O(p^{L/12}).
    \end{split}
\end{equation}
Taking the system size to be large, both right-hand sides give $O(1)$ constant lower bounds.

However, for any realization $\omega$ for which $A\cap \overline{R}$ occurs, there exists a $(\delta/2,L/12)$ order parameter on sites $i_{1,3}$ for some gapped state. Similarly, for any realization $\omega$ for which $B\cap \overline{R}$ occurs, there exists a $(\delta/2,L/12)$ disorder parameter on sites $i_{2,4}$ for some gapped state. This violates the absence of phase separation condition for our ensemble, a contradiction.

This completes the proof that a nearly gapped ensemble cannot have both an $O(1)$ order parameter and an $O(1)$ disorder parameter in the presence of rare regions.}

\subsection{Parity of disorder parameter for nearly gapped ensembles}\label{subsec: disorderparity-rare}
{We now show that any disorder parameter of a nearly gapped ensemble must parity even. 

The key idea is similar to Sec.~\ref{subsec:disorderparity}, and we treat the rare region effects as in Sec.~\ref{subsec: exclusion-rare}.

Suppose for contradiction that the ensemble has a $(\delta,\ell)$ disorder parameter with odd parity. Place $n$ pairs of sites $(i_k,i_{k+n})$ for $k=1,\ldots,n$ evenly in the geometry, each pair separated by $L/2$. Define $B_k'$ as the event that the realization $\omega$ yields a state satisfying
\begin{equation}
    \left|\left\langle O_{i_k} O_{i_{k+n}} \prod_{m=i_k+1}^{i_{k+n}} S_m \right\rangle_{\Omega_\omega}\right| \ge \frac{\delta}{2}.
\end{equation}
Since the probabilistic argument of Sec.~\ref{subsec:exclusion} remains valid, we have $\mathbb{P}(B_k')\ge\delta/(2-\delta)$ for each $k$. Choosing $n$ large enough that $\sum_k\mathbb{P}(B_k')>1$, the second-moment argument of Sec.~\ref{subsec:disorderparity} then produces indices $k\neq l$ with all four sites $i_k,i_{k+n},i_l,i_{l+n}$ pairwise well-separated with a distance $L/2n$, such that $\mathbb{P}(B_k'\cap B_l')=O(1)$.

Since $\mathbb{P}(B_k'\cap B_l')=O(1)$, these events cannot merely come from rare regions near the four sites. For each $j\in\{i_k,i_{k+n},i_l,i_{l+n}\}$ let $R_j$ denote the event that a rare region of size at least $L/(6n)$ covers site $j$, and let $R=\bigcup_j R_j$. By the nearly gapped assumption,
\begin{equation}
    \mathbb{P}(R_j)\sim p^{L/(6n)},
\end{equation}
and by the union bound,
\begin{equation}
    \mathbb{P}(R)\leq\sum_j\mathbb{P}(R_j)\sim O(p^{L/(6n)}).
\end{equation}
Hence
\begin{equation}
    \mathbb{P}(B_k'\cap B_l'\cap\overline{R})\ge\mathbb{P}(B_k'\cap B_l')-\mathbb{P}(R)>0
\end{equation}
for sufficiently large $L$. However, for any realization $\omega$ for which $B_k'\cap B_l'\cap\overline{R}$ occurs, the state $|\Omega_\omega\rangle$ has two odd-parity disorder parameters at the well-separated pairs $(i_k,i_{k+n})$ and $(i_l,i_{l+n})$, contradicting Lemma~\ref{lem:Levin_thm4}. Therefore the disorder parameter must be parity-even, which completes the proof.
}

\section{Applications}\label{sec: Application}
In this section, we discuss some applications of our main theorem. We first extend our theorems to the case of symmetry-protected topological (SPT) phase, and demonstrate a well-defined string order parameter for SPT in dirty systems in Sec.~\ref{subsec:aspt}. We then use our main theorem to show that there is a Lieb-Schultz-Mattis-type constraint for average ensemble in Sec.~\ref{subsec:disorderLSM}. In Sec.~\ref{subsec:JordanWigner} we apply our results to fermion chains through the Jordan-Wigner transform and obtain trade-off relations between different disorder parameters.

\subsection{String order parameter for  SPT in dirty systems}\label{subsec:aspt}

While traditionally the concept of SPT phases is defined for clean systems, it can be naturally extended to dirty systems~\cite{PhysRevX.13.031016,Ma_2025}. For the clean systems, the string order parameter~\cite{PhysRevB.45.304,P_rez_Garc_a_2008,Pollmann_2012,Haegeman_2012} serves as a powerful tool to characterize the SPT phases, which is defined as the expectation value of two operators connected by a string of restricted symmetry operators. Note that this coincides with our definition of disorder parameter in the clean case, so it is natural to ask whether our definition in the disordered case can also be used to characterize SPT phases. In this section, we show that this is indeed the case, and we can use our main theorem to show that there is a well-defined string order parameter for some disordered SPT phases.

For simplicity, we focus on the case of $1d$ SPT phases protected by an exact $\mathbb{Z}_2\times \mathbb{Z}_2$ symmetry, which is the simplest non-trivial case. We also assume an average lattice translational symmetry, even though it is not involved directly in the topological protection.  More explicitly, we consider a family of disordered Hamiltonians $\{H_\omega\}$ that each $H_\omega$ is symmetric under the following symmetry:
\begin{equation}
    S^e = \prod_i Z_{2i},\quad S^o = \prod_i Z_{2i+1}\;,
\end{equation}
and the ensemble $\{H_\omega\}$ is translationally invariant. 

For $\{H_\omega\}$ to be an average SPT, we expect that there is no spontaneous symmetry breaking, or equivalently, there is no order parameter for our ensemble. This can also be a natural consequence if we assume that $\{|\Omega_\omega\rangle\}$ is an SRE ensemble~\cite{Ma_2025}, \ie the state $|\Omega_\omega\rangle$ is short-range correlated for almost all $\omega$. Thus by our main theorem, there must be a disorder parameter for $\{|\Omega_\omega\rangle\}$, precisely, we have the following corollary:
\begin{corollary}
    Let $\{H_\omega\}$ be a disordered average-SPT ensemble with exact
$S^e$ and $S^o$ symmetry and average translational
symmetry. Then, for any sites $i<j$, there exist local operators
$O_i$ and $O_j$, supported near $i$ and $j$ respectively, such that
\begin{equation}
    \bigl\langle O_i O_j \prod_{i<2k<j} Z_{2k} \bigr\rangle_{\Omega} = O(1).
\end{equation}
Similarly, there exist local operators $O_i'$ and $O_j'$, supported near
$i$ and $j$ respectively, such that
\begin{equation}
    \bigl\langle O_i' O_j' \prod_{i<2k+1<j} Z_{2k+1} \bigr\rangle_{\Omega} = O(1).
\end{equation}

Moreover, for any fixed ensemble $\{|\Omega_\omega\rangle\}$, the symmetry parities of these disorder parameters are uniquely determined:
the operators $O_i,O_j$ need to be even under $S^e$, and the operators
$O_i',O_j'$ need to be even under $S^o$. In addition, the parity of
$O_i,O_j$ under $S^o$ and the parity of $O_i',O_j'$ under $S^e$ are equal, and is uniquely fixed by the ensemble.
\end{corollary}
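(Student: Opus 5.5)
The plan is to apply Theorem~\ref{thm:tradeoff} (or its nearly gapped version from Sec.~\ref{sec:Rareregion}) twice: once with Ising symmetry $S=S^e$, once with $S=S^o$. Both are exact symmetries of every $H_\omega$. The ensemble has average translation by two sites, which is all the probabilistic arguments need, since $\mathbb{P}(A_i)$ and $\mathbb{P}(B_i)$ then depend only on $i \bmod 2$ and the bounds can be run on each sublattice.

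\textbf{Step 1 (no order parameter).} By the average-SPT assumption there is no spontaneous breaking of $S^e$ or $S^o$. More precisely, there are no $O(1)$ order parameters that are odd under $S^e$ or odd under $S^o$. For an SRE ensemble this follows from the fact that for almost every $\omega$ the connected correlators decay. Odd operators have vanishing one-point functions, so $|\langle O_iO_j\rangle_{\Omega_\omega}|$ is small for almost all $\omega$, and by dominated convergence its average vanishes at large separation.

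\textbf{Step 2 (existence and even parity).} Apply Theorem~\ref{thm:tradeoff} with $S=S^e$. Its existence part gives an $O(1)$ Edwards--Anderson disorder parameter with string $\prod_{i<2k<j}Z_{2k}$. Its parity part (Sec.~\ref{subsec:disorderparity}) forces the endpoints $O_i,O_j$ to be even under $S^e$. Repeating with $S=S^o$ gives $O_i',O_j'$, even under $S^o$, with string $\prod Z_{2k+1}$.

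\textbf{Step 3 (the remaining parities).} The endpoints can be chosen with definite $S^o$ parity, by projecting $X\mapsto\frac12(X\pm S^oXS^o)$. This projection commutes with the $S^e$ string and loses at most a factor $2$ in the lower bound. Uniqueness of this $S^o$-parity $\sigma$ is argued as in Sec.~\ref{subsec:exclusion}. Suppose both $\sigma=\pm$ occurred with $O(1)$ averages. Markov's inequality would give realizations with each, and both sets of realizations would have $O(1)$ probability. By the same-phase assumption, the symmetric quasi-adiabatic unitary connecting them dresses endpoints locally and preserves charges. So a single gapped state would carry $S^e$-string correlators with both $S^o$-charges. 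I would then invoke the clean statement that the string-order selection rule is a pumping invariant: in an MPS/gapped state, an $S^e$ string terminates with a fixed $S^o$ charge. This mirrors Lemma~\ref{lem:Levin_thm4}, and is proved as in Levin's theorem by fusing two strings.

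To show that the $S^e$-charge $\sigma'$ of $O_i'$ equals $\sigma$, I would combine the two strings. The product $\prod_{i<k<j}Z_k$ is a string of the total Ising symmetry $S^eS^o$. Multiplying the endpoint operators, together with the lower-bound-on-product Lemma~\ref{lem:Levin3} adapted to three-string configurations, produces a disorder parameter for $S^eS^o$ whose endpoints have $S^eS^o$ parity $\sigma\sigma'$. Theorem~\ref{thm:tradeoff} for $S^eS^o$ requires this parity to be even, so $\sigma=\sigma'$.

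\textbf{Main obstacle.} The hardest step is this last combination. The averaged absolute values do not multiply directly. I would handle it using local events: define $B_i$-type events for each string, bound their joint occurrence probability from below by a union bound (each has probability close to $1$ by Proposition~\ref{prop:tradeoff} with small $\eta$), and then apply the clean commutation argument on a single realization.
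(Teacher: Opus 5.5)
Steps 1 and 2 agree with the paper. Step 3, however, has a genuine gap in the disordered setting.

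Proposition~\ref{prop:tradeoff} only makes the \emph{local} events $B_i$ (for $S^e$ and for $S^o$) likely. The witnesses those events produce, and their parities, depend on $\omega$. The events that the \emph{fixed} endpoint operators $O_i,O_j$ (resp.\ $O_i',O_j'$) give a string correlator $\ge\delta/2$ are only guaranteed probability $\ge\delta/(2-\delta)$. So no union bound places both strings in one realization.

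A single-realization commutation argument built from $B_i$-events therefore only relates $\omega$-dependent witness parities $\sigma(\omega),\sigma'(\omega)$. It does not constrain the parities of the ensemble-level Edwards--Anderson parameters, which are what the corollary is about. The same simultaneity problem undermines the $S^eS^o$ detour. Obtaining an $O(1)$ averaged $\prod_k Z_k$ string from the two averaged strings needs both to be large on a common set of realizations of $O(1)$ probability. Lemma~\ref{lem:Levin_thm4} would further need two interleaved such strings in one state.

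The missing move is the one you already use for uniqueness. Pick $\omega_1$ where the $S^e$ string is large and $\omega_2$ where the $S^o$ string is large, after discarding realizations with large rare regions near the endpoints. Then transport one string into the other state with the symmetric finite-depth quasi-adiabatic circuit, which preserves endpoint parities (Appendix~\ref{app:quasi-adiabatic}).

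With both strings in one gapped state, put the $S^e$ endpoints at $i_1,i_3$ and the $S^o$ endpoints at $i_2,i_4$. Set $U=P_1P_3O_1O_3\prod_{i_1<2k\le i_3}Z_{2k}$ and $V=P_2P_4O_2O_4\prod_{i_2<2k+1\le i_4}Z_{2k+1}$. Lemma~\ref{lem:Levin_prop1} makes $|\Omega\rangle$ an approximate eigenvector of both with $O(1)$ eigenvalues. At the same time $UV=\sigma\sigma' VU$, because the $S^e$ string only sees the $S^e$-charge $\sigma'$ of $O_2$, and the $S^o$ string only sees the $S^o$-charge $\sigma$ of $O_3$. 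Hence $\sigma=\sigma'$ for every such pair, which gives equality and uniqueness at once.

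This is the paper's proof. It makes both your separate selection-rule lemma and the $S^eS^o$ detour unnecessary. Note also that if by ``fusing two strings'' you mean interleaving two $S^e$ strings as in Levin's theorem, no sign arises. All their endpoints are $S^e$-even, and the $S^o$-charge $\sigma$ is invisible to an $S^e$ string; only an $S^o$ string detects it.
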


We now provide the proof of this corollary. Since $\{H_\omega\}$ is an average SPT, there is no order parameter for $\{|\Omega_\omega\rangle\}$ under the symmetry $S^e$ and $S^o$, so by our main theorem, there must be a disorder parameter for $\{|\Omega_\omega\rangle\}$. This gives the existence of the operators $O_i,O_j,O_i',O_j'$. Moreover, as shown in Sec.~\ref{subsec:disorderparity}, as the disorder parameter for $S^e$, $O_i$ and $O_j$ must be even under $S^e$, and as the disorder parameter for $S^o$, $O_i'$ and $O_j'$ must be even under $S^o$. 

To show that the parity of $O_i,O_j$ under $S^o$ and the parity of $O_i',O_j'$ under $S^e$ are unique, assume the contrary, without loss of generality, that $O_i$ and $O_j$ are odd under $S^o$ but $O_i'$ and $O_j'$ are even under $S^e$. Making use of the translational symmetry, we can then choose four sites $i_1<i_2<i_3<i_4$ such that $\dist(i_k,i_l)\gg \ell$ for all $k\neq l$, with the operator $O_k$ defined near site $i_k$, and 
\begin{align}
&\mathbb{E}_\omega|\bigl\langle O_1 O_3 \prod_{i_1<2k<i_3} Z_{2k} \bigr\rangle_{\Omega_\omega}| = \delta\\
&\mathbb{E}_\omega|\bigl\langle O_2 O_4 \prod_{i_2<2k+1<i_4} Z_{2k+1} \bigr\rangle_{\Omega_\omega}| = \delta,
\end{align}
where $\delta$ is some $O(1)$ constant, $O_{1,3}$ are odd under $S^o$ and even under $S^e$, and $O_{2,4}$ are even under $S^o$ and even under $S^e$. 

We first prove that this is impossible for a gapped clean system without disorder averaging. To this end, we invoke the following lemma:
\begin{lemma}[Proposition 1, Ref.~\cite{Levin_2020}]\label{lem:Levin_prop1}
    Let $A$ be an operator that is even under $S$, has norm of at most 1 and is supported on $I_1 \cup I_2$ where $I_1, I_2$ are two intervals separated by a distance of at least $\ell$. Then,
for any symmetric state $|\Omega\rangle$ with gap $\epsilon$ in its symmetry sector, we have
\begin{equation}
\| P_{\tilde{I}_1}(\ell) P_{\tilde{I}_2}(\ell) A S_J|\Omega\rangle-|\Omega\rangle\langle\Omega| A S_J|\Omega\rangle \| \leq \mathcal{O}(\sqrt{g(\ell)})
\end{equation}
where $ P_{\tilde{I}_1}(\ell) $ is a projection operator that is even under $S$ and supported on $\tilde{I}_j \equiv\left\{i: \operatorname{dist}\left(i, I_j\right)<\ell / 2\right\}$, $J$ denotes the interval between $I_1$ and $I_2$ (going clockwise), and $g(\ell)$ is defined as in Lemma.~\ref{lem:factorization}.
\end{lemma}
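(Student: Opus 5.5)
The plan is to follow the proof of Proposition 1 in Ref.~\cite{Levin_2020} closely, since the statement is quoted from there. The central idea is that $A S_J|\Omega\rangle$ looks like $|\Omega\rangle$ everywhere except near $I_1$ and $I_2$. Only near these two intervals can it differ from the ground state, and the differences there can be projected away by local projectors.

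\textbf{Step 1: localize the excitation.} Because $A$ is even under $S$, we have $A S_J = S_J A'$ for the conjugated operator $A' = S_J A S_J$, which is supported on $I_1\cup I_2$ with $\|A'\|\le 1$. Away from $\tilde I_1\cup\tilde I_2$, the state $A S_J|\Omega\rangle$ has the same reduced density matrix as $|\Omega\rangle$. Inside $J$ this holds because $S_J$ acts there as a symmetry restricted to a region. Since $|\Omega\rangle$ is $S$-symmetric, the effect of $S_J$ is invisible in the bulk of $J$ and appears only at the endpoints of $J$, which lie near $I_1$ and $I_2$.

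\textbf{Step 2: construct the projectors.} The gap $\epsilon$ in the symmetry sector gives a local approximate ground-state projector $P_{\tilde I_j}(\ell)$, built by filtering the Hamiltonian terms near $I_j$. One concrete route is an AGSP, or Hastings' quasi-local filter $\int dt\, e^{iHt}(\cdot)e^{-iHt}\,\hat f(t)$, truncated to $\tilde I_j$. Each projector is made even under $S$ by averaging it with its conjugate under $S$, and symmetrized further if needed. It should satisfy two properties:
\begin{equation}
P_{\tilde I_j}|\Omega\rangle\approx|\Omega\rangle,
\end{equation}
and, on states that agree with $|\Omega\rangle$ outside $\tilde I_j$, $P_{\tilde I_j}$ approximately projects onto $|\Omega\rangle$, with errors of order $g(\ell)$ in expectation. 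Applying the two projectors in turn then maps $AS_J|\Omega\rangle$ to $c\,|\Omega\rangle$. Taking the overlap with $\langle\Omega|$ fixes $c=\langle\Omega|AS_J|\Omega\rangle$. The norm error comes out as $\sqrt{g}$ because a bound on the expectation of $1-P$ of order $g$ turns into a vector-norm bound of order $\sqrt{g}$.

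\textbf{Step 3: use uniqueness of the ground state.} The rigorous core is the claim that any state which is locally ground-state-like everywhere, meaning it has small local energy on every term, must be close to $|\Omega\rangle$ within the even sector. This follows from the gap: the operator $\sum_j (1-P_{\tilde I_j})$ together with the bulk constraint controls $H-E_0$. Parity matters here. The vector $AS_J|\Omega\rangle$ is even under $S$, because $A$, $S_J$ and $|\Omega\rangle$ are all even. It therefore lies in the sector where $|\Omega\rangle$ is the unique gapped ground state. Without evenness, a GHZ-like partner state in the odd sector could break the argument.

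\textbf{Main obstacle.} The hardest step is Step 2. It requires a projector supported strictly on $\tilde I_j$ that is even, annihilates only excitations localized near $I_j$, and has exponentially small error. It also requires showing that the two projectors together, rather than a global operator, suffice to reach $|\Omega\rangle$. I would first try Levin's construction: restrict the quasi-adiabatic filter of $H$ to the $\ell/2$-neighbourhood, then bound the errors using Lieb–Robinson bounds, which produce the polynomial-times-$e^{-c_1\epsilon\ell}$ form of $g$.
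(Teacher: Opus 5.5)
\textbf{There is a genuine gap.} The paper does not prove this lemma; it imports Proposition~1 of Ref.~\cite{Levin_2020} verbatim. Your outline leaves the decisive step (Step 2) to ``Levin's construction'', i.e.\ to the very result being proved, so it is not yet a proof. The mechanism you sketch for that step also does not work as stated.

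Property (ii) applies only to states that agree with $|\Omega\rangle$ outside $\tilde I_j$. But $AS_J|\Omega\rangle$ is excited near both intervals, so (ii) cannot be applied to either projector first. Step 1 does not repair this. Reduced density matrices agree on regions lying entirely inside $J$ or entirely inside $J^c$, but not on regions straddling an endpoint of $J$.

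A concrete counterexample: take the even GHZ ground state of $-\sum_k Z_kZ_{k+1}$ with $S=\prod_k X_k$ and $A=1$. For $a\in J$, $b\notin J$ one has $\langle S_J Z_aZ_bS_J\rangle_\Omega=-\langle Z_aZ_b\rangle_\Omega=-1$. So $S_J|\Omega\rangle$ carries a domain wall at each end, and no operator localized near the ends maps it back to $|\Omega\rangle$. Here the proposition asserts $P_{\tilde I_1}(\ell)P_{\tilde I_2}(\ell)S_J|\Omega\rangle\approx 0$, not a healed ground state. Healing one end at a time implicitly assumes $AS_J|\Omega\rangle\approx W_1W_2|\Omega\rangle$ with $W_j$ localized near $I_j$. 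That is essentially the existence of a disorder parameter, which the proposition cannot presuppose, because it must hold uniformly in ordered and disordered states.

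\textbf{Step 3 fails as formulated.} Smallness of each local energy does not bound the total excess. The gap alone gives no inequality controlling $H-E_0$ by $\sum_j(1-P_{\tilde I_j})$ plus a ``bulk constraint''. For a frustrated $H$ the ground state does not even minimize the individual terms.

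What does hold is an exact statement. Assume each local term $h_k$ is $S$-symmetric. Any $h_k$ not meeting $\tilde I_1\cup\tilde I_2$ lies inside $J$ or inside $J^c$, so $S_Jh_kS_J=Sh_kS=h_k$ and $[h_k,X]=0$ for $X=P_{\tilde I_1}(\ell)P_{\tilde I_2}(\ell)AS_J$. Hence $(H-E_0)X|\Omega\rangle=[H,X]|\Omega\rangle$ involves only terms near the two intervals. Since $\psi=X|\Omega\rangle$ is even, the sector gap gives $\|\psi-|\Omega\rangle\langle\Omega|\psi\rangle\|^2\le\epsilon^{-1}\langle\psi|(H-E_0)|\psi\rangle$. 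Your parity observation is correct, and this is exactly where the evenness of $A$ and of the projectors enters.

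Together with $\|(1-P_{\tilde I_j})|\Omega\rangle\|$ small, this is one sufficient reduction: construct even projectors on $\tilde I_j$, adapted to the ground state rather than to the bare terms, that make this localized excess $O(g(\ell))$ for every even $A$. That construction and estimate is where the real work lies, and your proposal does not supply it.

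\textbf{The projectors themselves are not well defined.} Averaging a projector with its $S$-conjugate does not give a projector: $P=|+\rangle\langle+|$ with $S=Z$ averages to half the identity. A truncated Heisenberg-picture filter is a map on operators, not an operator on $\tilde I_j$. Defining $P_{\tilde I_j}(\ell)$ as a spectral projector of an $S$-even Hermitian operator supported on $\tilde I_j$ would make evenness automatic.
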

Applying this lemma to operators $O_{1,2,3,4}$ and let $\ell_{\min}=\min_{k,l}[\dist(i_k,i_l)]-2\ell$, we can define
\begin{align}
     U&=P_1P_3 O_1O_3\prod_{i_1<2k\leq i_3}Z_{2k}\;,\\
    V&=P_2P_4 O_2O_4\prod_{i_2<2k+1\leq i_4}Z_{2k+1}\;,
\end{align}
where $P_k\coloneqq P_{\tilde{I}_k}(\ell_{\min})$, we have
\begin{align}
    \opnorm{U|\Omega\rangle-\delta|\Omega\rangle}& \leq O(\sqrt{g(\ell_{\min})})\;,\\
    \opnorm{V|\Omega\rangle-\delta|\Omega\rangle}& \leq O(\sqrt{g(\ell_{\min})})\;.
\end{align}
 It then follows that
\begin{align}
    \opnorm{UV|\Omega\rangle-\delta^2|\Omega\rangle}& \leq O(\sqrt{g(\ell_{\min})})\\
    \opnorm{VU|\Omega\rangle-\delta^2|\Omega\rangle}& \leq O(\sqrt{g(\ell_{\min})})\;.
\end{align}
However, by the parity of the operators $O_{1,2,3,4}$, we have $UV=-VU$, which contradicts the above inequalities for large enough $\ell_{\min}$. We thus complete the proof for the clean case.

For the dirty case, we can apply the same argument as in Sec.~\ref{subsec:exclusion}. Suppose the contrary, that there exist operators $O_{1,2,3,4}$ as described above, we can consider the events $B_{13}$ that
\begin{equation}
    |\bigl\langle O_1 O_3 \prod_{i_1<2k<i_3} Z_{2k} \bigr\rangle_{\Omega_\omega}| \ge \delta/2,
\end{equation}
and $B_{24}$ that
\begin{equation}
    |\bigl\langle O_2 O_4 \prod_{i_2<2k+1<i_4} Z_{2k+1} \bigr\rangle_{\Omega_\omega}| \ge \delta/2,
\end{equation}
with probability $\mathbb{P}(B_{13})\ge \delta/(2-\delta)$ and $\mathbb{P}(B_{24})\ge \delta/(2-\delta)$. 

{We can also take the rare region effects into account by the same argument as in Sec.~\ref{subsec: exclusion-rare}. We can take $i_{1,2,3,4}$ evenly spaced in the whole geometry, and let $R_i$ denote the event that there exists a rare region of size at least $L/12$ that covers site $i$, and let $R=\bigcup_i R_i$. Following the arguments in Sec.~\ref{subsec: exclusion-rare}, we have
\begin{equation}
    \mathbb{P}(B_{13}\cap\overline{R})=O(1)\;,\quad \mathbb{P}(B_{24}\cap\overline{R})=O(1)\;,
\end{equation}
for large enough system size $L$. Thus for any realization $\omega$ for which $B_{13}\cap \overline{R}$ occurs, there exists a $(\delta/2,L/12)$ disorder parameter on sites $i_{1,3}$ for some gapped state with the same parity as $O_{1,3}$. Similarly, for any realization $\omega$ for which $B_{24}\cap \overline{R}$ occurs, there exists a $(\delta/2,L/12)$ disorder parameter on sites $i_{2,4}$ for some gapped state  with the same parity as $O_{2,4}$. These two gapped states can be connected by a quasi-adiabatic evolution without changing the endpoint parity (see Appendix.~\ref{app:quasi-adiabatic}). By the same argument as in the clean case, this is not possible for sufficiently large system size, a contradiction. Therefore, the parity of $O_i,O_j$ under $S^o$ and the parity of $O_i',O_j'$ under $S^e$ are unique, which completes the proof of this corollary. 
}

\subsection{Lieb-Schultz-Mattis-type constraint for disordered ensembles}\label{subsec:disorderLSM}

The Lieb-Schultz-Mattis (LSM) theorem~\cite{Lieb_1961} and its generalizations~\cite{LSM_oshikawa,LSM_HigherD_Hastings,liu2025entanglement, liu2025twisted,yi2025lovasz,Gioia2021} provide powerful constraints on the low-energy physics of quantum many-body systems with certain symmetries and filling constraints. In this section, we show that our main theorem provides a powerful tool to derive LSM-type constraints for disordered ensembles~\cite{PhysRevX.13.031016,YouOshikawa,panahi2026quantumcriticalitystrongrandomness,Kimchi_2018,Xu_2025,liu2026average}. 

The key idea is to show that under certain symmetries, the system cannot have a disorder parameter, and thus by our main theorem, if the disorder ensemble is nearly gapped, it must have an order parameter, which implies spontaneous symmetry breaking. 

For simplicity, we focus on the case of 1D spin chains with an exact $\mathbb{Z}_2\times \mathbb{Z}_2$ symmetry and average translational symmetry. More explicitly, we consider a family of disordered Hamiltonians $\{H_\omega\}$ defined on a ring geometry with even number of qubits, that each $H_\omega$ is symmetric under the following symmetry:
\begin{equation}
    S^x= \prod_i X_{i},\quad S^z = \prod_i Z_{i}\;,
\end{equation}
and the ensemble $\{H_\omega\}$ is translationally invariant. Note that on a single site, the representation of the symmetry group $\mathbb{Z}_2\times \mathbb{Z}_2$ is projective ($X_iZ_i=-Z_iX_i$), so we expect that there is a LSM-type constraint for this system. More explicitly, we have the following corollary:
\begin{corollary}
    Let $\{H_\omega\}$ be a nearly gapped ensemble with exact $S^x$ and $S^z$ symmetry and average translational symmetry. Then $\{H_\omega\}$ must have an $O(1)$ order parameter for either $S^x$ or $S^z$.
\end{corollary}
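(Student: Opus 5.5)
We argue by contradiction. Suppose the nearly gapped ensemble has no $O(1)$ order parameter for $S^x$ and none for $S^z$. Theorem~\ref{thm:tradeoff} holds for nearly gapped ensembles by Sec.~\ref{sec:Rareregion}. Applying it to $S^x$ and to $S^z$ separately, each treated as an Ising symmetry with $S_i=X_i$ and $S_i=Z_i$, gives two $O(1)$ disorder parameters, and both must have even parity. Concretely, there are operators $O_1,O_3$, even under $S^x$, with
\begin{equation}
\mathbb{E}_\omega\Bigl|\bigl\langle O_1O_3\prod_{i_1<k\le i_3}X_k\bigr\rangle_{\Omega_\omega}\Bigr|\ge\delta ,
\end{equation}
and operators $O_2,O_4$, even under $S^z$, with the analogous bound for the $Z$-string. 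By average translation symmetry we may take $i_1<i_2<i_3<i_4$ evenly spaced on the ring.

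The core of the argument is a clean-state contradiction that parallels the SPT corollary. Using Lemma~\ref{lem:Levin_prop1}, we build the dressed string operators
\begin{align}
U&=P_1P_3O_1O_3\prod_{i_1<k\le i_3}X_k,\\
V&=P_2P_4O_2O_4\prod_{i_2<k\le i_4}Z_k.
\end{align}
Each acts on $|\Omega\rangle$ as multiplication by its expectation value, up to $O(\sqrt{g(\ell_{\min})})$. Consequently both $UV|\Omega\rangle$ and $VU|\Omega\rangle$ are close to $\langle U\rangle\langle V\rangle|\Omega\rangle$.

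Now we compute the commutation sign between $U$ and $V$. The two strings overlap on $(i_2,i_3]$, and their endpoints interleave. Moving the $X$-string past the $Z$-string contributes $(-1)^{|(i_2,i_3]|}$. Moving $O_2$, which sits inside the $X$-string, past that string contributes its $S^x$-parity. Moving $O_3$, which sits inside the $Z$-string, past that string contributes its $S^z$-parity. The endpoint operators are even under their own string symmetry, but their parity under the other symmetry is not yet fixed.

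Because of this, the plan needs one extra step. We use the freedom in choosing $i_2,i_3$ and translation invariance of the ensemble to shift $i_3$ by one site. This flips $(-1)^{|(i_2,i_3]|}$ while leaving the endpoint parities unchanged, since those parities are fixed properties of the ensemble: uniqueness follows as in the SPT corollary, via the quasi-adiabatic argument of Appendix~\ref{app:quasi-adiabatic}. One of the two choices therefore gives $UV=-VU$ up to the norm-bounded errors. This forces $|\langle U\rangle\langle V\rangle|\lesssim\sqrt{g(\ell_{\min})}$, which contradicts $|\langle U\rangle|,|\langle V\rangle|\ge\delta/2$ once $L$ is large.

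The per-site projective representation $XZ=-ZX$ is exactly what produces this odd-length sign. This also explains why an even chain length is needed: it makes the total sign consistent on the periodic ring.

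To pass to the disordered ensemble, we define events $B_{13}$ and $B_{24}$ on which the respective correlators are at least $\delta/2$. By the Markov argument of Sec.~\ref{subsec:exclusion}, each has probability at least $\delta/(2-\delta)$. We need a single realization in which both occur and no large rare region covers the four sites. We obtain it with the second-moment trick of Sec.~\ref{subsec:disorderparity}: we place many copies of the four-point configuration and show that some $B_{13}\cap B_{24}$ has $O(1)$ probability. We then subtract $\mathbb{P}(R)=O(p^{L/c})$ as in Sec.~\ref{subsec: disorderparity-rare}. For that realization, $|\Omega_\omega\rangle=U_R|\Omega\rangle$ with $U_R$ symmetric and supported away from the endpoints, so the clean argument applies to $|\Omega\rangle$ after conjugating the endpoint operators.

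The main obstacle is precisely this step of making the two disorder parameters coexist in a single realization. The two events involve different string symmetries, so independence and LPPL cannot be used. The second-moment bound requires that $B_{13}$ and $B_{24}$ be placed over a common family of configurations with controlled pairwise overlaps, and the parity-shift trick must be compatible with that placement. As a fallback, we would argue as in Sec.~\ref{subsec:exclusion}: take two realizations, connect them by a symmetric quasi-adiabatic unitary to transport the $Z$-string disorder parameter into the first state, and then apply the clean contradiction to that state.
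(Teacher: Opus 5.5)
\textbf{There is a genuine gap: your sign-flip step fails.} On an even ring $S^x$ and $S^z$ commute, so $|\Omega_\omega\rangle$ is a joint eigenstate. Every $X_k$ is $S^z$-odd, so a nonzero $\langle O_1O_3\prod_{i_1<k\le i_3}X_k\rangle$ forces $p^z(O_1)p^z(O_3)=(-1)^{i_3-i_1}$, where $p^z$ is the $S^z$-parity. Inserting this into your (correct) commutation sign gives
\[
(-1)^{i_3-i_2}\,p^x(O_2)\,p^z(O_3)=(-1)^{i_2-i_1}\,p^z(O_1)\,p^x(O_2).
\]
Consequences:
\begin{itemize}
\item Moving $i_3$ alone by one site cannot leave all endpoint parities unchanged, as you assert.
\item With the left end $O_1$ kept, the sign does not depend on $i_3$ at all. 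The natural way to move the right end, redressing $O_3$ by $X_{i_3+1}$, flips $p^z(O_3)$ and exactly cancels the change in $(-1)^{|(i_2,i_3]|}$.
\end{itemize}
The uniqueness ``as in the SPT corollary'' does not transfer. There $S^e$ and $S^o$ commute on every site. Here, redressing an endpoint by a single $X_k$ flips its $S^z$-parity, so the cross-parity of an endpoint depends on its position; it is not an invariant of the ensemble. This position dependence is the LSM mechanism itself.

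What does flip the sign is translating a \emph{whole} disorder parameter by $\hat T$. Average translation symmetry keeps $\mathbb{E}_\omega|\cdot|$ fixed. Since $[\hat T,S^x]=[\hat T,S^z]=0$, both endpoint parities are preserved. The overlap length changes by one, so the sign flips. The paper does the equivalent for the $Z$-string and then redresses, $O_2'=\hat T O_2\hat T^\dagger Z_{i_2}$ and $O_4'=\hat T O_4\hat T^\dagger Z_{i_4}$. This gives $V'$ on the same string with opposite $S^x$-parity, so either $UV=-VU$ or $UV'=-V'U$.

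\textbf{Your primary route through the disorder average also fails, so the fallback is required.} The second-moment argument of Sec.~\ref{subsec:disorderparity} only forces two events of a single family to co-occur; it gives no lower bound on $\mathbb{P}(B_{13}\cap B_{24})$. More fundamentally, a single realization carrying both events is not contradictory, because an individual $H_\omega$ has no translation symmetry. For example, a product of Bell pairs $|00\rangle+|11\rangle$ on bonds $(2m,2m+1)$ is gapped and symmetric. It has both an $X$-string and a $Z$-string order, and the two strings commute when both cover whole dimers. A contradiction needs, in one state, two disorder parameters related by a unit translation plus one of the other type. Translation only relates different realizations. So, as in the paper:
\begin{itemize}
\item Take three realizations, for $B_{13}$, $B_{24}$ and $B_{24}'$. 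Each has probability at least $\delta/(2-\delta)$ minus $O(p^{L/12})$ for rare regions.
\item Use the same-phase assumption and the parity-preserving symmetric quasi-adiabatic transport of Appendix~\ref{app:quasi-adiabatic} to place all three disorder parameters in a single gapped state.
\item Apply the clean argument to that state.
\end{itemize}
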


To show this, we just need to show that $\{H_\omega\}$ cannot have an $O(1)$ disorder parameter for both $S^x$ and $S^z$. Assume the contrary, that $\{H_\omega\}$ has an $O(1)$ disorder parameter for both $S^x$ and $S^z$, we can then choose four sites $i_1<i_2<i_3<i_4$ such that $\dist(i_k,i_l)\gg \ell$ for all $k\neq l$, with the operator $O_k$ defined near site $i_k$, and
\begin{align}
&\mathbb{E}_\omega|\bigl\langle O_1 O_3 \prod_{i_1\leq k<i_3} X_{k} \bigr\rangle_{\Omega_\omega}| = \delta,\\
&\mathbb{E}_\omega|\bigl\langle O_2 O_4 \prod_{i_2\leq k<i_4} Z_{k} \bigr\rangle_{\Omega_\omega}| = \delta,
\end{align} 
where $\delta$ is some $O(1)$ constant, $O_{1,3}$ are even under $S^x$, and $O_{2,4}$ are even under $S^z$. Without loss of generality, we can also assume that $i_{1,2,3,4}$ are all even sites. By the average translational symmetry of the disorder ensemble, we then also have
\begin{equation}
\mathbb{E}_\omega|\bigl\langle  \hat{T} \left(O_2 O_4 \prod_{i_2\leq k<i_4} Z_{k}  \right)\hat{T}^\dagger \bigr\rangle_{\Omega_\omega}| = \delta,
\end{equation}
thus $O_2'=\hat{T}O_2\hat{T}^\dagger Z_{i_2}$ and $O_4'=\hat{T}O_4\hat{T}^\dagger Z_{i_4}$ also form a disorder parameter for $S^z$. Without loss of generality, we can assume that $O_{2,4}$ have a definite parity under $S^x$, and thus $O_2'$ and $O_4'$ have the opposite parity under $S^x$ because of the additionally dressed $Z$ operators. 

We can now apply a similar argument as in Sec.~\ref{subsec:aspt} to show that this is not possible for a sufficiently large system size. We start with the clean case. 

To this end, we apply Lemma.~\ref{lem:Levin_prop1} to operators $O_{1,2,3,4}$ and let $\ell_{\min}=\min_{k,l}[\dist(i_k,i_l)]-2\ell$, we can define
\begin{align}
     U&=P_1P_3 O_1O_3\prod_{i_1< k\leq i_3}X_{k}\;,\\
    V&=P_2P_4 O_2O_4\prod_{i_2< k\leq i_4}Z_{k}\;,\\
    V'&=P_2P_4 O_2'O_4'\prod_{i_2< k\leq i_4}Z_{k}\;,
\end{align}
where $P_k\coloneqq P_{\tilde{I}_k}(\ell_{\min})$. By Lemma.~\ref{lem:Levin_prop1}, we have
\begin{align}
    \opnorm{U|\Omega\rangle-\delta|\Omega\rangle}& \leq O(\sqrt{g(\ell_{\min})})\;,\\
    \opnorm{V|\Omega\rangle-\delta|\Omega\rangle}& \leq O(\sqrt{g(\ell_{\min})})\;,\\
    \opnorm{V'|\Omega\rangle-\delta|\Omega\rangle}& \leq O(\sqrt{g(\ell_{\min})})\;.
\end{align} 
However, since $O_2$ and $O_2'$ have opposite parity under $S^x$, we have either $UV=-VU$ or $UV'=-V'U$, which contradicts the above inequalities for large enough $\ell_{\min}$. We thus complete the proof for the clean case.

We now move on to the dirty case. We can apply the same argument as in Sec.~\ref{subsec:exclusion}. Suppose the contrary, that there exist operators $O_{1,2,3,4}$ as described above, we can consider the events $B_{13}$ that
\begin{equation}
    |\bigl\langle O_1 O_3 \prod_{i_1\leq k<i_3} X_{k} \bigr\rangle_{\Omega_\omega}| \ge \delta/2,
\end{equation}
and $B_{24}$ that
\begin{equation}
    |\bigl\langle O_2 O_4 \prod_{i_2\leq k<i_4} Z_{k} \bigr\rangle_{\Omega_\omega}| \ge \delta/2.
\end{equation}
The average translational symmetry of the disorder ensemble also implies that there is an event $B_{24}'$ that
\begin{equation}
    |\bigl\langle O_2' O_4' \prod_{i_2\leq k<i_4} Z_{k} \bigr\rangle_{\Omega_\omega}| \ge \delta/2,
\end{equation}
Similar to the analysis in Sec.~\ref{subsec:exclusion}, since the expectation values of the disorder parameters cannot be larger than 1, we can show that the probability
\begin{equation}
    \mathbb{P}(B_{13}), \mathbb{P}(B_{24}), \mathbb{P}(B_{24}')\ge \delta/(2-\delta).
\end{equation}
{We can also take the rare region effects into account by the same argument as in Sec.~\ref{subsec: exclusion-rare}. We can take $i_{1,2,3,4}$ evenly spaced in the whole geometry, and let $R_i$ denote the event that there exists a rare region of size at least $L/12$ that covers site $i$, and let $R=\bigcup_i R_i$. Following the arguments in Sec.~\ref{subsec: exclusion-rare}, we have
\begin{align}
    \mathbb{P}(B_{13}&\cap\overline{R})=O(1)\;,\\
    \mathbb{P}(B_{24}&\cap\overline{R})=O(1)\;,\\
    \mathbb{P}(B_{24}'&\cap\overline{R})=O(1)\;,
\end{align}
for large enough system size $L$. Thus there exists three gapped states with, respectively, a $(\delta/2,L/12)$ disorder parameter on sites $i_{1,3}$ with the same parity as $O_{1,3}$, a $(\delta/2,L/12)$ disorder parameter on sites $i_{2,4}$ with the same parity as $O_{2,4}$, a $(\delta/2,L/12)$ disorder parameter on sites $i_{2,4}$ with the same parity as $O'_{2,4}$.

These three gapped states can be connected by a quasi-adiabatic evolution without changing the endpoint parity (see Appendix.~\ref{app:quasi-adiabatic}). By the same argument as in the clean case, this is not possible for sufficiently large system size, a contradiction.

Therefore, $\{H_\omega\}$ cannot have an $O(1)$ disorder parameter for both $S^x$ and $S^z$. By our main theorem, if $\{H_\omega\}$ is nearly gapped, it must have an $O(1)$ order parameter for either $S^x$ or $S^z$, which completes the proof of this corollary.

}

\subsection{Fermion chain and intrinsic average SPT}
\label{subsec:JordanWigner}

Our results on the Ising spin chain can be readily translated to fermion chains through the Jordan-Wigner transformation. {In this section, we demonstrate that the trade-off between order and disorder parameters in the spin chain translates to a trade-off between disorder parameters with bosonic or fermionic endpoint operators in fermionic systems. We then apply this result to understand a class of ``intrinsically disordered'' topological phase proposed in Ref.~\cite{Ma_2025}.}

In the following, we denote the $\mathbb{Z}_2$ symmetry as generated by
\begin{equation}
    S=\prod_iX_i.
\end{equation}
The Jordan-Wigner transform maps the Ising spin chain to a fermion chain, with two Majorana fermions $\gamma_{2i-1},\gamma_{2i}$ for every unit cell $i$:
\begin{eqnarray}
    X_i&=&i\gamma_{2i-1}\gamma_{2i}=F_i, \nonumber \\
    Z_i&=&(\prod_{j<i}F_j)\gamma_{2i-1},
\end{eqnarray}
where $F_i=\pm1$ is the fermion parity on site $i$. The $\gamma$ fermions couple to a dynamical $\mathbb{Z}_2$ gauge field -- in $1d$ the only gauge-invariant degree of freedom of a $\mathbb{Z}_2$ gauge field is the gauge flux (holonomy) along the entire closed $1d$ ring, which sets the spatial boundary condition for the fermions. In the Jordan-Wigner transform, the $\mathbb{Z}_2$ gauge flux is given by the total Ising charge $S$. The significance of having exact Ising symmetry is that the $\mathbb{Z}_2$ gauge field is effectively non-dynamical once we fix the total $S$ charge, and it suffices to consider a genuine fermion chain. This allows us to translate our results on Ising chain to the fermion chain.

First, we consider a disorder operator in the Ising chain. The undressed disorder operator maps to
\begin{equation}
    S_{12}:=\prod_{i_1\leq k\leq i_2}X_k = \prod_{i_1\leq k\leq i_2}F_i:=F_{12},
\end{equation}
which is a disorder operator in fermion parity. For a dressed disorder operator $O_{1}O_2S_{12}$, if $O_{1,2}$ are $\mathbb{Z}_2$-even, we have
\begin{equation}
\label{eq:bosondisorder}
    O_{1}O_2S_{12}=\tilde{O}_1\tilde{O}_2F_{12},
\end{equation}
where $\tilde{O}_{1,2}$ are some bosonic local operators in the fermion chain. If $O_{1,2}$ are $\mathbb{Z}_2$-odd, then in the fermion picture an additional flux-changing operator has to be inserted, leading to
\begin{equation}
    O_1O_2S_{12}=f_1f_2,
\end{equation}
where $f$ are fermion operators. For the simplest choice of $O_i=Z_i$, the $f_i$ operators are nothing but the $\gamma$ fermions. More generally they can take forms like $\gamma_a\gamma_b\gamma_c$. Combining the two results, the order parameters $O_1O_2$ ($O_{1,2}$ being $\mathbb{Z}_2$ odd) in the Ising chain maps to
\begin{equation}
\label{eq:fermiondisorder}
    O_1O_2=f_1f_2 F_{12},
\end{equation}
where again $f_{1,2}$ are fermion operators. In a fermion chain, the above disorder operator dressed with fermion operators is used to detect nontrivial topological superconductivity in the Kitaev phase. 


For a nearly gapped ensemble in a fermion chain, we immediately conclude from Theorem~\ref{thm:tradeoff}:
\begin{enumerate}
    \item There always exists some disorder parameter
    \begin{equation}
        \lim_{|i_1-i_2|\to\infty}\mathbb{E}_{\omega}|\langle O_1 O_2\prod_{i_1\leq k\leq i_2}F_k\rangle_{\Omega_\omega}|= O(1).
    \end{equation}
    The dressing local operators $O_{1,2}$ are either always bosonic or always fermionic. The former corresponds to a trivial phase while the latter corresponds to a Kitaev chain.
    \item The fermion two-point correlator must decay:
    \begin{equation}
        \lim_{|i-j|\to\infty}\mathbb{E}_{\omega}\langle f_i f_j\rangle_{\Omega_\omega}= 0.
    \end{equation}
\end{enumerate}

In Ref.~\cite{Ma_2025} a simple example of \textit{intrinsically disordered} topological phase was proposed in a fermion chain. The proposed phase is essentially a disordered Kitaev chain, but with an additional average $\mathbb{Z}_4$ symmetry $c_i\to ic_i$ (where $c_i=\gamma_{2i-1}+i\gamma_{2i}$ is the complex fermion on each site). In fact the average symmetry can even be enlarged to a full $U(1)$: $c_i\to e^{i\alpha}c_i$. In the Ising chain picture this $U(1)$ is generated by charge $Q=\sum_i(1-X_i)/2$. Having this average symmetry unbroken requires the vanishing of the first-moment (not Edwards-Anderson) order parameter for any local operator $O_i$ that is $\mathbb{Z}_2$-odd (hence also charged under $U(1)$):
\begin{equation}
    \lim_{|i-j|\to\infty}\mathbb{E}_\omega\langle O_{i} O^{\dagger}_j\rangle_{\Omega_{\omega}}=0.
\end{equation}
Translating this to the fermion picture means that the first-moment fermion-dressed disorder parameter must vanish:
\begin{equation}
    \lim_{|i-j|\to\infty}\mathbb{E}_\omega\langle f_{i} f^{\dagger}_j\prod_{i\leq k\leq j}F_k\rangle_{\Omega_{\omega}}=0.
\end{equation}
This is why in a clean system with a $U(1)$ symmetry, the Kitaev phase cannot be realized as a gapped state. But in a dirty system, our results indicates that we should instead examine the Edwards-Anderson disorder parameter. In this light, a Kitaev phase with
\begin{equation}
    \lim_{|i-j|\to\infty}\mathbb{E}_\omega|\langle f_{i} f^{\dagger}_j\prod_{i\leq k\leq j}F_k\rangle_{\Omega_{\omega}}|= O(1),
\end{equation}
is perfectly compatible with the average (but not exact) $U(1)$ symmetry. This is why the phase is considered ``intrinsically disordered''. 

\section{Summary and Outlook}

In this work, we defined notions of order and disorder parameters for quenched disordered spin chains and established a fundamental trade-off between them. Specifically, we showed that: (i) any gapped ensemble must exhibit either an order parameter or a disorder parameter; (ii) the two cannot coexist; and (iii) any disorder parameter must be even under the symmetry. These results extend naturally to nearly gapped ensembles in the presence of rare-region effects.

Order and disorder parameters play a central role in the understanding of phases of matter without disorder averaging. Our results provide a firm foundation for their role in quenched disordered systems as well. Rare-region effects have long been difficult to treat in the study of disordered systems. Based on the LPPL principle, our work gives a systematic understanding of how such effects influence order and disorder parameters, which may offer broader insight into the role of rare regions in disordered phases.

Our results also have concrete applications to disordered systems. First, we prove the existence of well-defined string order parameters for SPT phases in dirty systems. Second, our theorem yields a Lieb-Schultz-Mattis-type constraint for disordered ensembles with anomalous symmetry. Third, our results can be extended to fermion systems and provide a more rigorous understanding of certain ``intrinsically disordered'' topological phases. 

More broadly, this work highlights the importance of order and disorder parameters in the study of quenched disordered systems. Promising directions for future work include generalizing these ideas beyond Ising symmetries and extending them to higher dimensions. It would also be interesting to formulate analogous notions of order and disorder parameters for mixed-state phases, such as decohered average SPT phases.

\textit{Note added}: for complementary perspective on order-disorder relation in the context of mixed quantum states, developed in parallel to ours by Divi, Lessa and Wang, see Ref.~\cite{DiviLessaWang2026}.

\begin{acknowledgments}
We thank Anton Burkov and Ruizhi Liu for valuable discussions. Research at Perimeter Institute is supported in part by the Government of Canada through the Department of Innovation, Science and Economic Development and by the Province of Ontario through the Ministry of Colleges and Universities. JY is also supported by the Natural Sciences and Engineering Research Council (NSERC) of Canada.
\end{acknowledgments}

\appendix
\section{$\varepsilon$-net for local operators}\label{app:epsilon_net}
Given a set $K$ equipped with a notion of distance, an \emph{$\varepsilon$-net} for $K$ is a finite subset $\mathcal{N}\subseteq K$ such that every element of $K$ is within distance $\varepsilon$ of some element of $\mathcal{N}$. The minimal size of such a net is the \emph{covering number} $N(K,\varepsilon)$. A standard result~\cite{papaspiliopoulos2020high} states that for a Euclidean ball $B^n(r)$ of radius $r$ in $\mathbb{R}^n$,
\begin{equation}
    N\!\left(B^n(r),\varepsilon\right) \;\le\; \Bigl(1+\frac{2r}{\varepsilon}\Bigr)^{n}\,.
\end{equation}

In Sec.~\ref{subsec: events} and Sec.~\ref{sec:Rareregion}, we need to take a union bound over all possible choices of the order and disorder parameter operators. For this to be valid with a finite collection, we need a finite representative for the order and disorder parameters, which requires the notion of a $\varepsilon$-net for local operators. We therefore introduce the following definition.
\begin{definition}[$\varepsilon$-net for $\ell$-local operators]\label{def:epsilon_net}
Let $\mathcal{O}_\ell$ denote the set of $\ell$-local operators on the qubit chain with $\opnorm{O}\le 1$, with the distance metric given by the operator norm. A finite subset $\mathcal{N}_\ell(\varepsilon)\subseteq\mathcal{O}_\ell$ is an \emph{$\varepsilon$-net} for $\mathcal{O}_\ell$ if for every $O\in\mathcal{O}_\ell$ there exists $O'\in\mathcal{N}_\ell(\varepsilon)$ such that $\opnorm{O-O'}\le\varepsilon$.
\end{definition}

We now estimate the covering number $|\mathcal{N}_\ell(\varepsilon)|$. Let the local Hilbert space on an interval of length $\ell$ have dimension $d = 2^{\ell}$. Every operator $O\in\mathcal{O}_\ell$ can be expanded in the orthonormal Pauli basis $\{P_k\}$ as $O = \sum_k c_k P_k$, with Frobenius norm
\begin{equation}
    \opnorm{O}_{\mathrm{F}} \coloneqq \sqrt{\tr(O^\dagger O)} = \sqrt{\sum_k |c_k|^2}\,.
\end{equation}
Collecting the real and imaginary parts of the coefficients identifies every operator $O\in\mathcal{O}_\ell$ with a point of $\mathbb{R}^{2d^2}$. The Frobenius and operator norms satisfy
\begin{equation}
    \opnorm{O} \le \opnorm{O}_{\mathrm{F}} \le \sqrt{d}\,\opnorm{O}\,,
\end{equation}
so every $O\in\mathcal{O}_\ell$ corresponds to a point in the Euclidean ball of radius $\sqrt{d}$ in $\mathbb{R}^{2d^2}$, and $\opnorm{O-O'}_{\mathrm{F}}\le\varepsilon$ implies $\opnorm{O-O'}\le\varepsilon$. Hence an $\varepsilon$-net for the Euclidean ball induces an $\varepsilon$-net for $\mathcal{O}_\ell$, and applying the covering-number estimate above with $r=\sqrt{d}$ and $n=2d^2$ gives an estimation for the cover number $N_\ell(\varepsilon)\coloneqq|\mathcal{N}_\ell(\varepsilon)|$.
\begin{equation}
    N_\ell(\varepsilon)\le \Bigl(1+\frac{2\sqrt{d}}{\varepsilon}\Bigr)^{2d^2}
    =\Bigl(1+\frac{2^{\ell/2+1}}{\varepsilon}\Bigr)^{2^{2\ell+1}}\,.
\end{equation}
Although this grows rapidly with $\ell$, it does not depend on the system size, and since $\ell$ is an $O(1)$ constant independent of system size, the net size remains bounded by a system-size-independent constant.
\\
\section{Quasi-adiabatic evolution for gapped symmetry sector}\label{app:quasi-adiabatic}
In this appendix we review the quasi-adiabatic evolution~\cite{Hastings_2005,Bravyi_2010,Bachmann_2011} and its implication in this work.
\subsection{Construction}
For two local Hamiltonians $H_0$ and $H_1$ smoothly connected by a path $H_s$ along which the gap remains open, the ground-state subspaces of the two systems can be connected by a quasi-adiabatic evolution defined as follows:
\begin{equation}\label{eq:adiabatic2}
    U \coloneqq \mathcal{S} \exp \left\{i \int_0^1 \mathrm{~d} s\mathcal{D}_{s}\right\}
\end{equation}
where $\mathcal{S}$ denotes an $s$-ordered exponential, and the quasi-adiabatic continuation operator $\mathcal{D}_s$ is given by
\begin{equation}\label{eq:adiabatic1}
    \mathcal{D}_s \coloneqq i \int \mathrm{~d} t F(t) \exp \left(i H_s t\right)\left( \partial_sH_s\right) \exp \left(-i H_s t\right)
\end{equation}
with $F(t)$ being a super-polynomially decaying function.

This quasi-adiabatic construction extends to the case where the spectral gap is required only within the parity-even symmetry sector, rather than across the full Hilbert space. Indeed, since $\partial_s H_s$ is parity-even, it cannot connect states of opposite parity, so the sector-restricted gap suffices to control the evolution.

Although the quasi-adiabatic evolution is generated by an almost-local Hamiltonian, it can be well approximated by a quantum circuit of $O(\log L)$ depth~\cite{Haah_2021,yi2025universaldecayconditionalmutual}. For the purpose of estimating how local operators transform under the evolution, it was also shown~\cite{yi2025universaldecayconditionalmutual} that the evolution can be approximated by a finite-depth quantum circuit, with errors controlled by the locality of the gates. In particular, if $H_s$ is a sum of local terms each commuting with $S$, the approximating circuit can be chosen so that every gate commutes with $S$; symmetry is thus preserved gate-by-gate, not merely in for the whole circuit. In what follows we exploit this and simply treat $U$ as a finite-depth, $S$-symmetric quantum circuit.

\subsection{Quasi-adiabatic evolution for disorder parameters}

We now show that the parity of the endpoint operators of a disorder parameter is invariant under quasi-adiabatic evolution. This is used in Sec.~\ref{subsec:disorderparity} to establish that any disorder parameter must be parity-even, and underlies our applications to SPT phases and the disorder-averaged LSM theorem in Sec.~\ref{sec: Application}.

Recall that a disorder parameter involves a symmetry string $S_\Gamma\coloneqq\prod_{j\in\Gamma}S_j$ over a contiguous interval $\Gamma$, together with endpoint operators $O_L$ and $O_R$ localized near the left and right endpoints of $\Gamma$, respectively. As argued above, the quasi-adiabatic evolution $U$ can be approximated by a finite-depth quantum circuit whose gates all commute with $S$. We factor this circuit as $U = U_R U_0 U_L$, where $U_L$ (resp.\ $U_R$) consists of all gates in the lightcone of the left (resp.\ right) endpoint of $\Gamma$, and $U_0$ contains the remaining bulk gates. Since the bulk gates commute with $S_\Gamma$, they pass through the string unchanged, and the dressed disorder operator decomposes as
\begin{equation}
    U^\dagger\!\left(O_L S_\Gamma O_R\right)U = \widetilde{O}_L\, S_\Gamma\, \widetilde{O}_R,
\end{equation}
where the dressed endpoint operators are
\begin{align}
    \widetilde{O}_L &= U_L^\dagger O_L U_L \cdot U_L^\dagger S_\Gamma U_L S_\Gamma^{-1}, \\
    \widetilde{O}_R &= S_\Gamma^{-1} U_R^\dagger S_\Gamma U_R \cdot U_R^\dagger O_R U_R.
\end{align}
Here the extra factors $U_L^\dagger S_\Gamma U_L S_\Gamma^{-1}$ and $S_\Gamma^{-1}U_R^\dagger S_\Gamma U_R$ results from the action of $U_L$ and $U_R$ in the undressed disorder operator.

It remains to check that the parity is preserved. Suppose $S O_L S^{-1} = p_L O_L$. Since both $U_L$ and $S_\Gamma$ commute with $S$, we have
\begin{equation}
    \begin{split}
            S\widetilde{O}_LS^{-1}
    &= S(U_L^\dagger O_L U_L)S^{-1} \cdot S(U_L^\dagger S_\Gamma U_L S_\Gamma^{-1})S^{-1}\\
    &= p_L\,\widetilde{O}_L,
    \end{split}
\end{equation}
so $\widetilde{O}_L$ carries the same parity as $O_L$. An identical argument applies to $\widetilde{O}_R$. The same reasoning extends to systems with multiple $\mathbb{Z}_2$ symmetries, as long as the bare disorder operator and all gates in the quasi-adiabatic evolution commute with all symmetries. This is indeed the case for the applications in Sec.~\ref{sec: Application}.

\bibliography{lib}
\end{document}